\documentclass[submission,copyright,creativecommons]{eptcs}
\usepackage{breakurl}             
\usepackage{underscore}           

\usepackage{mathtools}
\usepackage{stmaryrd} 
\usepackage{enumitem} 
\usepackage{manfnt} 
\usepackage{marginnote} 

\usepackage{amsthm}
\usepackage{amssymb}

\newtheorem{theorem}{Theorem}[section]
\newtheorem{lemma}[theorem]{Lemma}
\newtheorem{proposition}[theorem]{Proposition}
\newtheorem{corollary}[theorem]{Corollary}

\theoremstyle{definition}
\newtheorem{definition}[theorem]{Definition}
\newtheorem{example}[theorem]{Example}

\newtheorem{warning}[theorem]{Warning}

\DeclarePairedDelimiter{\pa}{(}{)}
\DeclarePairedDelimiter{\set}{\{}{\}}

\newcommand{\Nat}{\mathbb N}
\newcommand{\Rat}{\mathbb Q}
\newcommand{\Rea}{\mathbb R}
\newcommand{\Two}{\mathbf{2}}

\newcommand{\Sierp}{\mathbb S}

\newcommand{\Seqspace}{\mathbf A}
\newcommand{\Seqdom}{\mathcal A}
\newcommand{\Finseq}{A^\ast}

\newcommand{\Realdom}{\mathcal R}
\newcommand{\LowerReal}{\mathcal L}

\newcommand{\below}{\mathrel{\sqsubseteq}}
\newcommand{\sbelow}{\mathrel{\sqsubset}}
\newcommand{\aboveorder}{\mathrel{\sqsupseteq}}
\newcommand{\dirsup}{\bigsqcup}
\newcommand{\glb}{\mathrel{\sqcap}}

\DeclareMathOperator{\dset}{\downarrow}
\DeclareMathOperator{\upset}{\uparrow}
\newcommand\twoheaddownarrow{\rotatebox[origin=c]{270}{\(\twoheadrightarrow\)}}
\DeclareMathOperator{\ddset}{\twoheaddownarrow}
\newcommand\twoheaduparrow{\rotatebox[origin=c]{90}{\(\twoheadrightarrow\)}}
\DeclareMathOperator{\upupset}{\twoheaduparrow}

\DeclarePairedDelimiter{\steppa}{\llparenthesis}{\rrparenthesis}

\newcommand{\apart}{\mathrel{\#}}
\newcommand{\posnotbelow}{\mathrel{\not\!\not{\!\sqsubseteq}}}
\newcommand{\apartvar}{\mathrel{\sharp}}

\newcommand{\lcompl}{{\lnot}\hspace{0.12em}}
\newcommand{\compl}{{\sim}}
\newcommand{\acompl}{-}
\newcommand{\setapart}{\mathrel{\bowtie}}


\newcommand{\initseg}[2]{\bar{#1}_{#2}}

\DeclareMathOperator{\Idl}{Idl}
\newcommand{\powerset}[1]{\mathcal P\pa*{#1}}
\newcommand{\To}{\Rightarrow}
\newcommand{\interior}[1]{\pa*{#1}^\circ}
\newcommand{\refined}{\mathrel{\upupset}}


\title{Sharp Elements and Apartness in Domains}
\author{Tom de Jong\thanks{A version of this paper containing full proofs can be found here:
    \href{https://arxiv.org/abs/2106.05064}{\texttt{arXiv:2106.05064}}.}
  \institute{School of Computer Science\\
    University of Birmingham\\
    Birmingham, United Kingdom}
  \email{t.dejong@pgr.bham.ac.uk}
}

\begin{document}
\maketitle

\begin{abstract}
  Working constructively, we study continuous directed complete posets (dcpos)
  and the Scott topology.
  Our two primary novelties are a notion of intrinsic apartness and a notion of
  sharp elements.
  Being apart is a positive formulation of being unequal, similar to how
  inhabitedness is a positive formulation of nonemptiness.
  To exemplify sharpness, we note that a lower real is sharp if and only if it
  is located.
  Our first main result is that for a large class of continuous dcpos, the
  Bridges--{V\^i\c{t}\v{a}} apartness topology and the Scott topology coincide.
  Although we cannot expect a tight or cotransitive apartness on nontrivial
  dcpos, we prove that the intrinsic apartness is both tight and cotransitive
  when restricted to the sharp elements of a continuous dcpo.
  These include the strongly maximal elements, as studied by Smyth and
  Heckmann. We develop the theory of strongly maximal elements highlighting its
  connection to sharpness and the Lawson topology.
  Finally, we illustrate the intrinsic apartness, sharpness and strong
  maximality by considering several natural examples of continuous dcpos:
  the Cantor and Baire domains, the partial Dedekind reals and the lower reals.
\end{abstract}

\section{Introduction}\label{sec:introduction}
Domain theory~\cite{AbramskyJung1995} is rich with applications in semantics of
programming languages~\cite{Scott1993,Scott1982,Plotkin1977}, topology and
algebra~\cite{GierzEtAl2003}, and higher-type
computability~\cite{LongleyNormann2015}. The basic objects of domain theory are
directed complete posets (dcpos), although we often restrict our attention to
algebraic or continuous dcpos which are generated by so-called compact elements
or, more generally, by the so-called way-below relation
(Section~\ref{sec:preliminaries}).
We examine the Scott topology on dcpos using an apartness relation and a notion
of sharp elements. Our work is constructive in the sense that we do not assume
the principle of excluded middle or choice axioms, so our results are valid in
any elementary topos.

Classically, i.e.\ when assuming excluded middle, a dcpo with the Scott topology
satisfies \(T_0\)-separation: if two points have the same Scott open
neighbourhoods, then they are equal.
This holds constructively if we restrict to continuous dcpos.
A~classically equivalent formulation of \(T_0\)-separation is: if \(x \neq y\),
then there is a Scott open separating \(x\) and \(y\), i.e.\ containing \(x\)
but not \(y\) or vice versa.
This second formulation is equivalent to excluded middle.
This brings us to the first main notion of this paper
(Section~\ref{sec:intrinsic-apartness}). We say that \(x\)~and~\(y\) are
intrinsically apart, written \(x \apart y\), if there is a Scott open containing
\(x\) but not \(y\) or vice versa. Then \(x \apart y\) is a positive formulation
of \(x \neq y\), similar to how inhabitedness (i.e.\ \(\exists\,{x \in X}\)) is
a positive formulation of nonemptiness (i.e.\ \(X \neq \emptyset\)).

This definition works for any dcpo, but the intrinsic apartness is mostly of
interest to us for continuous dcpos. In fact, the apartness really starts to
gain traction for continuous dcpos that have a basis satisfying certain
decidability conditions.
For example, we prove that for such continuous dcpos, the
Bridges--{V\^i\c{t}\v{a}} apartness topology~\cite{BridgesVita2011} and the
Scott topology coincide (Section~\ref{sec:apartness-topology}).
Thus our work may be regarded as showing that the constructive framework by
Bridges and V\^i\c{t}\v{a} is applicable to domain theory.
It should be noted that these decidability conditions are satisfied by the major
examples in applications of domain theory to topology and computation. Moreover,
these conditions are stable under products of dcpos and, in the case of bounded
complete algebraic dcpos, under exponentials (Section~\ref{sec:preliminaries}).

In \cite[p.~7]{BridgesRichman1987}, \cite[p.~8]{MinesRichmanRuitenburg1988} and
\cite[p.~8]{BridgesVita2011}, an irreflexive and symmetric relation is called an
{inequality (relation)} and the symbol~\({\neq}\) is used to denote it. In
\cite[Definition~2.1]{BishopBridges1985}, an inequality is moreover required to
be cotransitive:
\begin{center}
  if \(x \neq y\), then \(x \neq z\) or \(y \neq z\) for any
    \(x\), \(y\) and \(z\).
\end{center}
The latter is called a {preapartness} in
\cite[Section~8.1.2]{TroelstraVanDalen1988} and the symbol~\({\apart}\) is used
to denote it, reserving \({\neq}\) for the logical negation of equality and the
word {apartness} for a relation that is also tight: if \(\lnot(x \apart y)\),
then \(x = y\).

\begin{warning}\label{warning}
  \marginnote{\dbend}
  We deviate from the above and use the word apartness and the symbol
  \({\apart}\) for an irreflexive and symmetric relation, so we do not require
  it to be cotransitive or tight.
\end{warning}
The reasons for our choice of terminology and notations are as follows: (i) we
wish to reserve \(\neq\) for the negation of equality as in
\cite[Section~8.1.2]{TroelstraVanDalen1988}; (ii) the word inequality is
confusingly also used in the context of posets to refer to the partial order;
and finally, (iii) the word inequality seems to suggest that the negation of the
inequality relation is an equivalence relation, but, in the absence of
cotransitivity, it need not be.

Actually, we prove that no apartness on a nontrivial dcpo can be
cotransitive or tight unless (weak) excluded middle holds.
However, there is a natural collection of elements for which the intrinsic
apartness is both tight and apartness: the sharp elements
(Section~\ref{sec:tightness-cotransitivity-sharpness}). Sharpness is slightly
involved in general, but it is easy to understand for algebraic dcpos: an
element \(x\) is sharp if and only if for every compact element it is decidable
whether it is below \(x\).
Moreover, the notion is quite natural in many examples. For instance, the sharp
elements of a powerset are exactly the decidable subsets and the sharp lower
reals are precisely the located ones.

An import class of sharp elements is given by the strongly maximal elements
(Section~\ref{sec:strongly-maximal-elements}). These were studied in a classical
context by Smyth~\cite{Smyth2006} and Heckmann~\cite{Heckmann1998}, because of
their desirable properties. For instance, while the subspace of maximal elements
may fail to be Hausdorff, the subspace of strongly maximal elements is both
Hausdorff (two distinct points can be separated by disjoint Scott opens) and
regular (every neighbourhood contains a Scott closed neighbourhood).
As shown in \cite{Smyth2006}, strong maximality is closely related to the Lawson
topology. Specifically, Smyth proved that a point \(x\) is strongly maximal
if and only if every Lawson neighbourhood of \(x\) contains a Scott
neighbourhood of \(x\).
Using sharpness, we offer a constructive proof of~this.

Finally, we illustrate (Section~\ref{sec:examples}) the above notions by
presenting examples of continuous dcpos that embed well-known spaces as
strongly maximal elements:
the Cantor and Baire domains, the partial Dedekind reals and the lower reals.

\paragraph{Related work}
There are numerous accounts of basic domain theory in several constructive
systems, such as
\cite{SambinValentiniVirgili1996,Negri1998,Negri2002,MaiettiValentini2004,Kawai2017,Kawai2021}
in the predicative setting of formal topology~\cite{Sambin1987,CoquandEtAl2003},
as well as works in various type theories: \cite{Hedberg1996} in (a version of)
Martin-L\"of Type Theory, \cite{Lidell2020} in Agda,
\cite{BentonKennedyVarming2009,Dockins2014} in Coq and our previous
work~\cite{deJongEscardo2021a} in univalent foundations. Besides that, the
papers~\cite{BauerKavkler2009,PattinsonMohammadian2021} are specifically aimed
at program extraction.

Our work is not situated in formal topology and we work informally in
(impredicative) set theory without using excluded middle or choice axioms. We also
consider completeness with respect to all directed subsets and not just
\(\omega\)-chains as is done
in~\cite{BauerKavkler2009,PattinsonMohammadian2021}. The principal contributions
of our work are the aforementioned notions of intrinsic apartness and sharp
elements, although the idea of sharpness also appears in formal topology: an
element of a continuous dcpo is sharp if and only if its filter of Scott open
neighbourhoods is located in the sense of Spitters~\cite{Spitters2010} and
Kawai~\cite{Kawai2017}.

If, as advocated in \cite{Abramsky1987,Vickers1989,Smyth1993}, we think of
(Scott) opens as observable properties, then this suggests that we label two
points as apart if we have made conflicting observations about them, i.e.\ if
there are disjoint opens separating the points. Indeed, (an equivalent
formulation of) this notion is used in Smyth's \cite[p.~362]{Smyth2006}.
While these notions are certainly useful, both in the presence and absence of
excluded middle, our apartness serves a different purpose: It is a positive
formulation of the negation of equality used when reasoning about the Scott
topology on a dcpo, which (classically) is only a \(T_0\)-space that isn't
Hausdorff in general. By contrast, an apartness based on disjoint opens would
supposedly perform a similar job for a Hausdorff space, such as a dcpo with the
Lawson topology.

Finally, von Plato~\cite{vonPlato2001} gives a constructive account of
so-called positive partial orders: sets with a binary relation \({\not\leq}\)
that is irreflexive and cotransitive (i.e.\ if \(x \not\leq y\), then
\(x \not\leq z\) or \(y \not\leq z\) for any elements \(x\), \(y\) and~\(z\)).
Our notion \({\posnotbelow}\) from Definition~\ref{def:specialization} bears
some similarity, but our work is fundamentally different for two
reasons. Firstly, \({\posnotbelow}\) is not cotransitive. Indeed, we cannot
expect such a cotransitive relation on nontrivial dcpos, cf.\
Theorem~\ref{tight-cotransitive-wem}. Secondly, in \cite{vonPlato2001} equality
is a derived notion from \({\not\leq}\), while equality is primitive for us.

\section{Preliminaries}\label{sec:preliminaries}
We give the basic definitions and results in the theory of (continuous)
dcpos. It is not adequate to simply refer the reader to classical texts on
domain theory~\cite{AbramskyJung1995,GierzEtAl2003}, because two classically
equivalent definitions need not be constructively equivalent, and hence we need
to make choices here. For example, while classically every Scott open subset is
the complement of a Scott closed subset, this does not hold constructively
(Lemma~\ref{complement-of-open-is-closed}).
The results presented here are all provable constructively. Constructive proofs
of standard results, such as Lemma~\ref{way-below-basics} and
Proposition~\ref{interpolation} (the interpolation property), can be found
in~\cite{deJongEscardo2021a}.
Finally, in Section~\ref{sec:decidability-conditions} we introduce and study
some decidability conditions on bases of dcpos that will make several
appearances throughout the paper. These decidability conditions always
hold if excluded middle is assumed.

\begin{definition}\hfill
  \begin{enumerate}
  \item A subset \(S\) of a poset \((X,\below)\) is \emph{directed} if it is
    \emph{inhabited}, meaning there exists \(s \in S\), and \emph{semidirected}:
    for every two points \(x,y \in S\) there exists \(z \in S\) with
    \(x \below z\) and \(y \below z\).
  \item A \emph{directed complete poset (dcpo)} is a poset where every
    directed subset \(S\) has a supremum, denoted by~\(\dirsup S\).
  \item A dcpo is \emph{pointed} if it has a least element, typically denoted by
    \(\bot\).
  \end{enumerate}
\end{definition}

Notice that a poset is a pointed dcpo if and only if it has suprema for all
semidirected subsets.
In fact, given a pointed dcpo \(D\) and a semidirected subset \(S \subseteq D\),
we can consider the directed subset \(S \cup \set{\bot}\) of \(D\) whose
supremum is also the supremum of \(S\).

\begin{definition}
  An element \(x\) of a dcpo \(D\) is \emph{way below} an element
  \(y \in D\) if for every directed subset \(S\) with \(y \below \dirsup S\)
  there exists \(s \in S\) such that \(x \below s\) already. We denote this by
  \(x \ll y\) and say \(x\) is \emph{way below} \(y\).
\end{definition}
The following is easily verified.
\begin{lemma}\label{way-below-basics}
  The way-below relation enjoys the following properties:
  \begin{enumerate}
  \item it is transitive;
  \item if \(x \below y \ll z\), then \(x \ll z\) for every \(x\), \(y\) and
    \(z\);
  \item if \(x \ll y \below z\), then \(x \ll z\) for every \(x\), \(y\) and
    \(z\).
  \end{enumerate}
\end{lemma}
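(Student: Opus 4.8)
The plan is to verify each of the three properties directly from the definition of the way-below relation, using only the behaviour of directed suprema. Throughout, fix a dcpo $D$.

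For transitivity, suppose $x \ll y$ and $y \ll z$, and let $S$ be a directed subset of $D$ with $z \below \dirsup S$. Since $y \ll z$, there is some $s \in S$ with $y \below s$. Now $\set{s}$ is not obviously what I want to feed into $x \ll y$; instead I note that $S$ itself is a directed set with $y \below s \below \dirsup S$, so applying $x \ll y$ to $S$ gives some $s' \in S$ with $x \below s'$, as required. Hence $x \ll z$.

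For the second property, suppose $x \below y$ and $y \ll z$, and let $S$ be directed with $z \below \dirsup S$. Applying $y \ll z$ yields $s \in S$ with $y \below s$, and then transitivity of $\below$ gives $x \below s$; hence $x \ll z$. For the third property, suppose $x \ll y$ and $y \below z$, and let $S$ be directed with $z \below \dirsup S$. Then $y \below z \below \dirsup S$, so applying $x \ll y$ to $S$ produces $s \in S$ with $x \below s$; hence $x \ll z$.

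None of these steps presents a real obstacle: each is a short unfolding of the definitions, and the only subtlety — which is why the proof is "easily verified" rather than trivial — is remembering to feed the \emph{same} directed set $S$ into the way-below hypothesis rather than trying to work with a singleton like $\set{s}$, whose supremum one has no control over constructively. All three arguments use only transitivity of $\below$ and the definition of $\ll$, with no appeal to excluded middle or choice, so they are valid in the intended constructive setting.
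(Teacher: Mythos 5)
Your proof is correct and is exactly the direct unfolding of the definition that the paper has in mind when it says the lemma is ``easily verified'' (the paper defers the written-out argument to its reference). One small quibble: your aside that the singleton \(\set{s}\) is constructively problematic is not quite right---\(\dirsup\set{s} = s\) is perfectly well controlled, and applying \(x \ll y\) to \(\set{s}\) would also yield \(x \below s\) with \(s \in S\)---but this does not affect the validity of the argument you actually give.
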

\begin{definition}
  A dcpo \(D\) is \emph{continuous} if for every element \(x \in D\),
  the subset \(\ddset x \coloneqq \set*{y \in D \mid y \ll x}\) is directed and
  its supremum is \(x\).
\end{definition}
\begin{definition}
  An element of a dcpo is \emph{compact} if it is way below itself.  A dcpo
  \(D\) is \emph{algebraic} if for every element \(x \in D\), the
  subset \(\set{c \in D \mid c \below x \text{ and \(c\) is
    compact}}\) is directed with supremum \(x\).
\end{definition}
\begin{proposition}
  Every algebraic dcpo is continuous.
\end{proposition}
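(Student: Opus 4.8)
The plan is to fix an algebraic dcpo $D$ and an arbitrary element $x \in D$, and to show that $\ddset x \coloneqq \set{y \in D \mid y \ll x}$ is directed with supremum $x$ by comparing it with the set $K_x \coloneqq \set{c \in D \mid c \below x \text{ and } c \text{ compact}}$, which by the assumption of algebraicity is directed with supremum $x$.

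First I would record two elementary facts about the way-below relation that hold in any dcpo. (i) $K_x \subseteq \ddset x$: if $c$ is compact then $c \ll c$, and from $c \ll c \below x$ we get $c \ll x$ by Lemma~\ref{way-below-basics}(3). (ii) $\ddset x \subseteq \dset x$, i.e.\ $y \ll x$ implies $y \below x$: apply the defining condition of $y \ll x$ to the (trivially directed) singleton $\set{x}$, whose supremum is $x$.

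Next I would verify that $\ddset x$ is directed. It is inhabited because $K_x$ is inhabited (being directed) and $K_x \subseteq \ddset x$ by (i). For semidirectedness, given $y_1, y_2 \in \ddset x$, note $y_i \ll x = \dirsup K_x$, so by the way-below condition there is $c_i \in K_x$ with $y_i \below c_i$; since $K_x$ is directed there is $c \in K_x$ with $c_1 \below c$ and $c_2 \below c$, and then $c \in \ddset x$ is an upper bound of $\set{y_1,y_2}$ lying in $\ddset x$.

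Finally I would check $\dirsup \ddset x = x$. By (ii), $x$ is an upper bound of $\ddset x$. If $u$ is any upper bound of $\ddset x$, then $u$ is also an upper bound of $K_x \subseteq \ddset x$, hence $x = \dirsup K_x \below u$; so $x$ is the least upper bound. I do not expect a genuine obstacle here: the only point to be careful about constructively is that \emph{directed} includes inhabitedness, so that algebraicity really does hand us an actual compact element below $x$ to interpolate through; no use of excluded middle, choice, or the interpolation property is needed.
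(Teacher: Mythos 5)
Your proof is correct and is the standard argument the paper relies on (the proof is omitted in the conference version but this is exactly the route: compare $\ddset x$ with the directed set of compact elements below $x$, using Lemma~\ref{way-below-basics} to get one inclusion and interpolating through a compact element for semidirectedness). You are also right to flag that inhabitedness of $\ddset x$ comes for free from the paper's convention that directed sets are inhabited, and that no excluded middle or interpolation property is needed.
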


\subsection{The Scott topology}
\begin{definition}\hfill
  \begin{enumerate}
  \item A subset \(C\) of a dcpo \(D\) is \emph{Scott closed} if it is
    closed under directed suprema and a lower set: if \(x \below y \in C\), then
    \(x \in C\) too.
  \item A subset \(U\) of a dcpo \(D\) is \emph{Scott open} if it is an
    upper set and for every directed subset \(S \subseteq D\) with
    \(\dirsup S \in U\), there exists \(s \in S\) such that \(s \in U\) already.
  \end{enumerate}
\end{definition}
\begin{example}\label{ex:Scott-opens}
  For any element \(x\) of a dcpo \(D\), the subset
  \(\dset x \coloneqq \set{y \in D \mid y \below x}\) is Scott closed.
  If \(D\) is continuous, then the subset
  \(\upupset x \coloneqq \set{y \in D \mid x \ll y}\) is Scott open.
  (One proves this using the interpolation property, which is
  Proposition~\ref{interpolation} below.)
  Moreover, if \(D\) is continuous, then the set
  \(\set{\upupset x \mid x \in X}\) is a basis for the Scott topology on \(D\).
\end{example}
\begin{lemma}\label{complement-of-open-is-closed}
  The complement of a Scott open subset is Scott closed.
  The converse holds if and only if excluded middle does, as we prove in
  Proposition~\ref{complement-of-closed-is-open-em}.
\end{lemma}

\begin{definition}\label{def:interior-closure}
  In a topological space \(X\), the \emph{interior} of a subset
  \(S \subseteq X\) is the largest open of \(X\) contained in \(S\).
  Dually, the \emph{closure} of a subset \(S \subseteq X\) is the smallest
  closed subset of \(X\) that contains \(S\).
\end{definition}

\subsection{(Abstract) bases}
\begin{definition}
  A \emph{basis} for a dcpo \(D\) is a subset \(B \subseteq D\) such that for
  every element \(x \in D\), the subset \({B \cap \ddset{x}}\) is directed with
  supremum \(x\).
\end{definition}

\begin{lemma}
  A dcpo is continuous if and only if it has a basis and a dcpo is algebraic if
  and only if it has a basis of compact elements.
  Moreover, if \(B\) is a basis for an algebraic dcpo \(D\), then \(B\) must
  contain every compact element of \(D\). Hence, an algebraic dcpo has a unique
  smallest basis consisting of compact elements.
\end{lemma}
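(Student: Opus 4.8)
The plan is to establish the four claims in sequence, in each case reducing matters to the definition of the way-below relation applied to a concrete directed set (sets of the form $B \cap \ddset x$, and singletons $\set*{x}$).

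\emph{Continuous iff a basis exists.} For the forward direction, $D$ is itself a basis, since by continuity $D \cap \ddset x = \ddset x$ is directed with supremum $x$. For the converse, let $B$ be a basis; I must show $\ddset x$ is directed with supremum $x$ for each $x \in D$. The crucial sub-observation is that, because $x = \dirsup\pa*{B \cap \ddset x}$ and this set is directed, \emph{every} $y \ll x$ satisfies $y \below b$ for some $b \in B \cap \ddset x$ — this is literally the definition of $y \ll x$ instantiated at the directed set $B \cap \ddset x$. Semidirectedness of $\ddset x$ then follows: given $y_1, y_2 \ll x$, pick $b_1, b_2 \in B \cap \ddset x$ above them, use directedness of $B \cap \ddset x$ to find $b \in B \cap \ddset x \subseteq \ddset x$ above both $b_1$ and $b_2$; inhabitedness of $\ddset x$ is inherited from $B \cap \ddset x$. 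Hence $\dirsup \ddset x$ exists, and it is caught between $\dirsup\pa*{B \cap \ddset x} = x$ and $x$ (every element of $\ddset x$ lies below $x$), so it equals $x$.

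\emph{Algebraic iff a basis of compact elements exists.} For the forward direction, I claim the set $K$ of compact elements of $D$ is a basis. The point is that for compact $c$ one has $c \below x \iff c \ll x$: the forward implication is Lemma~\ref{way-below-basics}(3) applied to $c \ll c \below x$, and the backward one comes from instantiating the definition of $c \ll x$ at the directed set $\set*{x}$. Thus $K \cap \ddset x$ coincides with $\set*{c \in D \mid c \below x \text{ and } c \text{ compact}}$, which by algebraicity is directed with supremum $x$. For the converse, a basis $B$ of compact elements makes $D$ continuous by the previous part; moreover $B \cap \ddset x \subseteq K \cap \ddset x$ because each $b \in B \cap \ddset x$ is compact and $b \ll x$, hence $b \below x$, so $\dirsup\pa*{K \cap \ddset x} = x$; directedness of $K \cap \ddset x$ is shown exactly as in the previous paragraph (given compact $c_1,c_2 \below x$, note $c_i \ll x$ by Lemma~\ref{way-below-basics}(3), push each $c_i$ below some element of the directed set $B \cap \ddset x$, and take an upper bound there, which is again compact and below $x$).

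\emph{Every basis of an algebraic dcpo contains the compact elements, hence minimality.} Let $B$ be a basis for an algebraic dcpo $D$ and let $c$ be compact. Then $c = \dirsup\pa*{B \cap \ddset c}$, so $c \ll c = \dirsup\pa*{B \cap \ddset c}$ yields some $b \in B \cap \ddset c$ with $c \below b$; but $b \ll c$ forces $b \below c$, so $b = c$ and hence $c \in B$. Therefore the set $K$ of compact elements, which is a basis of compact elements by the previous part, is contained in every basis of $D$; it is thus the least basis, and in particular the unique smallest basis consisting of compact elements. I expect no serious obstacle here: the only real work is the repeated ``push an element way below $x$ below a basis element way below $x$'' step and the equivalence $c \below x \iff c \ll x$ for compact $c$, and the delicate point is merely to check that these go through without excluded middle or choice — which they do, since each is just an unfolding of the definition of $\ll$ at a concrete directed set. (One could alternatively invoke the interpolation property, Proposition~\ref{interpolation}, but it is not needed for this lemma.)
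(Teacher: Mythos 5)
Your proof is correct and follows the standard argument one would expect here: reducing everything to instantiating the definition of \(\ll\) at the directed sets \(B \cap \ddset x\) and \(\set{x}\), using \(c \below x \iff c \ll x\) for compact \(c\), and the "push below a basis element" step for directedness; all of it is constructively unproblematic. No gaps.
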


\begin{example}\label{ex:powerset}
  The powerset \(\powerset{X}\) of any set \(X\) ordered by inclusion and with
  suprema given by unions is a pointed algebraic dcpo. Its compact elements are
  the Kuratowski finite subsets of \(X\).
  A set \(X\) is \emph{Kuratowski finite} if it is finitely enumerable, i.e.\
  there exists a surjection \(e \colon \set{0,\dots,n-1} \twoheadrightarrow X\)
  for some number \(n \in \Nat\).
\end{example}

\begin{definition}\label{def:Sierpinski-domain}
  The \emph{Sierpi\'nski domain} \(\Sierp\) is the free pointed dcpo on a single
  generator.
  We can realize \(\Sierp\) as the set of truth values, i.e.\ as the powerset
  \(\powerset{\set{*}}\) of a singleton. The compact elements of \(\Sierp\) are
  exactly the elements \(\bot \coloneqq \emptyset\) and
  \(\top \coloneqq \set{*}\).
\end{definition}

\begin{lemma}\label{basis-basis}
  For every dcpo \(D\), a subset \(B \subseteq D\) is a basis for the dcpo \(D\)
  if and only if \(\set{\upupset b \mid b \in B}\) is a basis for the Scott
  topology on \(D\).
\end{lemma}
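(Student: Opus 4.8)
The plan is to prove the two implications separately, using throughout the standard reformulation that $\set{\upupset b \mid b \in B}$ is a basis for the Scott topology exactly when each $\upupset b$ is Scott open and, for every Scott open $U$ and every $x \in U$, there is some $b \in B$ with $x \in \upupset b \subseteq U$. I will also use that $\upupset b \subseteq \upset b$ (from $b \ll z \To b \below z$) and that a finite intersection of Scott opens is Scott open.

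For the forward direction, assume $B$ is a basis for the dcpo $D$; then $D$ is continuous, so by Example~\ref{ex:Scott-opens} each $\upupset b$ is Scott open and $\set{\upupset a \mid a \in D}$ is a basis for the Scott topology, and it suffices to replace an arbitrary witness $\upupset a$ by one indexed by $B$. Given a Scott open $U$ and $x \in U$, pick $a \in D$ with $x \in \upupset a \subseteq U$. From $a \ll x$ and interpolation (Proposition~\ref{interpolation}) get $c$ with $a \ll c \ll x$; since $c \ll x = \dirsup(B \cap \ddset x)$ and $B \cap \ddset x$ is directed, there is $b \in B \cap \ddset x$ with $c \below b$. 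By Lemma~\ref{way-below-basics} we get $a \ll b$ and $b \ll x$, so $x \in \upupset b$, and $\upupset b \subseteq \upupset a \subseteq U$ because $z \in \upupset b$ gives $a \ll b \ll z$. Hence $\set{\upupset b \mid b \in B}$ is a basis for the Scott topology.

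For the converse, assume $\set{\upupset b \mid b \in B}$ is a basis for the Scott topology and fix $x \in D$; the goal is that $B \cap \ddset x$ is directed with supremum $x$. Inhabitedness follows by applying the reformulation to the Scott open $D$ and the point $x$, giving $b \in B$ with $x \in \upupset b$, i.e.\ $b \ll x$. For semidirectedness, given $b_1, b_2 \in B \cap \ddset x$, the set $\upupset b_1 \cap \upupset b_2$ is a Scott open neighbourhood of $x$ lying inside $\upset b_1 \cap \upset b_2$, so it contains some $\upupset b_3$ with $b_3 \in B$ and $x \in \upupset b_3$; then $b_3 \ll x$, so $b_3 \in B \cap \ddset x$, and I must argue $b_1, b_2 \below b_3$. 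That $x$ is an upper bound of $B \cap \ddset x$ is immediate; for the least-upper-bound property, take any upper bound $y$ and aim to derive $x \below y$, using that $\dset y$ is Scott closed (Lemma~\ref{complement-of-open-is-closed}) together with the basis property.

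The step I expect to be the main obstacle is precisely these last order comparisons in the converse: deducing $b_1 \below b_3$ from $x \in \upupset b_3 \subseteq \upupset b_1$, and $x \below y$ from $y$ being an upper bound of $B \cap \ddset x$. Both reduce to the statement that, in the continuous dcpo at hand, $x \below y$ holds as soon as every basis element way below $x$ is below $y$ — a $T_0$-flavoured property turning topological information back into an order inequality. Unlike the rest of the argument, which is bookkeeping, this is where one must actually work, and I would expect interpolation together with a careful, constructive manipulation of $\ll$ (rather than the bare definitions) to be what carries it through.
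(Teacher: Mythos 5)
Your forward direction is correct: each $\upupset b$ is Scott open by Example~\ref{ex:Scott-opens}, and your interpolation argument correctly shrinks an arbitrary $\upupset a$ to one indexed by $B$. (It can be shortened: given $x \in U$ Scott open, apply Scott openness of $U$ directly to $x = \dirsup\pa*{B \cap \ddset x}$ to get $b \in B$ with $b \ll x$ and $b \in U$, whence $x \in \upupset b \subseteq \upset b \subseteq U$ since $U$ is an upper set.)

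The converse, however, contains a genuine gap, and it sits exactly where you say the ``main obstacle'' is: the order comparisons $b_1, b_2 \below b_3$ and $x \below y$ are not bookkeeping --- they \emph{are} the content of this direction, and your outline does not prove them. Worse, the tools you gesture at cannot be used here without circularity: Proposition~\ref{interpolation} and Lemma~\ref{below-in-terms-of-way-below} are stated for continuous dcpos (with $B$ an order basis), which is precisely what the converse is trying to establish; at the point where you invoke them you only know that the sets $\upupset b$ form a topological basis. There are also two concrete problems with the specific reductions you propose. First, $\upupset{b_3} \subseteq \upupset{b_1}$ does \emph{not} imply $b_1 \below b_3$ in a general dcpo: in $\Nat \cup \set{\infty} \cup \set{c}$ with $c$ incomparable to everything else, $\upupset\infty = \emptyset \subseteq \upupset c$ yet $c \not\below \infty$. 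So you must genuinely use the additional datum $x \in \upupset{b_3}$, and even then the natural target for semidirectedness is not ``$b_1 \below b_3$'' but ``some element of $B \cap \ddset x$ lying in $\upupset{b_1} \cap \upupset{b_2}$'', which requires producing $b \in B$ with $b_3 \ll b \ll x$ --- an interpolation-type step that your hypotheses do not yet provide. Second, for the supremum property, the classical route through $\dset y$ (pass to its complement, which is open, and apply the basis property there) is constructively barred: that complements of Scott closed sets are open is equivalent to excluded middle (Proposition~\ref{complement-of-closed-is-open-em}), and merely knowing $\dset y$ is Scott closed gives you no open set to feed the basis property. So the converse needs a real idea for converting the topological hypothesis into the order-theoretic conclusions, and that idea is absent from the proposal.
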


\begin{proposition}\label{basis-is-dense}
  Every basis of a continuous dcpo is dense with respect to the Scott topology
  in the following (classically equivalent) ways:
  \begin{enumerate}
  \item the Scott closure of the basis is the whole dcpo;
  \item every inhabited Scott open contains a point in the basis.
  \end{enumerate}
\end{proposition}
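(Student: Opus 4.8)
The plan is to derive both items directly from the defining property of a basis, namely that for each \(x \in D\) the set \({B \cap \ddset{x}}\) is directed with supremum \(x\). Since we work constructively, I do not assume that the two formulations of density are interchangeable, so I prove them separately; fortunately each follows from a one-step argument that uses no choice and no excluded middle.

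For item~(2), let \(U\) be an inhabited Scott open subset and fix some \(x \in U\). Because \(B\) is a basis, \({B \cap \ddset{x}}\) is directed and \(\dirsup\pa*{B \cap \ddset{x}} = x \in U\). Scott openness of \(U\) then yields an element \(b \in {B \cap \ddset{x}}\) with \(b \in U\); this \(b\) lies in \(B\), so \(U\) contains a point of the basis.

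For item~(1), let \(C\) be any Scott closed subset with \(B \subseteq C\); I must show \(C = D\), i.e.\ that an arbitrary \(x \in D\) lies in \(C\). Again \({B \cap \ddset{x}}\) is directed with supremum \(x\), and every element of \({B \cap \ddset{x}}\) belongs to \(B\) and hence to \(C\). Since \(C\) is closed under directed suprema, \(x = \dirsup\pa*{B \cap \ddset{x}} \in C\). As \(C\) was an arbitrary Scott closed set containing \(B\), the Scott closure of \(B\)---the smallest such set---must be all of \(D\).

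I do not foresee a genuine obstacle: both arguments are immediate once the basis property is unpacked. The only point requiring a little care is the constructive reading, where ``dense'' admits the two (classically equivalent, but here a priori distinct) phrasings in the statement; the proof above establishes each of them without invoking the other, and in particular without appealing to Lemma~\ref{complement-of-open-is-closed} or to the principle of excluded middle.
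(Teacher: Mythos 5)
Your proof is correct and takes the same direct route as the paper's (omitted, appendix) proof: both items follow immediately from the defining property that \(B \cap \ddset{x}\) is directed with supremum \(x\), combined with Scott openness of \(U\) for item~(2) and closure of a Scott closed superset of \(B\) under directed suprema for item~(1). The constructive care you take in treating the two formulations separately is exactly right, and no further argument is needed.
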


\begin{proposition}[Interpolation]\label{interpolation}
  If \(x \ll y\) are elements of a continuous dcpo \(D\), then there exists
  \(b \in D\) with \(x \ll b \ll y\).
  Moreover, if \(D\) has a basis \(B\), then there exists such an element \(b\)
  in \(B\).
\end{proposition}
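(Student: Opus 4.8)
The plan is to manufacture, from the hypothesis $x \ll y$, a single directed subset $M \subseteq D$ with $\dirsup M = y$ every element of which is way below some element that is itself way below $y$; then the defining property of $x \ll y$ applied to $M$ hands us the interpolant. I would argue with an arbitrary basis $B$ of $D$ throughout; the unqualified statement follows because a continuous dcpo is a basis for itself (as $D \cap \ddset{z} = \ddset{z}$ for every $z$).

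Concretely, set $M \coloneqq \bigcup\set*{B \cap \ddset{b} \mid b \in B \cap \ddset{y}}$; the claim is that $M$ is directed with $\dirsup M = y$. For the supremum: any $m \in M$ satisfies $m \ll b \ll y$ for some $b$, hence $m \below y$, so $y$ is an upper bound; and if $u$ is any upper bound of $M$, then for each $b \in B \cap \ddset{y}$ we have $B \cap \ddset{b} \subseteq M$, so $b = \dirsup\pa*{B \cap \ddset{b}} \below u$, whence $u$ bounds $B \cap \ddset{y}$ and $y = \dirsup\pa*{B \cap \ddset{y}} \below u$. For directedness: $M$ is inhabited since $B \cap \ddset{y}$ and each $B \cap \ddset{b}$ are (being directed); for semidirectedness, given $m_i \in B \cap \ddset{b_i}$ with $b_i \in B \cap \ddset{y}$ for $i = 1,2$, directedness of $B \cap \ddset{y}$ yields $b_3 \in B \cap \ddset{y}$ above $b_1$ and $b_2$, so $m_1, m_2 \ll b_3$ by Lemma~\ref{way-below-basics}; since $b_3 = \dirsup\pa*{B \cap \ddset{b_3}}$, each $m_i$ lies below some element of the directed set $B \cap \ddset{b_3}$, and one further use of directedness produces $m_3 \in B \cap \ddset{b_3} \subseteq M$ above both $m_1$ and $m_2$.

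Granting the claim, $x \ll y = \dirsup M$ gives $m \in M$ with $x \below m$; by construction $m \in B \cap \ddset{b}$ for some $b \in B \cap \ddset{y}$, so $x \below m \ll b$ yields $x \ll b$ by Lemma~\ref{way-below-basics}, while $b \ll y$ holds by the choice of $b$. Hence $b \in B$ is the required interpolant. The one delicate point is the directedness of $M$: the semidirectedness step threads the hypothesis through two layers of the basis --- first replacing $b_1,b_2$ by a common $b_3$, then $m_1,m_2$ by a common $m_3 \below b_3$ --- invoking the way-below basics each time to retain the crucial ``$\ll b_3$'' information. Everything used is purely existential (directedness merely asserts inhabitedness and the existence of in-set upper bounds), so the argument needs neither choice nor excluded middle and is constructively sound.
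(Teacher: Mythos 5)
Your proof is correct, and it is essentially the standard constructive interpolation argument: the paper itself omits the proof, deferring to \cite{deJongEscardo2021a}, where the same device is used --- form the directed set \(M\) of basis elements doubly way below \(y\), verify \(\dirsup M = y\) (with the two-layer directedness argument you give), and apply the defining property of \(x \ll y\) to \(M\). Your handling of the delicate points is right: semidirectedness needs Lemma~\ref{way-below-basics} twice, and since every step concludes with an existential drawn from existential hypotheses, no choice is required.
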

\begin{lemma}\label{below-in-terms-of-way-below}
  For every two elements \(x\) and \(y\) of a continuous dcpo \(D\) we have
  \(x \below y\) if and only if \(\forall_{z \in D}\,\pa*{z \ll x \to z \ll y}\).
  Moreover, if \(D\) has a basis \(B\), then
  \(x \below y\) if and only if \(\forall_{b \in B}\,\pa*{b \ll x \to b \ll y}\).
\end{lemma}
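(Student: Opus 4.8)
The plan is to prove both biconditionals directly, the essential ingredients being continuity (which yields $x = \dirsup \ddset x$, with $\ddset x$ directed) and the two directions of Lemma~\ref{way-below-basics}.

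For the forward implication, suppose $x \below y$ and let $z \ll x$. Then $z \ll x \below y$, so Lemma~\ref{way-below-basics}(3) gives $z \ll y$; this establishes $\forall_{z \in D}\,\pa*{z \ll x \to z \ll y}$. The very same argument, with $z$ restricted to range over a basis $B$, handles the forward implication of the ``moreover'' part, so no extra work is needed there.

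For the converse, assume $\forall_{z \in D}\,\pa*{z \ll x \to z \ll y}$. Since $D$ is continuous, $\ddset x$ is directed with $\dirsup \ddset x = x$. For every $z \in \ddset x$ we have $z \ll x$, hence $z \ll y$ by hypothesis, and therefore $z \below y$. Thus $y$ is an upper bound of $\ddset x$, and as $x = \dirsup \ddset x$ is its \emph{least} upper bound we conclude $x \below y$. For the basis version, assume instead $\forall_{b \in B}\,\pa*{b \ll x \to b \ll y}$; since $B$ is a basis, $B \cap \ddset x$ is directed with supremum $x$, and the identical reasoning (each such $b$ satisfies $b \ll x$, hence $b \ll y$, hence $b \below y$) shows $y$ bounds $B \cap \ddset x$ from above, so $x = \dirsup\pa*{B \cap \ddset x} \below y$.

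I do not expect a genuine obstacle here: the proof is entirely constructive since it never forms complements nor performs case analysis. The only point to be careful about is that both the existence of the supremum $\dirsup \ddset x$ (resp.\ $\dirsup\pa*{B \cap \ddset x}$) and the fact that it equals $x$ are needed, and that directedness of $\ddset x$ (resp.\ $B \cap \ddset x$) — which guarantees the supremum exists — is already part of the definition of a continuous dcpo (resp.\ of a basis), so nothing beyond the hypotheses is required.
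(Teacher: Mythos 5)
Your proof is correct and is exactly the standard argument: the forward direction is Lemma~\ref{way-below-basics}(iii), and the converse uses that \(y\) is an upper bound of the directed set \(\ddset x\) (resp.\ \(B \cap \ddset x\)) whose least upper bound is \(x\). The paper omits the proof and defers to its reference for constructive proofs of such standard facts, but your argument matches that standard constructive proof, including the correct observation that \(z \ll y\) implies \(z \below y\).
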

\begin{definition}
  An \emph{abstract basis} is a pair \((B,\prec)\) such that \(\prec\) is
  transitive and \emph{interpolative}: for every \(b \in B\), the subset
  \(\dset b \coloneqq \set{a \in B \mid a \prec b}\) is directed.
  The \emph{rounded ideal completion} \(\Idl(B,\prec)\) of an abstract basis
  \((B,\prec)\) consists of directed lower sets of \((B,\prec)\), known as
  \emph{(rounded) ideals}, ordered by subset inclusion. It is a continuous dcpo
  with basis \(\set{\dset b \mid b \in B}\) and directed suprema given by unions.
\end{definition}
\begin{lemma}\label{reflexive-algebraic}
  If the relation \(\prec\) of an abstract basis \((B,\prec)\) is reflexive,
  then \(\Idl(B,\prec)\) is algebraic and its compact elements are exactly those
  of the form \(\dset b\) for \(b \in B\).
\end{lemma}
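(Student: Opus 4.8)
The plan is to establish the two claims separately --- every principal lower set \(\dset b\) is compact, and every compact ideal is of this form --- and then read off algebraicity. Throughout I will use that elements of \(\Idl(B,\prec)\) are ordered by inclusion, that directed suprema there are computed as unions, and that \(\dset b\) is itself a rounded ideal (a lower set by transitivity, directed by interpolativity).

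First I would show \(\dset b \ll \dset b\) for each \(b \in B\). Let \(S\) be a directed family of ideals with \(\dset b \below \dirsup S\), i.e.\ \(\dset b \subseteq \bigcup S\). Reflexivity of \(\prec\) gives \(b \in \dset b\), hence \(b \in I\) for some \(I \in S\); since \(I\) is a lower set this forces \(\dset b \subseteq I\), that is, \(\dset b \below I\). This witnesses the way-below relation. Conversely, suppose \(I \in \Idl(B,\prec)\) is compact. Because \(\set*{\dset b \mid b \in B}\) is a basis for \(\Idl(B,\prec)\), the family \(\set*{\dset b \mid b \in B,\ \dset b \ll I}\) is directed with supremum \(I\); applying \(I \ll I\) to this directed family yields some basis element \(\dset b\) with \(\dset b \ll I\) and \(I \below \dset b\). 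Since way below implies below, \(\dset b \below I\) as well, so \(I = \dset b\).

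Finally, for algebraicity, note that for \(b \in B\) and \(I \in \Idl(B,\prec)\) we have \(\dset b \below I\) iff \(b \in I\): left to right by reflexivity (\(b \in \dset b\)), and right to left because \(I\) is a lower set. Combined with the previous paragraph, the compact elements below \(I\) are exactly \(\set*{\dset b \mid b \in I}\). This set is inhabited since \(I\) is, and semidirected since \(I\) is directed and \(\prec\) transitive; and its union is \(I\), using that \(I\) is a lower set for one inclusion and that \(a \in \dset a\) for each \(a \in I\) (reflexivity) for the other. Hence every element of \(\Idl(B,\prec)\) is the directed supremum of the compact elements below it, so the dcpo is algebraic with compact elements precisely the \(\dset b\). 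I do not expect a genuine obstacle here; the only points requiring care are consistently exploiting that directed suprema are unions and that reflexivity of \(\prec\) is exactly what puts each generator inside its own principal lower set.
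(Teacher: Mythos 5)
Your proof is correct and follows the standard route the paper intends: reflexivity makes each \(\dset b\) compact (via suprema-as-unions), compactness of an ideal \(I\) applied to the directed basis family \(\set{\dset b \mid \dset b \ll I}\) forces \(I = \dset b\), and \(I = \bigcup_{a \in I} \dset a\) gives algebraicity. No gaps; all the uses of reflexivity, transitivity and lower-closedness are in the right places.
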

\subsection{Decidability conditions}\label{sec:decidability-conditions}
Every continuous dcpo \(D\) has a basis, namely \(D\) itself. Our interest in
bases lies in the fact that we can ask a dcpo to have a basis satisfying certain
decidability conditions that we couldn't reasonably impose on the entire
dcpo. For instance, the basis \(\set{\bot,\top}\) of the Sierpi\'nski domain
\(\Sierp\) has decidable equality, but decidable equality on all of \(\Sierp\)
is equivalent to excluded middle.

The first decidability condition that we will consider is for bases \(B\) of a
pointed continuous dcpo:
\begin{equation}\label{equals-bot-decidable}\tag{\(\delta_\bot\)}
  \text{For every } b \in B, \text{ it is decidable whether } b = \bot.
\end{equation}
The second and third decidability conditions are for bases of any continuous dcpo:
\begin{equation}\label{way-below-decidable}\tag{\(\delta_{\ll}\)}
  \text{For every } a,b \in B, \text{ it is decidable whether } a \ll b.
\end{equation}
\begin{equation}\label{below-decidable}\tag{\(\delta_{\below}\)}
  \text{For every } a,b \in B, \text{ it is decidable whether } a \below b.
\end{equation}

Observe that each of \eqref{below-decidable} and \eqref{way-below-decidable}
implies \eqref{equals-bot-decidable} for pointed dcpos, because \(\bot\) is
compact.
In general, neither of the conditions \eqref{below-decidable} and
\eqref{way-below-decidable} implies the other. However, in some cases, for
example when \(B\) is finite, \eqref{way-below-decidable} is a weaker condition,
because of Lemma~\ref{below-in-terms-of-way-below}.
Moreover, if the dcpo is algebraic then conditions \eqref{below-decidable} and
\eqref{way-below-decidable} are equivalent for the unique basis of compact
elements.

Finally, we remark that many natural examples in domain theory satisfy the
decidability conditions. In particular, this holds for all examples in
Section~\ref{sec:examples}.

\begin{proposition}
  Let \(D\) be any pointed dcpo that is nontrivial in the sense that there
  exists \(x \in D\) with \(x \neq \bot\). If \(y = \bot\) is decidable for
  every \(y \in D\), then weak excluded middle follows.
\end{proposition}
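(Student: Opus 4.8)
The plan is to recall that weak excluded middle is the statement that $\lnot P \lor \lnot\lnot P$ holds for every proposition $P$, and to derive it from the decidability hypothesis applied to a cleverly chosen element of $D$. Given a proposition $P$, the natural move is to form an element of $D$ that ``blends'' $x$ and $\bot$ according to whether $P$ holds, and then ask whether that element equals $\bot$.

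Concretely, fix $x \in D$ with $x \neq \bot$, and let $P$ be an arbitrary proposition. Consider the subset $S_P \coloneqq \set{\bot} \cup \set{y \in D \mid y = x \text{ and } P}$ of $D$. First I would check that $S_P$ is directed: it is inhabited since $\bot \in S_P$, and it is semidirected because any two of its elements are both below $x$ if $P$ holds and both equal $\bot$ otherwise — in either case $x$ (if present) or $\bot$ serves as an upper bound lying in $S_P$, but to be safe one simply notes every element of $S_P$ is $\below x$, and if $P$ holds then $x \in S_P$, while if not then $S_P = \set\bot$; either way a common upper bound inside $S_P$ exists. Hence $s_P \coloneqq \dirsup S_P$ exists. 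The key observations are: if $P$ holds then $x \in S_P$, so $x \below s_P \below x$, giving $s_P = x \neq \bot$; and if $\lnot P$ holds then $S_P = \set\bot$, so $s_P = \bot$.

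Now apply the hypothesis to $y \coloneqq s_P$: it is decidable whether $s_P = \bot$. In the case $s_P = \bot$, I claim $\lnot P$: for if $P$ held, then $s_P = x \neq \bot$, contradicting $s_P = \bot$. In the case $s_P \neq \bot$, I claim $\lnot\lnot P$: for if $\lnot P$ held, then $s_P = \bot$, contradicting $s_P \neq \bot$. Thus in either branch of the decision we obtain $\lnot P$ or $\lnot\lnot P$, which is exactly weak excluded middle.

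The only real subtlety — and the step I would be most careful about — is the verification that $S_P$ is genuinely directed \emph{without} using excluded middle: one must argue semidirectedness uniformly rather than by splitting on whether $P$ holds, which works here because every element of $S_P$ is $\below x$ and, whenever $S_P$ contains an element other than $\bot$, that element is $x$ itself (using that the only candidate is $x$), so $x \in S_P$ provides the required upper bound; when $S_P = \set\bot$, the element $\bot$ works. Everything else is a direct case analysis on the decidable proposition $s_P = \bot$, and the argument never invokes choice. This is essentially the standard ``truncation/blending'' trick adapted to the dcpo setting, and it mirrors the general phenomenon that deciding equality with a least element encodes deciding double negations.
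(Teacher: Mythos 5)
Your proof is correct and follows essentially the same route as the paper, which uses exactly this construction (the supremum \(s_P\) of \(\set{\bot} \cup \set{x \mid P}\), followed by a case split on the decidable \(s_P = \bot\)) — compare the analogous argument in the proof of Theorem~\ref{tight-cotransitive-wem}(ii). The one spot worth tightening is the directedness check: rather than reasoning about whether \(S_P\) "contains an element other than \(\bot\)", simply case on the membership disjunction \(u = \bot \vee (u = x \wedge P)\) for each of the two given elements, or invoke the paper's remark that in a pointed dcpo the semidirected set \(\set{x \mid P}\) acquires a supremum via \(\set{x \mid P} \cup \set{\bot}\).
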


Moreover, if the order relation of a dcpo is decidable, then, by antisymmetry,
the dcpo must have decidable equality, but we showed in
\cite[Corollary~39]{deJongEscardo2021b} that this implies (weak) excluded
middle, unless the dcpo is trivial.

\subsubsection{Closure under products and exponentials}
\begin{definition}
  The \emph{product} \(D \times E\) of two dcpos \(D\) and \(E\) is given by
  their Cartesian product ordered pairwise. The supremum of a
  directed subset \(S \subseteq D \times E\) is given by the pair of suprema
  \(\dirsup \set{x \in D \mid \exists_{y \in E}\, (x,y) \in S}\) and
  \(\dirsup \set{y \in D \mid \exists_{x \in D}\, (x,y) \in S}\).
\end{definition}
\begin{proposition}
  If \(D\) and \(E\) are continuous dcpos with bases \(B_D\) and \(B_E\), then
  \(B_D \times B_E\) is a basis for the product \(D \times E\). Also, if
  \(B_D\) and \(B_E\) both satisfy \eqref{equals-bot-decidable}, then so does
  \(B_D \times B_E\), and similarly for \eqref{way-below-decidable} and
  \eqref{below-decidable}.
\end{proposition}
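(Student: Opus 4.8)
The plan is to first characterise the way-below relation on a product and then read off all three claims from it.

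\textbf{Step 1 (way-below on products).} I would prove that for \((a,b),(x,y) \in D \times E\) one has \((a,b) \ll (x,y)\) if and only if \(a \ll x\) and \(b \ll y\). For the ``only if'' direction, suppose \((a,b) \ll (x,y)\) and let \(S \subseteq D\) be directed with \(x \below \dirsup S\); then \(S \times \set{y}\) is directed in \(D \times E\) with supremum \((\dirsup S, y) \aboveorder (x,y)\), so there is \(s \in S\) with \((a,b) \below (s,y)\), whence \(a \below s\). This shows \(a \ll x\), and symmetrically \(b \ll y\). For the ``if'' direction, suppose \(a \ll x\) and \(b \ll y\) and let \(T \subseteq D \times E\) be directed with \((x,y) \below \dirsup T\). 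By the description of suprema in \(D \times E\), the first and second projections of \(T\) are directed and \(x \below \dirsup \pi_1(T)\), \(y \below \dirsup \pi_2(T)\); hence there are \(t_1, t_2 \in T\) with \(a \below \pi_1(t_1)\) and \(b \below \pi_2(t_2)\), and a common upper bound \(t \in T\) of \(t_1\) and \(t_2\) satisfies \((a,b) \below t\). Thus \((a,b) \ll (x,y)\).

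\textbf{Step 2 (the basis property).} Using Step~1, for any \((x,y) \in D \times E\) we get \((B_D \times B_E) \cap \ddset(x,y) = (B_D \cap \ddset x) \times (B_E \cap \ddset y)\). Each factor on the right is directed because \(B_D\) and \(B_E\) are bases, and a Cartesian product of directed subsets is directed; moreover its supremum is the pair of the suprema of the factors, which by the basis hypotheses equals \((x,y)\). Hence \((B_D \times B_E) \cap \ddset(x,y)\) is directed with supremum \((x,y)\), i.e.\ \(B_D \times B_E\) is a basis for \(D \times E\).

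\textbf{Step 3 (decidability).} When \(D\) and \(E\) are pointed, so is \(D \times E\), with \(\bot = (\bot,\bot)\). For \((a,b),(a',b') \in B_D \times B_E\) we then have: \((a,b) = \bot\) iff \(a = \bot\) and \(b = \bot\); \((a,b) \ll (a',b')\) iff \(a \ll a'\) and \(b \ll b'\) (by Step~1); and \((a,b) \below (a',b')\) iff \(a \below a'\) and \(b \below b'\) (by the definition of the product order). In each case the predicate on the product is a conjunction of the corresponding predicates on \(B_D\) and on \(B_E\), and a conjunction of decidable propositions is decidable, so \eqref{equals-bot-decidable}, \eqref{way-below-decidable} and \eqref{below-decidable} are each inherited by \(B_D \times B_E\). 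The only step with real mathematical content is Step~1; the point that needs a little care constructively is that the projections of a directed subset of \(D \times E\) are again directed and that suprema in \(D \times E\) are computed componentwise — but this is exactly the content of the definition of the product dcpo given above, and no appeal to excluded middle or choice is needed anywhere.
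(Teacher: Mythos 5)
Your proof is correct and follows the standard (and surely intended) route: establish the componentwise characterisation of way-below on a binary product, then deduce the basis property and all three decidability transfers from it. The one step with content, the ``only if'' direction of Step~1 via the directed set \(S \times \set{y}\), is carried out properly and constructively, so there is nothing to add.
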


\begin{definition}
  A function between dcpos is \emph{Scott continuous} if it preserves directed
  suprema.
\end{definition}
\begin{definition}
  The \emph{exponential} \(E^D\) of two dcpos \(D\) and \(E\) is given by the
  set of Scott continuous functions from \(D\) to \(E\) ordered pointwise, i.e.\
  \(f \below g\) if \(\forall_{x \in D}\,f(x) \below g(x)\) for
  \(f,g\colon D \to E\). Suprema of directed subsets are also given pointwise.
\end{definition}

\begin{definition}
  Given an element \(x\) of a dcpo \(D\) and an element \(y\) of a pointed dcpo
  \(E\), the \emph{single-step function}
  \(\steppa{x \To y} \colon D \to E\) is defined as
  \(\steppa{x \To y}(d) \coloneqq \dirsup \set{y \mid x \below d}\).
  A \emph{step-function} is the supremum of a Kuratowski finite
  (recall~Example~\ref{ex:powerset}) subset of single-step functions.
\end{definition}

\begin{lemma}\label{step-function-compact}
  If \(x\) is a compact element of \(D\) and \(y\) is any element of a pointed
  dcpo \(E\), then the single-step function
  \(\steppa{x \To y} \colon {D \to E}\) is Scott continuous. If \(y\) is also
  compact, then \(\steppa{x \To y}\) is a compact element of the exponential
  \(E^D\).
\end{lemma}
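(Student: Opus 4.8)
The plan is to unfold the definition $\steppa{x\To y}(d) = \dirsup\set{y \mid x \below d}$ — a supremum taken in the pointed dcpo $E$, so it evaluates to $\bot$ when the indexing proposition $x \below d$ fails and to $y$ when it holds — and to reduce both assertions to the defining property of compactness.

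For Scott continuity I would first record that $\steppa{x\To y}$ is monotone: if $d \below d'$, then $x \below d$ entails $x \below d'$, so $\set{y \mid x\below d} \subseteq \set{y \mid x \below d'}$ and hence $\steppa{x\To y}(d) \below \steppa{x\To y}(d')$. Given a directed $S \subseteq D$, monotonicity immediately yields $\dirsup_{s\in S}\steppa{x\To y}(s) \below \steppa{x\To y}(\dirsup S)$. For the reverse inequality I would argue that $\steppa{x\To y}(\dirsup S) = \dirsup\pa*{\set{y\mid x\below\dirsup S}\cup\set{\bot}}$ lies below $z \coloneqq \dirsup_{s\in S}\steppa{x\To y}(s)$ by checking it on each element of the indexing family: $\bot \below z$ is trivial, and \emph{under the hypothesis} $x \below \dirsup S$ we must show $y \below z$. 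This is exactly where compactness of $x$ is used: from $x \below \dirsup S$ with $S$ directed we obtain some $s \in S$ with $x \below s$, so $\steppa{x\To y}(s) \aboveorder y$, whence $y \below z$. Note this half does not require $y$ to be compact.

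For the second claim I would suppose $y$ is also compact and take a directed $\mathcal F \subseteq E^D$ with $\steppa{x\To y} \below \dirsup\mathcal F$; recall that suprema in the exponential are computed pointwise, so $(\dirsup\mathcal F)(d) = \dirsup_{f\in\mathcal F}f(d)$. Evaluating the hypothesis at the point $x$ and using $x \below x$ gives $y = \steppa{x\To y}(x) \below \dirsup_{f\in\mathcal F}f(x)$, and the family $\set{f(x)\mid f\in\mathcal F}$ is directed because $f \mapsto f(x)$ is monotone on $\mathcal F$. Compactness of $y$ then produces some $f \in \mathcal F$ with $y \below f(x)$. It remains to verify $\steppa{x\To y}\below f$, i.e.\ $\steppa{x\To y}(d)\below f(d)$ for all $d \in D$: the left-hand side is a supremum of $\bot$ (which is $\below f(d)$) and, when $x \below d$, of $y$, and in that case $y \below f(x) \below f(d)$ by monotonicity of $f$.

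The only step requiring genuine care — and the one I expect to be the main obstacle — is the nontrivial inequality in the Scott continuity argument: one must handle the ``conditional singleton'' $\set{y\mid x\below\dirsup S}$ correctly without case-splitting on whether $x \below \dirsup S$, and recognise that compactness of $x$ is precisely what lets one extract the required witness $s \in S$. Everything else is routine manipulation of pointwise orders, suprema, and monotonicity.
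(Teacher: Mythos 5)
Your proof is correct and follows essentially the same route as the paper's: monotonicity plus compactness of \(x\) to extract the witness \(s \in S\) for Scott continuity, and evaluation at \(x\) plus compactness of \(y\) together with the characterisation \(\steppa{x \To y} \below f \iff y \below f(x)\) (which the paper records separately as Lemma~\ref{step-function-below}, and which you reprove inline) for compactness of the single-step function. Your handling of the conditional singleton --- proving \(y \below z\) \emph{under the hypothesis} \(x \below \dirsup S\) rather than case-splitting on it --- is exactly the constructively correct move.
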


\begin{lemma}\label{step-function-below}
  For every element \(x\) of a dcpo \(D\), element \(y\) of a pointed dcpo \(E\)
  and Scott continuous function \(f \colon D \to E\), we have
  \(\steppa{x \To y} \below f\) if and only if \(y \below f(x)\).
\end{lemma}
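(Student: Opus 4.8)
The plan is to reduce everything to two elementary observations about the single-step function: that $\steppa{x \To y}(x) = y$, and that an inequality $\steppa{x \To y}(d) \below e$ can be verified by testing only the generators of the subsingleton $\set{y \mid x \below d}$.

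First I would record the needed facts about $\steppa{x \To y}$. Since $E$ is pointed, for each $d \in D$ the subsingleton $\set{y \mid x \below d} \subseteq E$ is semidirected and its supremum $\steppa{x \To y}(d)$ exists, with the universal property that $\steppa{x \To y}(d) \below e$ holds precisely when $z \below e$ for every $z$ with $z = y$ and $x \below d$ (the condition $\bot \below e$ being automatic). In particular, taking $d \coloneqq x$ and using $x \below x$, we get $y \in \set{y \mid x \below x}$ and hence $y \below \steppa{x \To y}(x)$; conversely every element of that subsingleton is $\below y$, so $\steppa{x \To y}(x) \below y$, giving $\steppa{x \To y}(x) = y$.

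For the forward direction, assume $\steppa{x \To y} \below f$. Unwinding the pointwise order and instantiating at $x$ yields $\steppa{x \To y}(x) \below f(x)$, i.e.\ $y \below f(x)$ by the computation above. For the backward direction, assume $y \below f(x)$. I would first note that $f$, being Scott continuous, is monotone: for $a \below b$ the set $\set{a,b}$ is directed with supremum $b$, so $f(a) \below \dirsup\set{f(a),f(b)} = f(b)$. Now fix $d \in D$; to show $\steppa{x \To y}(d) \below f(d)$ it suffices, by the universal property recorded above, to show $z \below f(d)$ for every $z$ with $z = y$ and $x \below d$. Given such $z$, monotonicity of $f$ gives $f(x) \below f(d)$, and combining with $y \below f(x)$ yields $z = y \below f(d)$. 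Hence $\steppa{x \To y} \below f$.

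The only point requiring care is a constructive one: rather than case-splitting on whether $x \below d$, one works uniformly with the subsingleton $\set{y \mid x \below d}$ and its universal property as a supremum in the pointed dcpo $E$. Beyond that the argument is routine and no genuine obstacle arises.
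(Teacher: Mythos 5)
Your proof is correct and follows the standard (and essentially only) route: compute $\steppa{x \To y}(x) = y$ for the forward direction, and use the universal property of the supremum of the subsingleton $\set{y \mid x \below d}$ together with monotonicity of $f$ for the converse. You also correctly avoid the constructive pitfall of case-splitting on whether $x \below d$ holds, which matches the intended argument.
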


\begin{lemma}\label{compact-closed-under-finite-sups}
  The compact elements of a dcpo are closed under existing finite suprema.
\end{lemma}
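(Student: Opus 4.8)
The plan is to reduce the statement to the following form: if $(c_i)_{i\in I}$ is a Kuratowski finite (Example~\ref{ex:powerset}) family of compact elements of a dcpo $D$ and the supremum $\dirsup_{i\in I} c_i$ exists in $D$, then this supremum is again compact. I would first dispose of the empty family separately: if $I$ is empty, then the existence of $\dirsup_{i\in I}c_i$ is exactly the statement that $D$ has a least element $\bot$, and $\bot$ is compact because any directed subset $S$ with $\bot \below \dirsup S$ is in particular inhabited, so it contains some element $s$, and $\bot \below s$ holds trivially.

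For the case where $I$ is inhabited I would record the auxiliary fact that every inhabited Kuratowski finite subset $F$ of a directed subset $S$ has an upper bound lying in $S$. This follows by an easy induction on the length of an enumeration of $F$: the one-element case uses that $F \subseteq S$, and the inductive step applies the induction hypothesis to all but the last enumerated element to get an upper bound $u \in S$ of them and then uses semidirectedness of $S$ to find $t \in S$ above both $u$ and the last element. Granting this, let $(c_i)_{i\in I}$ be as above with $I$ inhabited, and let $S$ be a directed subset with $\dirsup_{i\in I} c_i \below \dirsup S$. For each $i \in I$ we have $c_i \below \dirsup_{j\in I}c_j \below \dirsup S$, so compactness of $c_i$ gives some $t_i \in S$ with $c_i \below t_i$; working along a surjection $\set{0,\dots,n-1} \twoheadrightarrow I$ keeps this selection finite, so it needs only finite choice over an initial segment of $\Nat$, which is provable by induction. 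Now $\set{t_i \mid i \in I}$ is an inhabited Kuratowski finite subset of $S$, so by the auxiliary fact there is $t \in S$ with $t_i \below t$ for all $i$; then $c_i \below t$ for every $i$, so $t$ is an upper bound of $\set{c_i \mid i \in I}$ and hence $\dirsup_{i\in I}c_i \below t$. Since $t \in S$, this witnesses $\dirsup_{i\in I}c_i \ll \dirsup_{i\in I}c_i$.

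The mathematical content is routine; the care is purely constructive. One should note that the general finite case cannot be obtained by iterating the binary case, because the partial suprema $c_0 \sqcup \dots \sqcup c_k$ need not exist in $D$ even when $\dirsup_{i\in I}c_i$ does, which is why the argument above handles the whole family at once. Likewise, since a Kuratowski finite set comes presented by a surjection from some $\set{0,\dots,n-1}$ rather than by a repetition-free list, both the auxiliary fact and the selection of the $t_i$ must be phrased as inductions over such surjections, and the empty family must be treated on its own. I expect this bookkeeping, rather than any genuine difficulty, to be the main thing to get right.
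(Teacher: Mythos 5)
Your proof is correct and follows essentially the same route as the paper's (relegated to the arXiv appendix): handle the empty family via compactness of $\bot$, and for an inhabited Kuratowski finite family pick witnesses $t_i \in S$ by compactness of each $c_i$ and bound them in $S$ by directedness. Your added care about finite choice along the enumerating surjection and about not iterating the binary case (since intermediate suprema need not exist) is exactly the right constructive bookkeeping.
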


\begin{definition}
  A subset \(S\) of a poset \((X,\below)\) is \emph{bounded} if there exists
  \(x \in X\) such that \(s \below x\) for every \(s \in S\).
  A poset \((X,\below)\) is \emph{bounded complete} if every bounded subset
  \(S\) of \(X\) has a supremum \(\dirsup S\) in \(X\).
\end{definition}

\begin{proposition}\label{dec-closed-under-exponentials}
  If \(D\) is an inhabited algebraic dcpo with basis of compact elements \(B_D\)
  and \(E\) is a pointed bounded complete algebraic dcpo with basis of compact
  elements \(B_E\), then
  \begin{align*}
    B \coloneqq \Big{\{}\dirsup S \mid\,\, &S \text{ is a bounded Kuratowski finite
    subset of single-step functions} \\
    &\text{of the form } \steppa{a \To b} \text { with } a \in B_D
      \text{ and } b \in B_E\Big{\}}.
  \end{align*}
  is the basis of compact elements for the algebraic exponential \(E^D\).
  Moreover, if \(B_E\) satisfies \eqref{equals-bot-decidable}, then so
  does~\(B\). Finally, if \(B_D\) and \(B_E\) both satisfy
  \eqref{below-decidable} \textup{(}or equivalently,
  \eqref{way-below-decidable}\textup{)}, then \(B\) satisfies
  \eqref{below-decidable} and \eqref{way-below-decidable} too.
\end{proposition}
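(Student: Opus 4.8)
The plan is to verify the three assertions in turn: that every element of $B$ is a compact element of $E^D$, that $B$ is a basis, and that $B$ inherits the decidability conditions. First I would record that $E^D$ is bounded complete, which follows from bounded completeness of $E$: for a bounded family $\mathcal F \subseteq E^D$ the pointwise supremum $f(d) \coloneqq \dirsup \set{g(d) \mid g \in \mathcal F}$ is well-defined in $E$ and is again Scott continuous — the one delicate point is that a directed supremum may be swapped past the bounded supremum over $\mathcal F$, which is legitimate because all suprema involved exist by bounded completeness of $E$ — and it is visibly the least upper bound of $\mathcal F$ in $E^D$. In particular, bounded Kuratowski finite families of single-step functions have suprema, so $B$ is well-defined; and since by Lemma~\ref{step-function-compact} each single-step function $\steppa{a \To b}$ with $a \in B_D$ and $b \in B_E$ is a compact element of $E^D$, while compact elements are closed under existing finite suprema by Lemma~\ref{compact-closed-under-finite-sups}, every element of $B$ is compact.

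Next I would show that $B$ is a basis of compact elements, i.e.\ that for each $f \in E^D$ the set $B_f \coloneqq \set{g \in B \mid g \below f}$ is directed with supremum $f$. It contains the constant function $\bot_E$ (the supremum of the empty family). If $g_1, g_2 \in B_f$, then $\set{g_1, g_2}$ is bounded by $f$, so $g_1 \sqcup g_2$ exists; writing each $g_k$ as the supremum of a bounded Kuratowski finite family of single-step functions, $g_1 \sqcup g_2$ is the supremum of the union of those two families, which is again bounded by $f$ and Kuratowski finite, so $g_1 \sqcup g_2 \in B_f$; hence $B_f$ is directed. Since suprema in $E^D$ are pointwise (and $B_f$ is directed), to conclude $\dirsup B_f = f$ it suffices to prove $\dirsup \set{g(x) \mid g \in B_f} = f(x)$ for each $x \in D$. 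So fix a compact $b \below f(x)$. By Scott continuity of $f$ and algebraicity of $D$ we have $f(x) = \dirsup \set{f(a) \mid a \in B_D,\ a \below x}$, a directed supremum, so compactness of $b$ yields $a \in B_D$ with $a \below x$ and $b \below f(a)$. Then $\steppa{a \To b} \below f$ by Lemma~\ref{step-function-below}, so $\steppa{a \To b} \in B_f$, while $\steppa{a \To b}(x) = b$ because $a \below x$. Thus $\set{b \in B_E \mid b \below f(x)}$ is contained in $\set{g(x) \mid g \in B_f}$, which is itself bounded by $f(x)$; as the former has supremum $f(x)$ by algebraicity of $E$, so does the latter. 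Hence $E^D$ is algebraic with basis of compact elements $B$, so $B$ is precisely the set of compact elements of $E^D$.

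For the decidability conditions, let $g, g' \in B$; unfolding the definition we may take $g = \dirsup_{i<n} \steppa{a_i \To b_i}$ and $g' = \dirsup_{j<m} \steppa{a_j' \To b_j'}$ with all $a_i, a_j' \in B_D$ and all $b_i, b_j' \in B_E$. For \eqref{equals-bot-decidable}: $g = \bot$ if and only if $b_i = \bot_E$ for every $i < n$ — sufficiency is clear, and conversely $\steppa{a_i \To b_i} \below g$ forces $\steppa{a_i \To b_i} = \bot$ and hence $b_i = \steppa{a_i \To b_i}(a_i) = \bot_E$ — which is decidable as a finite conjunction since $B_E$ satisfies \eqref{equals-bot-decidable}. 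For \eqref{below-decidable}: $g \below g'$ if and only if $\steppa{a_i \To b_i} \below g'$ for all $i < n$, i.e.\ (Lemma~\ref{step-function-below}) $b_i \below g'(a_i)$ for all $i < n$; using \eqref{below-decidable} on $B_D$ we compute $J_i \coloneqq \set{j < m \mid a_j' \below a_i}$, and then $g'(a_i) = \dirsup_{j \in J_i} b_j'$ is a finite supremum of compact elements, hence a compact element and so a member of $B_E$; thus $b_i \below g'(a_i)$ is decidable by \eqref{below-decidable} on $B_E$, and the finite conjunction over $i < n$ is decidable. Condition \eqref{way-below-decidable} for $B$ then follows, because $\ll$ and $\below$ coincide on the compact elements $B$.

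The main obstacle is the identity $\dirsup B_f = f$: it forces one to manufacture enough members of $B_f$ from Scott continuity of $f$ together with algebraicity of both $D$ and $E$, and here the order of the choices matters — one picks a compact $b$ below $f(x)$ first and only then a compact $a \below x$ with $b \below f(a)$, so that Lemma~\ref{step-function-below} applies to place $\steppa{a \To b}$ in $B_f$. The remaining work, including the verification that $E^D$ is bounded complete and that $g'(a_i)$ is again a basis element in the decidability argument, is routine.
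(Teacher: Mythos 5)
Your proof is correct and follows essentially the same route as the paper's (omitted) proof: compactness of the single-step functions \(\steppa{a \To b}\) via Lemma~\ref{step-function-compact} and closure of compacts under existing finite suprema, the standard step-function argument (choosing a compact \(b \below f(x)\) first and then \(a \in B_D\) with \(b \below f(a)\), so that Lemma~\ref{step-function-below} applies) to show \(B\) is a basis, and the reduction of \(g \below g'\) to finitely many decidable conditions \(b_i \below g'(a_i)\) with \(g'(a_i)\) computed as a finite join of compacts, hence again an element of \(B_E\). The observation that \({\ll}\) and \({\below}\) coincide on compact elements, giving \eqref{way-below-decidable} for free, also matches the paper's remark in Section~\ref{sec:decidability-conditions}.
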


\section{The intrinsic apartness}\label{sec:intrinsic-apartness}
\begin{definition}\label{def:specialization}
  The \emph{specialization preorder} on a topological space \(X\) is the
  preorder \({\leq}\) on \(X\) given by putting \(x \leq y\) if every open
  neighbourhood of \(x\) is an open neighbourhood of \(y\). Given \(x,y\in X\),
  we write \(x \posnotbelow y\) and say that \(y\)~\emph{does not
    specialize}~\(x\) if there exists an open neighbourhood of \(x\) that does
  not contain \(y\).
\end{definition}

Observe that \(x \posnotbelow y\) is classically equivalent to \(x \nleq y\),
the logical negation of \(x \leq y\). We also write \(x \not\below y\) for the
logical negation of \(x \below y\).

\begin{definition}\label{def:apartness}
  An \emph{apartness} on a set \(X\) is a binary relation \({\apart}\) on \(X\)
  satisfying
  \begin{enumerate}
  \item \emph{irreflexivity}: \(x \apart x\) is false for every \(x \in X\);
  \item \emph{symmetry}: if \(x \apart y\), then \(y \apart x\) for every
    \(x,y \in X\).
  \end{enumerate}
  If \(x \apart y\) holds, then \(x\) and \(y\) are said to be \emph{apart}.
\end{definition}
Notice that we do not require cotransitivity or tightness, cf.\
Warning~\ref{warning}. Notice that irreflexivity implies that if \(x \apart y\),
then \(x \neq y\), so \(\apart\) is a strengthening of inequality.

\begin{definition}\label{def:intrinsic-apartness}
  Given two points \(x\) and \(y\) of a topological space \(X\), we say that
  \(x\) and \(y\) are \emph{intrinsically apart}, written \(x \apart y\), if
  \(x \posnotbelow y\) or
  \(y \posnotbelow x\).
  Thus, \(x\) is intrinsically apart from \(y\) if there is a Scott open
  neighbourhood of \(x\) that does not contain \(y\) or vice versa.
  It is clear that the relation \({\apart}\) is an apartness in the sense of
  Definition~\ref{def:apartness}.
\end{definition}

With excluded middle, one can show that the specialization preorder for the
Scott topology on a dcpo coincides with the partial order of the dcpo. In
particular, the specialization preorder is in fact a partial
order. Constructively, we still have the following result.

\begin{lemma}\label{order-specialization}
  Let \(x\) and \(y\) be elements of a dcpo \(D\).  If \(x \below y\), then
  \(x \leq y\), where \({\leq}\) is the specialization order of the Scott
  topology. If \(D\) is continuous, then the converse holds too, so \({\below}\)
  and \({\leq}\) coincide in that case.
\end{lemma}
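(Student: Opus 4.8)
The plan is to prove the two implications separately. The forward direction uses only that Scott opens are upper sets and needs no continuity hypothesis; the converse uses the continuity machinery from Example~\ref{ex:Scott-opens} (that $\upupset z$ is Scott open) together with Lemma~\ref{below-in-terms-of-way-below}.

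For the forward direction, suppose $x \below y$ and let $U$ be an arbitrary Scott open neighbourhood of $x$. Since $U$ is an upper set and $x \below y$, we immediately get $y \in U$. As $U$ was arbitrary, every Scott open neighbourhood of $x$ is also one of $y$, i.e.\ $x \leq y$. Note that irreflexivity of $\posnotbelow$ / the definition of the specialization order is all that is used, and continuity plays no role.

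For the converse, assume $D$ is continuous and $x \leq y$; I would deduce $x \below y$ via Lemma~\ref{below-in-terms-of-way-below}, i.e.\ it suffices to show that $z \ll x$ implies $z \ll y$ for every $z \in D$. Fix such a $z$. By Example~\ref{ex:Scott-opens}, the set $\upupset z$ is Scott open --- this is exactly where continuity (through the interpolation property, Proposition~\ref{interpolation}) enters --- and $x \in \upupset z$ because $z \ll x$. Since $x \leq y$, the Scott open neighbourhood $\upupset z$ of $x$ is also a neighbourhood of $y$, so $z \ll y$, as required. (Equivalently, one can avoid Lemma~\ref{below-in-terms-of-way-below}: the same computation gives $\ddset x \subseteq \ddset y$, and since $z \ll y$ entails $z \below y$, the element $y$ is an upper bound of $\ddset x$, whence $x = \dirsup \ddset x \below y$ by continuity.)

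There is no genuine obstacle; the only point worth flagging is that the Scott openness of $\upupset z$ is itself a consequence of continuity via interpolation, so the converse really does use the hypothesis, in contrast to the forward direction. Combining the two implications yields that $\below$ and $\leq$ coincide on a continuous dcpo.
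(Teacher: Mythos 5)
Your proof is correct and follows the standard route the paper intends: Scott opens are upper sets for the forward direction, and for the converse you use that $\upupset z$ is Scott open (Example~\ref{ex:Scott-opens}) together with Lemma~\ref{below-in-terms-of-way-below}. Nothing to add.
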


\begin{lemma}\label{posnotbelow-criterion}
  For a continuous dcpo \(D\) we have \(x \posnotbelow y\) if and only if there
  exists \(b \in D\) such that \(b \ll x\), but \(b \not\below y\).
  Moreover, if \(D\) has a basis \(B\), then there exists such an element \(b\)
  in \(B\).
\end{lemma}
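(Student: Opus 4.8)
The plan is to prove the two implications separately, in each case using continuity of $D$. For the forward direction I would show that any Scott open witnessing $x \posnotbelow y$ can be shrunk to a way-below witness $b \ll x$; for the converse I would exhibit the Scott open $\upupset b$ explicitly.

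\textbf{Forward direction.} Assume $x \posnotbelow y$, say $U$ is Scott open with $x \in U$ and $y \notin U$. Since $D$ is continuous, $\ddset x$ is directed with $\dirsup \ddset x = x \in U$, so Scott-openness of $U$ yields some $b \in \ddset x$ with $b \in U$. Then $b \ll x$, and $b \not\below y$, for otherwise $y \in U$ (as $U$ is an upper set), contradicting $y \notin U$. When $D$ has a basis $B$, I would instead apply this reasoning to the directed set $B \cap \ddset x$, whose supremum is again $x$, to obtain such a $b$ in $B$.

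\textbf{Converse direction.} Assume $b \ll x$ and $b \not\below y$. By Example~\ref{ex:Scott-opens} the set $\upupset b$ is Scott open, and $x \in \upupset b$ since $b \ll x$. If $y \in \upupset b$, i.e.\ $b \ll y$, then applying the definition of the way-below relation to the directed subset $\set{y}$ (whose supremum is $y$) gives $b \below y$, a contradiction; hence $y \notin \upupset b$, so $\upupset b$ separates $x$ from $y$ and $x \posnotbelow y$. For the basis refinement nothing new is needed here, since $B \subseteq D$.

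\textbf{Main obstacle.} I do not expect a genuine obstacle. The one point worth flagging is the use of $b \ll y \Rightarrow b \below y$, which is not among the clauses of Lemma~\ref{way-below-basics} but drops straight out of the definition of $\ll$ applied to the singleton directed set $\set{y}$; the rest is just unwinding the definitions of Scott open and way-below, with care taken over where continuity is actually needed (both directions use it, once via $\dirsup \ddset x = x$ and once via $\upupset b$ being Scott open).
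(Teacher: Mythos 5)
Your proof is correct and follows the standard route the paper takes: for the forward direction, apply Scott openness of the separating open $U$ to the directed set $\ddset x$ (or $B \cap \ddset x$) with supremum $x$ and use that $U$ is an upper set; for the converse, use the Scott open $\upupset b$ together with the fact that $b \ll y$ implies $b \below y$. All steps, including the constructive handling of the negations, are sound.
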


The condition in Lemma~\ref{posnotbelow-criterion} appears in a remark right
after \cite[Definition~I-1.6]{GierzEtAl2003}, as a classically equivalent
reading of \(x \not\below y\).

\begin{example}
  Consider the powerset \(\powerset{X}\) of a set \(X\) as a pointed algebraic
  dcpo. Using Lemma~\ref{posnotbelow-criterion}, we see that a subset
  \(A \in \powerset{X}\) is intrinsically apart from the empty set if and only
  if \(A\) is inhabited. More generally, we have \(A \posnotbelow B\) if and
  only if \(B \setminus A\) is inhabited for every two subsets
  \(A,B \in \powerset{X}\).
\end{example}

\begin{proposition}\label{posnotbelow-notbelow}\hfill
  \begin{enumerate}
  \item For any elements \(x\) and \(y\) of a dcpo \(D\), we have that
    \(x \posnotbelow y\) implies \(x \not\below y\).
  \item The converse of~\textup{(i)} holds if and only if excluded middle holds.
    In particular, if the converse of~\textup{(i)} holds for all elements of the
    Sierpi\'nski domain \(\Sierp\), then excluded middle follows.
  \item For any elements \(x\) and \(y\) of a dcpo \(D\), we have that
    \(x \apart y\) implies \(x \neq y\).
  \item The converse of~\textup{(iii)} holds if and only if excluded middle holds.
    In particular, if the converse of~\textup{(iii)} holds for all elements of the
    Sierpi\'nski domain \(\Sierp\), then excluded middle follows.
  \item If \(c\) is a compact element of a dcpo \(D\) and \(x \in D\), then
    \(c \not\below x\) implies \(c \posnotbelow x\), without the need to assume
    excluded middle.
  \end{enumerate}
\end{proposition}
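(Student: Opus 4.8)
The plan is to treat the five items independently, reducing everything to two facts: that $\upset c$ is Scott open when $c$ is compact, and that $D\setminus\dset y$ is Scott open under excluded middle. For~(i), if $x\posnotbelow y$ I choose a Scott open $U$ with $x\in U$ and $y\notin U$; since $U$ is an upper set, $x\below y$ would give $y\in U$, so $x\not\below y$. Then~(iii) is immediate: $x\apart y$ means $x\posnotbelow y$ or $y\posnotbelow x$, hence by~(i) $x\not\below y$ or $y\not\below x$, and each disjunct rules out $x=y$ (which would force $x\below y$ and $y\below x$). For~(v), when $c$ is compact the set $\upset c=\set{z\mid c\below z}$ is an upper set, and if $S$ is directed with $\dirsup S\in\upset c$ then $c\ll c\below\dirsup S$ gives $c\ll\dirsup S$ by Lemma~\ref{way-below-basics}, so some $s\in S$ has $c\below s$; thus $\upset c$ is Scott open. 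Since $c\in\upset c$, the hypothesis $c\not\below x$ says exactly $x\notin\upset c$, which witnesses $c\posnotbelow x$.

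For the forward implications of~(ii) and~(iv), I assume excluded middle. Given $x\not\below y$, note that $\dset y$ is Scott closed (Example~\ref{ex:Scott-opens}), so $D\setminus\dset y$ is Scott open by Proposition~\ref{complement-of-closed-is-open-em}; it contains $x$ (as $x\not\below y$) and omits $y$ (as $y\below y$), so $x\posnotbelow y$, which proves the converse of~(i). For the converse of~(iii), suppose $x\neq y$; with excluded middle $x\below y$ is decidable, and if $x\not\below y$ the previous sentence gives $x\posnotbelow y$, while if $x\below y$ then antisymmetry together with $x\neq y$ yields $y\not\below x$, hence $y\posnotbelow x$. Either way $x\apart y$.

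For the backward implications I realize $\Sierp$ as $\powerset{\set{*}}$ (Definition~\ref{def:Sierpinski-domain}), so that its elements are exactly the truth values. Given a proposition $P$, put $p\coloneqq\set{x\in\set{*}\mid P}$; then $p=\bot$ iff $\lnot P$, so $p\neq\bot$ iff $\lnot\lnot P$, and likewise $p\not\below\bot$ iff $\lnot\lnot P$. The key point is that $p\posnotbelow\bot$ holds iff $P$ does: if $P$ then $p=\top$ and the set $\upset\top=\set{\top}$, which is Scott open since $\top$ is compact, separates $p$ from $\bot$; conversely, given a Scott open $U$ with $p\in U$ and $\bot\notin U$, I apply Scott openness of $U$ to the directed set $\set{\bot}\cup\set{\top\mid P}$, whose supremum in $\Sierp$ is $p$, to obtain a member of that set lying in $U$ --- since $\bot\notin U$ this member must be $\top$, and $\top$ belongs to the set only when $P$ holds. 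Finally $\bot\posnotbelow p$ is always false, since an upper set containing $\bot$ must be all of $\Sierp$. Consequently, taking $x\coloneqq p$ and $y\coloneqq\bot$, the converse of~(i) reads $\lnot\lnot P\To P$ and the converse of~(iii) also reads $\lnot\lnot P\To P$; as $P$ is arbitrary, both are precisely double-negation elimination, which is equivalent to excluded middle.

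The routine content is items~(i),~(iii),~(v) together with the excluded-middle directions; the step I expect to require the most care is the implication $p\posnotbelow\bot\To P$ in the $\Sierp$ argument, where one must genuinely use that Scott opens are inaccessible by directed suprema --- the fact that a Scott open is merely an upper set does not suffice to extract $P$.
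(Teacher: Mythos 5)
Your proof is correct and follows the route the statement itself signposts: (i), (iii), (v) by the upper-set/compactness arguments, the excluded-middle direction via complements of Scott closed sets, and the converse direction by encoding a proposition \(P\) as the element \(\set{x \in \set{*} \mid P}\) of \(\Sierp\) and observing that \(p \posnotbelow \bot\) is equivalent to \(P\) while \(p \not\below \bot\) and \(p \neq \bot\) are only equivalent to \(\lnot\lnot P\), yielding double-negation elimination. The only blemish is a mis-citation: the fact that excluded middle makes complements of Scott closed sets Scott open is the (unlabelled) converse direction recorded in Lemma~\ref{complement-of-open-is-closed}, whereas Proposition~\ref{complement-of-closed-is-open-em} states the opposite implication.
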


With excluded middle, complements of Scott closed subsets are Scott open. In
particular, the subset \(\set{x \in D \mid x \not\below y}\) is Scott open for
any element \(y\) of a dcpo \(D\). Constructively, we have the following result.
\begin{proposition}\label{interior-of-complement}
  For any element \(y\) of a dcpo \(D\), the Scott interior of
  \(\set{x \in D \mid x \not\below y}\) is given by the subset
  \(\set{x \in D \mid x \posnotbelow y}\), where we recall that
  \(x \posnotbelow y\) means that there exists a Scott open containing \(x\)
  but not \(y\).
\end{proposition}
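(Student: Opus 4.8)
The plan is to show two inclusions: that $\set{x \in D \mid x \posnotbelow y}$ is a Scott open subset contained in $\set{x \in D \mid x \not\below y}$, and that it contains every Scott open subset of $\set{x \in D \mid x \not\below y}$; together these identify it as the interior. The containment $\set{x \mid x \posnotbelow y} \subseteq \set{x \mid x \not\below y}$ is exactly Proposition~\ref{posnotbelow-notbelow}(i), so that costs nothing.

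Next I would check that $V \coloneqq \set{x \in D \mid x \posnotbelow y}$ is Scott open. It is an upper set: if $x \in V$, witnessed by a Scott open $U$ with $x \in U$ and $y \notin U$, and $x \below x'$, then $x' \in U$ (upper sets), and the same $U$ witnesses $x' \posnotbelow y$. For directed suprema, suppose $S \subseteq D$ is directed with $\dirsup S \in V$, witnessed by a Scott open $U$ with $\dirsup S \in U$, $y \notin U$; by Scott openness of $U$ there is $s \in S \cap U$, and then $U$ witnesses $s \posnotbelow y$, so $s \in V$. Hence $V$ is Scott open.

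Finally, for maximality: let $W$ be any Scott open subset with $W \subseteq \set{x \mid x \not\below y}$, and let $x \in W$. Then $W$ itself is a Scott open containing $x$; I need $y \notin W$. But if $y \in W$ then $W \subseteq \set{x \mid x \not\below y}$ would give $y \not\below y$, contradicting reflexivity of $\below$. Hence $y \notin W$, so $W$ witnesses $x \posnotbelow y$, i.e.\ $x \in V$. Thus $W \subseteq V$, and since $V$ is itself a Scott open contained in the complement-set, it is the largest such, i.e.\ the Scott interior.

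I do not anticipate a genuine obstacle here; the only point requiring the slightest care is the directed-suprema clause for showing $V$ is Scott open, where one must remember to re-use the \emph{same} witnessing open $U$ for the element $s \in S$ rather than producing a new one. Everything else is a direct unfolding of definitions, using only that $\set{x \mid x \posnotbelow y}$ is, essentially by construction, a union of Scott opens avoiding $y$.
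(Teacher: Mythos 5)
Your proposal is correct and matches the paper's argument: both rest on the observation that a Scott open $W$ is contained in $\set{x \in D \mid x \not\below y}$ precisely when $y \notin W$ (one direction by reflexivity of $\below$, the other because opens are upper sets), so that $\set{x \in D \mid x \posnotbelow y}$ is exactly the union of all opens contained in that set. Your handling of the two Scott-openness clauses for $V$ and the constructive proof-by-negation for $y \notin W$ are all fine.
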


\begin{proposition}\label{complement-of-closed-is-open-em}
  If the complement of every Scott closed subset of the Sierpi\'nski domain
  \(\Sierp\) is Scott open, then excluded middle follows.
\end{proposition}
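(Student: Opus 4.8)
The plan is to reduce the claim to Proposition~\ref{posnotbelow-notbelow}(ii), whose ``in particular'' clause already asserts that excluded middle follows whenever the implication $x \not\below y \to x \posnotbelow y$ holds for all elements $x,y$ of $\Sierp$. So it suffices to establish this implication from the hypothesis that complements of Scott closed subsets of $\Sierp$ are Scott open.

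Concretely, I would fix $x,y \in \Sierp$ with $x \not\below y$. By Example~\ref{ex:Scott-opens}, the principal lower set $\dset y = \set{z \in \Sierp \mid z \below y}$ is Scott closed --- this requires no hypotheses and holds in any dcpo --- so by assumption its complement $U \coloneqq \set{z \in \Sierp \mid z \not\below y}$ is Scott open. I then claim that $U$ itself witnesses $x \posnotbelow y$: indeed $x \in U$ because $x \not\below y$, while $y \notin U$ because $y \below y$. Hence $x \posnotbelow y$, and since $x,y$ were arbitrary, Proposition~\ref{posnotbelow-notbelow}(ii) delivers excluded middle.

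Given Proposition~\ref{posnotbelow-notbelow}, there is essentially no obstacle; the real work has been absorbed into that proposition. If one wanted a self-contained argument, the step carrying the constructive content is the ``inaccessible by directed suprema'' clause in the definition of Scott openness. For a proposition $P$ one would apply it to the subset
\[
  S \coloneqq \set{u \in \Sierp \mid u = \bot \lor (u = \top \land (P \lor \lnot P))}
\]
of $\Sierp = \powerset{\set{*}}$. A short case analysis on the defining disjunction shows $S$ is directed, and its supremum (a union of subsets of $\set{*}$) is the truth value $P \lor \lnot P$; this is $\neq \bot$ since $\lnot\lnot(P \lor \lnot P)$ holds intuitionistically. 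Taking $\dset\bot = \set{\bot}$ as the Scott closed set, Scott openness of its complement $\set{u \in \Sierp \mid u \neq \bot}$ produces some $u \in S$ with $u \neq \bot$; the first disjunct is then impossible, so $u = \top$, which forces $P \lor \lnot P$. This is the mechanism packaged inside Proposition~\ref{posnotbelow-notbelow}(ii), and the only point at which the reasoning is more than bookkeeping.
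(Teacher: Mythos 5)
Your proof is correct and follows essentially the same route as the paper: observe that \(\dset y\) is Scott closed, so the hypothesis makes its complement \(\set{z \mid z \not\below y}\) a Scott open containing \(x\) but not \(y\) whenever \(x \not\below y\), and then invoke the ``in particular'' clause of Proposition~\ref{posnotbelow-notbelow}(ii). The self-contained unfolding via the directed set \(S\) and \(\lnot\lnot(P \lor \lnot P)\) is also sound and accurately identifies where the constructive content lives.
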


An element \(x\) of a pointed dcpo may be said to be nontrivial if
\(x \neq \bot\). Given our notion of apartness, we might consider the
constructively stronger \(x \apart \bot\). We show that this is related to
Johnstone's notion of positivity~\cite[p.~98]{Johnstone1984}.
In~\cite[Definition~25]{deJongEscardo2021b} we adapted Johnstone's positivity
from locales to a general class of posets that includes dcpos. Here we give an
equivalent, but simpler, definition just for pointed dcpos.
\begin{definition}
  An element \(x\) of a pointed dcpo \(D\) is \emph{positive} if every
  semidirected subset \(S \subseteq D\) satisfying \(x \below \dirsup S\) is
  inhabited (and hence directed).
\end{definition}

\begin{proposition}
  For every element \(x\) of a pointed dcpo, if \(x \apart \bot\), then \(x\) is
  positive.
  In the other direction, if \(D\) is a continuous pointed dcpo with a basis
  satisfying \eqref{equals-bot-decidable}, then every positive element of \(D\)
  is apart from~\(\bot\).
\end{proposition}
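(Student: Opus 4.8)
The plan is to treat the two implications separately, using throughout the remark that \(\bot \posnotbelow x\) is impossible: a Scott open is an upper set and \(\bot\) is least, so any Scott open containing \(\bot\) contains every element. Consequently \(x \apart \bot\) amounts to \(x \posnotbelow \bot\), i.e.\ there is a Scott open \(U\) with \(x \in U\) and \(\bot \notin U\).

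For the forward implication, suppose \(x \apart \bot\) and fix such a \(U\). To see that \(x\) is positive, let \(S \subseteq D\) be semidirected with \(x \below \dirsup S\); recall that \(\dirsup S\) exists because \(D\) is a pointed dcpo, and that \(S \cup \set{\bot}\) is directed with the same supremum. Since \(U\) is an upper set, \(x \in U\) and \(x \below \dirsup S\), we get \(\dirsup S \in U\), so by Scott openness of \(U\) there is \(s \in S \cup \set{\bot}\) with \(s \in U\). As \(\bot \notin U\), this \(s\) cannot be \(\bot\), hence \(s \in S\) and \(S\) is inhabited.

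For the converse, assume \(D\) is a continuous pointed dcpo with a basis \(B\) satisfying \eqref{equals-bot-decidable}, and let \(x\) be positive. By Lemma~\ref{posnotbelow-criterion} it suffices to find \(b \in B\) with \(b \ll x\) and \(b \not\below \bot\); and for \(b \in B\) (indeed for any \(b\)) we have \(b \not\below \bot\) iff \(b \neq \bot\), by antisymmetry. So consider \(S \coloneqq \set{b \in B \mid b \ll x} \setminus \set{\bot}\). This \(S\) is semidirected: given \(a,b \in S\), directedness of \(B \cap \ddset{x}\) yields \(c \in B \cap \ddset{x}\) with \(a \below c\) and \(b \below c\), and \(c \neq \bot\) since otherwise \(a \below c = \bot\) forces \(a = \bot\). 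Using \eqref{equals-bot-decidable}, every \(b \in B \cap \ddset{x}\) is either equal to \(\bot\), hence below \(\dirsup S\), or lies in \(S\), hence below \(\dirsup S\); since \(\dirsup\pa*{B \cap \ddset{x}} = x\), this shows \(x \below \dirsup S\). Positivity of \(x\) now makes \(S\) inhabited, producing the required basis element.

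The step I expect to matter most is the last one: \eqref{equals-bot-decidable} is used precisely to see that \(S = \set{b \in B \mid b \ll x} \setminus \set{\bot}\) still approximates \(x\), i.e.\ that discarding \(\bot\) from the basis approximants of \(x\) does no harm. Everything else --- the impossibility of \(\bot \posnotbelow x\), the semidirectedness arguments, and the whole forward direction --- needs only antisymmetry, the definition of Scott open, and the fact that pointed dcpos have suprema of semidirected subsets.
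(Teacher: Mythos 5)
Your proof is correct and matches the paper's own argument in essentially every step: the observation that \(\bot \posnotbelow x\) is vacuous, the use of \(S \cup \set{\bot}\) and Scott openness for the forward direction, and the semidirected set \(\set{b \in B \mid b \ll x} \setminus \set{\bot}\) together with \eqref{equals-bot-decidable} and Lemma~\ref{posnotbelow-criterion} for the converse. Nothing to add.
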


\section{The apartness topology}\label{sec:apartness-topology}
In~\cite[Section~2.2]{BridgesVita2011}, Bridges and V\^i\c{t}\v{a} start with a
topological space \(X\) equipped with an apartness relation~\({\apart}\) and,
using the topology and apartness, define a second topology on \(X\), known as
the \emph{apartness topology}. A~natural question is whether the original
topology and the apartness topology coincide. For example, if \(X\) is a metric
space and we set two points of \(X\) to be apart if their distance is strictly
positive, then the Bridges-V\^i\c{t}\v{a} apartness topology and the topology
induced by the metric coincide~\cite[Proposition~2.2.10]{BridgesVita2011}.
We show, assuming a modest \(\lnot\lnot\)-stability condition that holds in
examples of interest, that the Scott topology on a continuous dcpo with the
intrinsic apartness relation coincides with the apartness topology.
We start by repeating some basic definitions and results of Bridges and
V\^i\c{t}\v{a}.  Recalling Warning~\ref{warning}, we remind the reader familiar
with~\cite{BridgesVita2011} that Bridges and V\^i\c{t}\v{a} write \(\neq\) and
use the word \emph{inequality} for what we denote by \(\apart\) and call
\emph{apartness}.

In constructive mathematics, positively defined notions are usually more useful
than negatively defined ones. We already saw examples of this: \({\apart}\)
versus \({\neq}\) and \({\posnotbelow}\) versus \({\not\below}\). We now use an
apartness to give a positive definition of the complement of a set.
\begin{definition}
  Given a subset \(A\) of a set \(X\) with an apartness \(\apart\) we define the
  \emph{logical complement} and the \emph{complement} respectively as
  \begin{enumerate}
  \item \(\lcompl A \coloneqq \set{x \in X \mid x \not\in A} =
    \set{x \in X \mid \forall_{y \in A}\, x \neq y}\);
  \item \(\compl A \coloneqq \set{x \in X \mid \forall_{y \in A}\,x \apart y}\).
  \end{enumerate}
\end{definition}

\begin{definition}
  A topological space \(X\) equipped with an apartness \({\apart}\) satisfies
  the \emph{(topological) reverse Kolmogorov property} if for every open \(U\)
  and points \(x,y \in X\) with \(x \in U\) and \(y \not\in U\), we have
  \(x \apart y\).
\end{definition}

\begin{lemma}[Proposition~2.2.2 in \cite{BridgesVita2011}]
  \label{reverse-Kolmogorov-consequence}
  If a topological space \(X\) equipped with an apartness \(\apart\) satisfies
  the reverse Kolmogorov property, then for every subset \(A \subseteq X\), we
  have \(\interior{\lcompl A} = \interior{\compl A}\), where \(Y^\circ\)
  denotes the interior of \(Y\) in \(X\).
\end{lemma}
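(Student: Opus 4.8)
The plan is to prove the two inclusions $\interior{\compl A} \subseteq \interior{\lcompl A}$ and $\interior{\lcompl A} \subseteq \interior{\compl A}$ separately; the first is immediate and the second is where the reverse Kolmogorov property does its work. For the first inclusion, I would note that $\compl A \subseteq \lcompl A$: if $x \in \compl A$ then $x \apart y$ for all $y \in A$, and irreflexivity of the apartness (part~(i) of Definition~\ref{def:apartness}) upgrades each $x \apart y$ to $x \neq y$, so $x \in \lcompl A$. Since taking interiors is monotone, $\interior{\compl A} \subseteq \interior{\lcompl A}$ follows at once.

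For the substantive inclusion, I would set $U \coloneqq \interior{\lcompl A}$ and aim to show $U \subseteq \compl A$. Because $U$ is open and, by Definition~\ref{def:interior-closure}, $\interior{\compl A}$ is the \emph{largest} open subset of $\compl A$, establishing $U \subseteq \compl A$ suffices to conclude $U \subseteq \interior{\compl A}$. To prove $U \subseteq \compl A$, I would fix $x \in U$ and an arbitrary $y \in A$ and produce $x \apart y$. The key sub-step is that $y \notin U$: since $U \subseteq \lcompl A$, membership $y \in U$ would give $y \in \lcompl A$, i.e.\ $y \notin A$, contradicting $y \in A$. Now $U$ is an open set with $x \in U$ and $y \notin U$, so the reverse Kolmogorov property yields $x \apart y$ directly. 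As $y \in A$ was arbitrary, $x \in \compl A$.

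I do not expect a genuine obstacle: the argument is a direct unfolding of the definitions once one spots that it is enough to prove $U \subseteq \compl A$ rather than $U \subseteq \interior{\compl A}$. The only point worth a moment's care is constructive soundness of the step ``$y \in A$ together with $U \subseteq \lcompl A$ implies $y \notin U$'', but this is just composition of implications and uses no classical reasoning; similarly, the passage from $x \apart y$ to $x \neq y$ via irreflexivity is constructively valid. So the whole proof goes through unchanged in the intended constructive setting.
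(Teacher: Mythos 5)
Your proof is correct and is the standard argument for this lemma: the easy inclusion follows from irreflexivity plus monotonicity of the interior operator, and the substantive one from showing that the open set \(\interior{\lcompl A}\) is already contained in \(\compl A\) via the reverse Kolmogorov property. All steps, including the derivation of \(y \notin U\) by assuming \(y \in U\) and contradicting \(y \in A\), are constructively sound, and this matches the intended proof.
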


\begin{definition}
  For an element \(x\) of a topological space \(X\) with an apartness \(\apart\)
  and a subset \(A \subseteq X\), we write \(x \setapart A\) if
  \(x \in \interior{\compl A}\). This gives rise to the \emph{apartness
    complement}: \(\acompl A \coloneqq \set{x \in X \mid x \bowtie A}\).
  Subsets of the form \(\acompl A\) are called \emph{nearly open}.
  The \emph{apartness topology} on \(X\) is the topology whose basic opens are
  the nearly open subsets of~\(X\).
\end{definition}

\begin{lemma}[Proposition~2.2.7 in \cite{BridgesVita2011}]
  \label{nearly-open-is-open}
  Every nearly open subset of \(X\) is open in the original topology of~\(X\).
\end{lemma}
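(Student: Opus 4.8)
The plan is to unfold the definitions and show that an arbitrary nearly open set $-A$ can be written as a union of original opens, hence is open. Recall that $x \in {-A}$ means $x \setapart A$, i.e.\ $x \in \interior{\compl A}$, the interior (in the original topology) of the complement $\compl A = \set{z \in X \mid \forall_{y \in A}\, z \apart y}$.

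First I would observe that $-A = \bigcup \set{\interior{\compl A} \mid A' \subseteq X,\ A' = A}$ is not quite the right move; instead the cleaner route is to show directly that $-A$ is a union of the open sets $\interior{\compl A}$ as $A$ ranges over\ldots\ no, $A$ is fixed. So the real content is: show that $-A$ equals $\interior{\compl A}$, or at least that it is open. Let me reconsider. By definition $-A = \set{x \in X \mid x \in \interior{\compl A}}$, which is literally the set $\interior{\compl A}$. Since the interior of any subset is open in the original topology by Definition~\ref{def:interior-closure}, we conclude immediately that $-A = \interior{\compl A}$ is open. The apparent subtlety is only notational: $x \setapart A$ was defined pointwise, but the collection of such $x$ is exactly one fixed open set, not an indexed union.

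Therefore the key steps are: (1)~unfold $-A = \set{x \mid x \setapart A}$; (2)~unfold $x \setapart A$ as $x \in \interior{\compl A}$; (3)~note that these two steps identify $-A$ with the single set $\interior{\compl A}$; (4)~invoke the fact that the interior of a subset is, by definition, open in the original topology. No choice or excluded middle is needed, and $\compl A$ need not itself be open --- we only use its interior.

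The main (very mild) obstacle is purely one of bookkeeping with the definitions: making sure that "nearly open" as a \emph{property of points collected into a set} coincides on the nose with the \emph{single} open set $\interior{\compl A}$, rather than with some union over a family of sets. Once that identification is made explicit, openness is immediate from the definition of interior. I would also remark in passing that this shows the basic opens of the apartness topology are genuine opens of $X$, so the apartness topology is coarser than (contained in) the original topology.
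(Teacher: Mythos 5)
Your proof is correct and is essentially the same one-line argument as the source (the paper itself defers to Bridges--V\^i\c{t}\v{a} for this): the nearly open set \(\acompl A = \set{x \in X \mid x \setapart A}\) is by definition exactly the single set \(\interior{\compl A}\), which is open because interiors are open. The only improvement would be to trim the written-out false starts, since the identification \(\acompl A = \interior{\compl A}\) is immediate from the definitions.
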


The following are original contributions.
\begin{definition}
  Say that a basis \(B\) for a topological space \(X\) is
  \emph{\(\lnot\lnot\)-stable} if \(U = \interior{\lnot\lnot U}\) for every open
  \(U \in B\).
  Note that \(U \subseteq \interior{\lnot\lnot U}\) holds for every open \(U\),
  so the relevant condition is that \(\interior{\lnot\lnot U} \subseteq U\) for
  every basic open \(U\).
\end{definition}
Examples of such bases will be provided by Theorem~\ref{apartness-Scott-topology} below.

\begin{lemma}\label{matching-topologies-criterion}
  If a topological space \(X\) equipped with an apartness \(\apart\) satisfies
  the reverse Kolmogorov property and has a \(\lnot\lnot\)-stable basis,
  then the original topology on \(X\) and the apartness topology on \(X\)
  coincide, i.e.\ a subset of \(X\) is open (in the original topology) if and
  only if it is nearly open.
\end{lemma}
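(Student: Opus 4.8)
The plan is to establish the two inclusions between the original topology and the apartness topology separately. One direction is free: by Lemma~\ref{nearly-open-is-open}, every nearly open subset of \(X\) is open in the original topology, so any subset that is open in the apartness topology (being a union of nearly open sets) is open in the original topology. Hence it remains to show that every open \(U\) in the original topology is nearly open, i.e.\ \(U = \acompl A\) for some \(A \subseteq X\); since nearly open sets are closed under unions and the basic opens of \(X\) generate the topology, it suffices to do this for every basic open \(U \in B\), where \(B\) is the given \(\lnot\lnot\)-stable basis.

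So fix a basic open \(U \in B\) and take the natural candidate \(A \coloneqq \lnot U = \set{x \in X \mid x \notin U}\), aiming to prove \(U = \acompl A = \set{x \in X \mid x \setapart A} = \interior{\compl A}\), where in the last step I unfold the definition of \(\setapart\). First I would use the reverse Kolmogorov property: since \(X\) satisfies it, Lemma~\ref{reverse-Kolmogorov-consequence} gives \(\interior{\compl A} = \interior{\lcompl A}\), so it is enough to show \(U = \interior{\lcompl A}\). Now \(\lcompl A = \lcompl\lcompl U = \set{x \mid \lnot\lnot(x \in U)} = \lnot\lnot U\) (using that \(A = \lnot U\) unpacks to the complement and the logical complement of the logical complement is exactly double negation). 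Therefore \(\interior{\lcompl A} = \interior{\lnot\lnot U}\), and the \(\lnot\lnot\)-stability of the basis \(B\) says precisely that \(U = \interior{\lnot\lnot U}\) for basic opens \(U\). Chaining these equalities yields \(U = \interior{\lnot\lnot U} = \interior{\lcompl A} = \interior{\compl A} = \acompl A\), so \(U\) is nearly open, and we are done.

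The only real care-point is the bookkeeping identifying \(\acompl(\lnot U)\) with \(\interior{\lnot\lnot U}\): one must check that \(\lcompl(\lnot U)\), as the set \(\set{x \mid \lnot(x \notin U)}\), is literally \(\set{x \mid \lnot\lnot(x \in U)}\), which is the standard constructive reading of \(\lnot\lnot U\) as a subobject, and that \(\acompl A\) unfolds through \(\setapart\) and the definition of the apartness complement to \(\interior{\compl A}\). Everything else is a direct appeal to the already-established lemmas. I do not expect a genuine obstacle here: the statement is essentially a bundling of Lemmas~\ref{reverse-Kolmogorov-consequence} and \ref{nearly-open-is-open} with the definition of \(\lnot\lnot\)-stability chosen exactly so that this argument goes through. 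The subtlety the proof is designed to handle is that constructively \(\lnot\lnot U\) need not be open and need not equal \(U\), so one genuinely needs both the reverse Kolmogorov hypothesis (to pass from the apartness-complement \(\compl\) to the logical complement \(\lcompl\)) and the \(\lnot\lnot\)-stability of the basis (to collapse \(\interior{\lnot\lnot U}\) back to \(U\)); neither hypothesis is dispensable, and the proof should remark on this.
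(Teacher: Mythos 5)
Your proof is correct and matches the paper's argument: one inclusion is exactly Lemma~\ref{nearly-open-is-open}, and for the other you show that each basic open \(U\) equals \(\acompl(\lcompl U) = \interior{\compl(\lcompl U)} = \interior{\lcompl\lcompl U} = \interior{\lnot\lnot U} = U\), using Lemma~\ref{reverse-Kolmogorov-consequence} for the middle step and \(\lnot\lnot\)-stability for the last. The only slight imprecision is your claim that nearly open sets are closed under unions --- under the paper's definition (sets literally of the form \(\acompl A\)) this need not hold, since \(\acompl A \cup \acompl B \subseteq \acompl(A \cap B)\) can be strict --- but all you actually need is that a union of nearly opens is open in the apartness topology, which is immediate because the nearly opens are by definition its basic opens.
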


\begin{theorem}\label{apartness-Scott-topology}
  Let \(D\) be a continuous dcpo with a basis \(B\). Each of the following
  conditions on the basis \(B\) implies that \(\set{\upupset b \mid b \in B}\) is a
  \(\lnot\lnot\)-stable basis for the Scott topology on \(D\):
  \begin{enumerate}
  \item For every \(a,b\in B\), if \(\lnot\lnot(a \ll b)\), then \(\pa*{a \ll b}\).
  \item For every \(a,b\in B\), if \(\lnot\lnot(a \below b)\), then \(\pa*{a \below b}\).
  \item \(B\) satisfies \eqref{way-below-decidable}
  \item \(B\) satisfies \eqref{below-decidable}
  \end{enumerate}
  Hence, if one of these conditions holds, then the Scott topology on \(D\)
  coincides with the apartness topology on~\(D\) with respect to the intrinsic
  apartness, i.e.\ a subset of \(D\) is Scott open if and only if it is nearly
  open.
\end{theorem}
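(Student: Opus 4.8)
The plan is to reduce the whole statement to Lemma~\ref{matching-topologies-criterion}, whose hypotheses are (a) the reverse Kolmogorov property and (b) the existence of a $\lnot\lnot$-stable basis. Ingredient~(a) is immediate for the Scott topology equipped with the intrinsic apartness: if $U$ is Scott open, $x \in U$ and $y \notin U$, then $U$ itself is a Scott open containing $x$ but not $y$, so $x \posnotbelow y$ and hence $x \apart y$. Thus the substantive task is to prove that each of the conditions (i)--(iv) makes $\set{\upupset b \mid b \in B}$ --- which is a basis for the Scott topology by Lemma~\ref{basis-basis} --- into a $\lnot\lnot$-stable basis, i.e.\ $\interior{\lnot\lnot\upupset b} \subseteq \upupset b$ for each $b \in B$, the reverse inclusion being automatic.

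First I would clear away (iii) and (iv): for a decidable proposition $P$ we have $\lnot\lnot P \to P$, so (iii) implies (i) and (iv) implies (ii), and it suffices to argue from (i) and from (ii). I would also record the elementary fact that $a \ll b$ implies $a \below b$, so that, by monotonicity of $\lnot\lnot$, condition (ii) gives: for $a,b \in B$, $\lnot\lnot(a \ll b)$ implies $a \below b$.

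Now fix $b \in B$ and take $x \in \interior{\lnot\lnot\upupset b}$. Since this interior is Scott open and $\set{\upupset c \mid c \in B}$ is a basis, there is some $c \in B$ with $c \ll x$ and $\upupset c \subseteq \lnot\lnot\upupset b$, the latter meaning $c \ll y \Rightarrow \lnot\lnot(b \ll y)$ for all $y \in D$. Using the interpolation property (Proposition~\ref{interpolation}) I would pick $c' \in B$ with $c \ll c' \ll x$; then $\lnot\lnot(b \ll c')$. Under (i) this yields $b \ll c'$ outright; under (ii) it yields $b \below c'$ via the remark above. In either case $b \ll x$ follows: under (i) from $b \ll c' \ll x$ by transitivity, and under (ii) from $b \below c' \ll x$ by Lemma~\ref{way-below-basics}. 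Hence $x \in \upupset b$. This establishes $\lnot\lnot$-stability, and Lemma~\ref{matching-topologies-criterion} then gives the coincidence of the Scott topology and the apartness topology.

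The step I expect to be the crux is the one just above: the hypotheses (i)/(ii) are available only for \emph{pairs of basis elements}, whereas the point $x$ need not lie in $B$, so one must use interpolation inside the basis to insert an element $c' \in B$ strictly between the witness $c$ and $x$. Everything else is routine bookkeeping with the way-below relation and double negation.
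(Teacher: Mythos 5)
Your proposal is correct and follows essentially the same route as the paper: reduce to Lemma~\ref{matching-topologies-criterion} via the reverse Kolmogorov property, dispose of (iii) and (iv) by noting decidability gives $\lnot\lnot$-stability, and verify $\interior{\lnot\lnot\upupset b} \subseteq \upupset b$ by producing a basis element way below $x$ lying in $\lnot\lnot\upupset b$. The only cosmetic difference is that the paper obtains that witness directly from Scott openness of the interior together with $x = \dirsup(B \cap \ddset x)$, whereas you route through the topological basis and an extra interpolation step; both are valid.
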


\begin{proof}
  The final claim follows from Lemma~\ref{matching-topologies-criterion} and the
  fact that the intrinsic apartness satisfies the reverse Kolmogorov property
  (by definition). Moreover, (iii) implies (i) and (iv) implies (ii). So it
  suffices to show that if (i) or (ii) holds, then
  \(\set{\upupset a \mid a \in B}\) is a \(\lnot\lnot\)-stable basis for the
  Scott topology on \(D\), viz.\ that
  \(\interior{\lnot\lnot\upupset a} \subseteq \upupset a\) for every
  \(a \in B\).  Let \(a \in B\) be arbitrary and suppose that
  \(x \in \interior{\lnot\lnot\upupset a}\).  Using Scott openness and
  continuity of \(D\), there exists \(b \in B\) such that \(b \ll x\) and
  \(b \in \lnot\lnot\upupset a\). The latter just says that
  \(\lnot\lnot (a \ll b)\).
  So if condition~(i) holds, then we get \(a \ll b\), so \(a \ll x\) and
  \(x \in \upupset a\), as desired.
  Now suppose that condition~(ii) holds. From \(\lnot\lnot(a \ll b)\), we get
  \(\lnot\lnot(a \below b)\) and hence, \(a \below b\) by condition~(ii). So
  \(a \below b \ll x\) and \(x \in \upupset a\), as wished.
\end{proof}

\section{Tightness, cotransitivity and sharpness}\label{sec:tightness-cotransitivity-sharpness}
\begin{definition}
  An apartness relation \(\apartvar\) on a set \(X\) is \emph{tight} if
  \(\lnot(x \apartvar y)\) implies \(x = y\) for every \(x,y \in
  X\). The~apartness is \emph{cotransitive} if \(x\apartvar y\) implies the
  disjunction of \(x \apartvar z\) and \(x \apartvar y\) for every \(x,y,z \in X\).
\end{definition}

\begin{lemma}\label{tight-implies-notnot-sep}
  If \(X\) is a set with a tight apartness, then \(X\) is
  \(\lnot\lnot\)-separated, viz.\ \(\lnot\lnot(x=y)\) implies \(x = y\) for
  every \(x,y \in X\).
\end{lemma}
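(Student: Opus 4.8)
The plan is to reduce $\lnot\lnot$-separatedness to tightness via a triple-negation argument. The only ingredients are irreflexivity of $\apartvar$, tightness of $\apartvar$, and the intuitionistically valid collapse $\lnot\lnot\lnot P \leftrightarrow \lnot P$.

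First I would record the elementary observation that $x = y$ implies $\lnot(x \apartvar y)$: if $x = y$ and $x \apartvar y$ both held, then substituting $x$ for $y$ would give $x \apartvar x$, contradicting irreflexivity. Contraposing this gives $x \apartvar y \to x \neq y$, and applying $\lnot\lnot$-monotonicity (or simply reasoning under the assumption $\lnot\lnot(x=y)$) yields that $\lnot\lnot(x = y)$ implies $\lnot\lnot\lnot(x \apartvar y)$, equivalently $\lnot(x \apartvar y)$.

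Now the argument finishes in one line: assume $\lnot\lnot(x = y)$; by the previous step we obtain $\lnot(x \apartvar y)$; tightness then delivers $x = y$, which is exactly $\lnot\lnot$-separatedness.

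I do not expect any genuine obstacle here; the statement is a purely logical consequence of the definitions, and the only point requiring a little care is the intuitionistic move from $\lnot\lnot(x=y)$ to $\lnot(x \apartvar y)$, which one should spell out via the contrapositive $x \apartvar y \to \lnot(x=y)$ and the triple-negation law rather than by any appeal to excluded middle.
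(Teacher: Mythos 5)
Your proof is correct and is the standard (essentially the only) argument: from irreflexivity one gets \(x \apartvar y \to x \neq y\), hence \(\lnot\lnot(x=y) \to \lnot(x \apartvar y)\) intuitionistically, and tightness finishes. The paper defers this proof to its extended version, but your reasoning is exactly the intended one.
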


It will be helpful to employ the following positive (but classically equivalent)
formulation of \(\pa*{x \below y} \land \pa*{x \neq y}\) from
\cite[Definition~20]{deJongEscardo2021b}.

\begin{definition}
  An element \(x\) of a dcpo \(D\) is \emph{strictly below} an element \(y\),
  written \(x \sbelow y\), if \(x \below y\) and for every \(z \aboveorder y\)
  and proposition \(P\), the equality
  \(z = \dirsup\pa*{\set{x} \cup \set{z \mid P}}\) implies \(P\).
\end{definition}

We can relate the above notion to the intrinsic apartness, but, although we do
not have a counterexample, we believe \(x \sbelow y\) to be weaker than
\(x \apart y\) in general.
\begin{proposition}\label{apart-implies-sbelow}
  If \(x \below y\) are elements of a dcpo \(D\) with \(x \apart y\), then
  \(x \sbelow y\).
\end{proposition}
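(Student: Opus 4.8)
The plan is to unpack both the hypothesis \(x \apart y\) and the conclusion \(x \sbelow y\) and to observe that, once \(x \below y\) is assumed, only one of the two disjuncts defining \(\apart\) can possibly hold.

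First I would note that \(x \posnotbelow y\) is incompatible with \(x \below y\): every Scott open is an upper set, so a Scott open containing \(x\) automatically contains \(y\) whenever \(x \below y\) (equivalently, invoke Lemma~\ref{order-specialization}). Hence from \(x \apart y\), which by definition is \(\pa*{x \posnotbelow y} \lor \pa*{y \posnotbelow x}\), together with \(x \below y\), we may extract a Scott open \(U\) with \(y \in U\) and \(x \notin U\).

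It remains to check the defining clause of \(x \sbelow y\) (the clause \(x \below y\) being assumed): given \(z \aboveorder y\) and a proposition \(P\) with \(z = \dirsup\pa*{\set{x} \cup \set{z \mid P}}\), we must deduce \(P\). Here I would use the open \(U\) just found. Since \(y \in U\), \(y \below z\) and \(U\) is an upper set, we have \(z \in U\). The set \(S \coloneqq \set{x} \cup \set{z \mid P}\) is directed: it is inhabited by \(x\), and semidirected because \(x \below y \below z\) makes \(z\) an upper bound of \(S\) that lies in \(S\). So \(\dirsup S\) is a genuine directed supremum, equal to \(z \in U\), and Scott openness of \(U\) yields some \(s \in S\) with \(s \in U\). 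But \(s \in S\) means \(s = x\) or \(\pa*{s = z \text{ and } P}\); the first case contradicts \(x \notin U\), so \(P\) holds, as required.

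I do not expect a genuine obstacle; the only point requiring a moment's care is confirming that \(\set{x} \cup \set{z \mid P}\) really is directed, so that \(z = \dirsup\pa*{\set{x} \cup \set{z \mid P}}\) refers to a directed supremum to which Scott openness applies — and this is precisely where the chain \(x \below y \below z\) is used.
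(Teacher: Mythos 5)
Your proof is correct and follows essentially the same route as the paper's: since \(x \below y\) forces every Scott open containing \(x\) to contain \(y\), the apartness must be witnessed by an open \(U\) with \(y \in U\) and \(x \notin U\), and then Scott openness applied to the directed set \(\set{x} \cup \set{z \mid P}\) (whose supremum \(z\) lies in \(U\) because \(U\) is an upper set) extracts \(P\). Your extra care in checking that \(\set{x} \cup \set{z \mid P}\) is directed via \(x \below y \below z\) is exactly the right point to verify.
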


\begin{example}
  In the Sierpi\'nski domain \(\Sierp\) we have \(\bot \sbelow \top\). In the
  powerset \(\powerset{X}\) of some set \(X\), the empty set is strictly below a
  subset \(A\) of \(X\) if and only if \(A\) is inhabited. More generally, if
  \(A \subseteq B\) are subsets of some set, then \(A \sbelow B\) holds if
  \(B \setminus A\) is inhabited, and if \(A\) is a decidable subset and
  \(A \sbelow B\), then \(B \setminus A\) is inhabited.
\end{example}

The following shows that we cannot expect tight or cotransitive apartness
relations on nontrivial dcpos.

\begin{theorem}\label{tight-cotransitive-wem}
  Let \(D\) be a dcpo with an apartness relation \({\apartvar}\).
  \begin{enumerate}
  \item If \(D\) has elements \(x \sbelow y\), then tightness of \({\apartvar}\)
    implies excluded middle.
  \item If \(D\) has elements \(x \below y\) with \(x \apartvar y\), then
    cotransitivity of \({\apartvar}\) implies weak excluded middle.
  \end{enumerate}
\end{theorem}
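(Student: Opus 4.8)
The plan is to prove both parts by exhibiting, from the hypotheses, a supremum of the shape $\dirsup(\{x\}\cup\{y\mid P\})$ for an arbitrary proposition $P$, and then use the apartness axioms (tightness in (i), cotransitivity in (ii)) to decide $P$ or $\lnot\lnot P$ respectively. The key fixed observation is that for any proposition $P$, the set $S_P \coloneqq \{x\}\cup\{y\mid P\}$ is semidirected (since $x\below y$): any two of its elements are both below $y$ when $P$ holds, and otherwise $S_P=\{x\}$. If $D$ is merely a dcpo rather than pointed, we note $x\in S_P$ so $S_P$ is in fact directed, hence $z\coloneqq\dirsup S_P$ exists; clearly $x\below z\below y$, and $z=y$ iff $P$ holds in the strong sense built into the definition of $\sbelow$.

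For part (i), assume $x\sbelow y$ and that $\apartvar$ is tight. Fix an arbitrary proposition $P$ and form $z=\dirsup S_P$ as above. I would first dispatch the easy direction: if $P$ holds then $z=y$. For the converse, I would apply tightness to $x$ and $z$: if $\lnot(x\apartvar z)$ then $x=z$, and then $z=y$ would force $x=y$, contradicting $x\sbelow y$ (which entails $x\neq y$ via the definition taking $z\coloneqq y$, $P\coloneqq\bot$, giving $y=\dirsup\{x\}=x\Rightarrow\bot$, i.e.\ $x\neq y$; actually more directly $x\sbelow y$ plus $x=y$ contradicts irreflexivity once we know $\sbelow$ refines apartness — but it is cleanest to just derive a contradiction from $x=z=y$). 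Hence $x\apartvar z$, so $x\neq z$. Now the defining property of $x\sbelow y$ applied to $z\aboveorder y$?—here is the subtlety: the definition of $\sbelow$ quantifies over $z'\aboveorder y$, whereas our $z$ satisfies $z\below y$. So instead I would feed the equation $z = \dirsup(\{x\}\cup\{z\mid P\})$, valid since $\{x\}\cup\{y\mid P\}$ and $\{x\}\cup\{z\mid P\}$ have the same supremum when... — no. The right move is: use that $x\sbelow y$ and $x\below z\below y$ gives $x\sbelow z$ by a monotonicity lemma for $\sbelow$ (or prove it inline: for $w\aboveorder z$ we have $w\aboveorder y$, apply the hypothesis). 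Then apply the definition of $x\sbelow z$ with the witness $w\coloneqq z$ and the equation $z=\dirsup(\{x\}\cup\{z\mid P\})$ — which holds because $\dirsup S_P=z$ and $z\mid P$ adds nothing below $z$ — concluding $P$. Thus $P\vee\lnot P$ is decided, giving excluded middle. The main obstacle is exactly this bookkeeping: matching the "$z\aboveorder y$" shape in the definition of $\sbelow$ to the supremum we actually built, which I expect is handled by a small monotonicity property of $\sbelow$ under $x\below z\below y$.

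For part (ii), assume $x\below y$, $x\apartvar y$, and $\apartvar$ cotransitive. Fix a proposition $P$ and again set $z=\dirsup(\{x\}\cup\{y\mid P\})$, so $x\below z\below y$. Apply cotransitivity to $x\apartvar y$ with the third point $z$: we get $x\apartvar z$ or $z\apartvar y$. In the first case, $x\neq z$; but if $P$ held then $z=y$?—no, we want to derive $\lnot P$ from $x\apartvar z$. Indeed if $P$ then $z=y\below y$, and moreover... I would argue: $x\apartvar z$ implies $x\neq z$, and I claim $P\Rightarrow x=z$ is false in general, so instead derive: $\lnot\lnot P$? Let me restructure. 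The clean statement is weak excluded middle $\lnot P\vee\lnot\lnot P$. From cotransitivity: $x\apartvar z\ \vee\ z\apartvar y$. If $z\apartvar y$ then $z\neq y$, hence $\lnot P$ (since $P\Rightarrow z=y$). If $x\apartvar z$, I claim $\lnot\lnot P$: suppose $\lnot P$; then $S_P=\{x\}$ so $z=x$, contradicting $x\apartvar z$ (irreflexivity); hence $\lnot\lnot P$. So in either case we have decided $\lnot P$ or $\lnot\lnot P$, i.e.\ weak excluded middle. Here the main obstacle is milder — just the correct placement of double negations — and the argument does not need $\sbelow$, only $\below$ together with the apartness axioms.

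Overall the hard part is Part (i): reconciling the universally-quantified "$z\aboveorder y$" in the definition of strictly-below with the concrete sup $z$ satisfying $x\below z\below y$, which I would resolve by first establishing that $x\sbelow y$ and $x\below z\below y$ imply $x\sbelow z$, and then instantiating the definition of $x\sbelow z$ at the point $z$ itself with the evident equation.
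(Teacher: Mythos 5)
Your part~(ii) is correct and is essentially the paper's own argument: apply cotransitivity to \(x \apartvar y\) with the third point \(s_P \coloneqq \dirsup\pa*{\set{x} \cup \set{y \mid P}}\) and read off \(\lnot P\) from \(y \apartvar s_P\) and \(\lnot\lnot P\) from \(x \apartvar s_P\).

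Part~(i), however, has a genuine gap --- two, in fact. First, you cannot conclude \(x \apartvar z\): tightness only converts \(\lnot(x \apartvar z)\) into \(x = z\), and \(x = z\) is not contradictory (it genuinely holds whenever \(\lnot P\), since then \(z = \dirsup\set{x} = x\)); the most you could extract is that \(x \apartvar z\) implies \(\lnot\lnot P\). Second, and more seriously, the ``monotonicity lemma'' your whole plan rests on --- that \(x \sbelow y\) and \(x \below z \below y\) imply \(x \sbelow z\) --- is false. Your inline justification reverses an inequality: from \(w \aboveorder z\) and \(z \below y\) you cannot infer \(w \aboveorder y\) (that would need \(z \aboveorder y\)). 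The paper's own powerset example refutes the lemma outright: \(\emptyset \sbelow \set{*}\) always holds, yet for \(z \coloneqq \set{* \mid P}\) we have \(\emptyset \subseteq z \subseteq \set{*}\) while \(\emptyset \sbelow z\) holds if and only if \(z\) is inhabited, i.e.\ if and only if \(P\). So establishing \(x \sbelow z\) is exactly as hard as establishing \(P\), and your argument is circular. The paper instead proves~(i) by composing two lemmas: tightness makes \(D\) \(\lnot\lnot\)-separated (Lemma~\ref{tight-implies-notnot-sep}), and a \(\lnot\lnot\)-separated dcpo with elements \(x \sbelow y\) forces excluded middle (\cite[Theorem~38]{deJongEscardo2021b}). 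The idea of that second step, which is what your attempt is missing, is to take \(Q \coloneqq P \vee \lnot P\), note \(\lnot\lnot Q\), deduce \(\lnot\lnot\pa*{\dirsup\pa*{\set{x}\cup\set{y \mid Q}} = y}\), use \(\lnot\lnot\)-separatedness to upgrade this to the honest equality \(y = \dirsup\pa*{\set{x}\cup\set{y \mid Q}}\), and only then instantiate the definition of \(x \sbelow y\) at the point \(y\) itself (which \emph{is} \(\aboveorder y\)) to conclude \(Q\).
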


\begin{proof}
  (i): If \({\apartvar}\) is tight, then \(D\) is \(\lnot\lnot\)-separated by
  Lemma~\ref{tight-implies-notnot-sep}, which, since \(D\) has elements
  \(x \sbelow y\), implies excluded middle by
  \cite[Theorem~38]{deJongEscardo2021b}.

  (ii): For any proposition \(P\), consider the supremum \(s_P\) of the
  directed subset \(\set{x} \cup \set{y \mid P}\). If \({\apartvar}\) is
  cotransitive, then either \(x \apartvar s_P\) or \(y \apartvar s_P\). In the
  first case, \(x \neq s_P\), so that \(\lnot\lnot P\) must be the case. In the
  second case, \(y \neq s_P\), so that \(\lnot P\) must be true. Hence,
  \(\lnot P\) is decidable and weak excluded middle follows.
\end{proof}

\begin{theorem}\label{tight-cotransitive-em}\hfill
  \begin{enumerate}
  \item If excluded middle holds, then the intrinsic apartness on any dcpo is
    tight.
    In the other direction, if \(D\) is a continuous dcpo with elements
    \(x \below y\) that are intrinsically apart, then tightness of the intrinsic
    apartness on \(D\) implies excluded middle.
  \item If excluded middle holds, then the intrinsic apartness on any dcpo is
    cotransitive.
    In the other direction, if \(D\) is a continuous dcpo and has elements
    \(x \below y\) that are intrinsically apart, then cotransitivity of the
    intrinsic apartness on \(D\) implies excluded middle.
  \item In particular, if the intrinsic apartness on the Sierpi\'nski domain
    \(\Sierp\) is tight or cotransitive, then excluded middle follows.
  \end{enumerate}
\end{theorem}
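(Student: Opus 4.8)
The plan is to reduce (i), (iii) and the two implications ``excluded middle $\To$ \dots'' to facts already established, and to isolate the converse of (ii) — the only genuinely new argument — which requires exploiting the Scott open witnessing $x \apart y$, not merely the inequality it entails.

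For the forward directions of (i) and~(ii): assuming excluded middle, Proposition~\ref{posnotbelow-notbelow}(iii)--(iv) gives $x \apart y \Leftrightarrow x \neq y$ for all elements of an arbitrary dcpo. Tightness of $\apart$ is then immediate, since classically $\lnot(x \apart y) \Leftrightarrow \lnot\lnot(x = y) \Leftrightarrow x = y$; and cotransitivity of $\apart$ reduces to cotransitivity of $\neq$, which is the standard classical argument (decide $x = z$; if $x = z$ then $y \neq z$ from $x \neq y$). For the converse of~(i): from $x \below y$ and $x \apart y$, Proposition~\ref{apart-implies-sbelow} yields $x \sbelow y$, so Theorem~\ref{tight-cotransitive-wem}(i) shows that tightness of \emph{any} apartness on $D$ — in particular the intrinsic one — implies excluded middle.

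The converse of~(ii) is where the work lies, and the obstacle is that Theorem~\ref{tight-cotransitive-wem}(ii) only extracts \emph{weak} excluded middle from cotransitivity; I must do strictly better. The extra leverage: $x \apart y$ together with $x \below y$ actually produces a Scott open $U$ with $y \in U$ and $x \notin U$. Indeed, $x \below y$ gives $x \leq y$ in the specialization order (Lemma~\ref{order-specialization}), which contradicts $x \posnotbelow y$, so the disjunction defining $x \apart y$ collapses to $y \posnotbelow x$. Now fix an arbitrary proposition $P$ and set $s_P \coloneqq \dirsup\pa*{\set{x} \cup \set{y \mid P}}$, a directed supremum as in the proof of Theorem~\ref{tight-cotransitive-wem}; note $x \below s_P$ always, that $s_P = x$ when $\lnot P$, and that $s_P = y$ when $P$ (using $x \below y$). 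Apply cotransitivity to $y \apart x$ with third element $s_P$: either $x \apart s_P$ or $y \apart s_P$. If $x \apart s_P$, then $x \posnotbelow s_P$ is impossible (as $x \below s_P$), so $s_P \posnotbelow x$, i.e.\ some Scott open $V$ contains $s_P$ but not $x$; since $s_P$ is the directed supremum of $\set{x}\cup\set{y\mid P}$ and $V$ is Scott open, some member of that set lies in $V$, and it cannot be $x$, hence it is $y$ and $P$ holds. If instead $y \apart s_P$, then in either sub-case — a Scott open separating $y$ from $s_P$ in one direction or the other — the hypothesis $P$ forces $s_P = y$ and then contradicts the separation (using that Scott opens are upper sets), so $\lnot P$. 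Thus $P \lor \lnot P$, and since $P$ was arbitrary, excluded middle follows.

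Finally, (iii) is just the instantiation of (i) and~(ii) at the Sierpi\'nski domain: $\Sierp$ is a continuous dcpo, $\bot \below \top$, and $\top \posnotbelow \bot$ is witnessed by the Scott open $\set{\top}$, so $\bot \apart \top$; hence tightness or cotransitivity of the intrinsic apartness on $\Sierp$ implies excluded middle. The only place continuity of $D$ is really invoked is via Lemma~\ref{order-specialization}, and even there only its unconditional direction ($x \below y \To x \leq y$) is needed, so the argument is robust.
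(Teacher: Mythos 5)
Your proof is correct and follows essentially the route the paper takes: the forward directions and part~(iii) reduce to Proposition~\ref{posnotbelow-notbelow} and the Sierpi\'nski witness, the converse of~(i) reduces via Proposition~\ref{apart-implies-sbelow} to Theorem~\ref{tight-cotransitive-wem}(i), and the converse of~(ii) upgrades the \(s_P\)-argument from the proof of Theorem~\ref{tight-cotransitive-wem}(ii) from weak to full excluded middle by using Scott openness at the directed supremum to extract \(P\) itself rather than \(\lnot\lnot P\). Your closing observation that continuity is never essentially used is also accurate.
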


We now isolate a collection of elements, which we call sharp elements, for which
the intrinsic apartness \emph{is} tight and cotransitive. The definition of a
sharp element of a continuous dcpo may be somewhat opaque, but the algebraic
case (Proposition~\ref{sharp-algebraic}) is easier to understand: an element
\(x\) is sharp if and only if \(c \below x\) is decidable for every compact
element \(c\). Sharpness also occurs naturally in our examples in
Section~\ref{sec:examples}, e.g.\ a lower real is sharp if and only if it is
located.

\begin{definition} An element \(x\) of a dcpo \(D\) is \emph{sharp} if for every
  \(y,z\in D\) with \(y \ll z\) we have \(y \ll x\) or \(z \not\below x\).
\end{definition}
Theorem~\ref{basis-is-sharp} below provides many examples of sharp elements.
Our first result is that sharpness is equivalent in continuous dcpos to a
seemingly stronger condition.

\begin{proposition}
  An element \(x\) of a continuous dcpo \(D\) is sharp if and only if for every
  \(y,z \in D\) with \(y \ll z\) we have \(y \ll x\) or \(z \posnotbelow x\).
\end{proposition}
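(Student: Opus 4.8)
The statement is an "if and only if" between two conditions on an element $x$ of a continuous dcpo $D$:

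(sharp) for all $y \ll z$, $y \ll x$ or $z \not\below x$;

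(strong) for all $y \ll z$, $y \ll x$ or $z \posnotbelow x$.

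By Proposition~\ref{posnotbelow-notbelow}(i), $z \posnotbelow x$ implies $z \not\below x$, so the (strong) $\Rightarrow$ (sharp) direction is immediate — no continuity needed. The content is entirely in the forward direction: assuming $x$ sharp, I want to upgrade "$z \not\below x$" to "$z \posnotbelow x$" whenever $y \ll z$.

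The main obstacle is exactly this upgrade, since in general $z \not\below x$ does \emph{not} imply $z \posnotbelow x$ (that's Proposition~\ref{posnotbelow-notbelow}(ii), which is equivalent to excluded middle). The trick is that I don't have a bare negation $z \not\below x$ to work with — I have the extra hypothesis $y \ll z$, and I have \emph{sharpness at every pair}, not just at $(y,z)$. So here is the plan. Fix $y \ll z$ and apply sharpness to this pair: either $y \ll x$ (done, left disjunct) or $z \not\below x$. In the second case I must produce a Scott open containing $z$ but not $x$; by Lemma~\ref{posnotbelow-criterion} it suffices to find $b \in D$ with $b \ll z$ and $b \not\below x$. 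Now I interpolate: by Proposition~\ref{interpolation} there is $b$ with $y \ll b \ll z$. Apply sharpness again, this time to the pair $y \ll b$: either $y \ll x$ — but this contradicts nothing directly, so I need to be more careful — or $b \not\below x$.

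Let me redo this cleanly. Suppose $x$ is sharp and $y \ll z$. I claim $y \ll x$ or $z \posnotbelow x$. Interpolate to get $w$ with $y \ll w \ll z$. Apply sharpness to the pair $w \ll z$: either $w \ll x$ or $z \not\below x$. In the \emph{second} case, since $w \ll z$ we have by Lemma~\ref{posnotbelow-criterion} (the "only if" direction, using the witness $w$) that... wait, Lemma~\ref{posnotbelow-criterion} needs $w \ll z$ and $w \not\below x$, and I only have $z \not\below x$. So I instead apply sharpness once more to $y \ll w$: either $y \ll x$ (then we are in the left disjunct of the claim, done) or $w \not\below x$. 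Combining: if $z \not\below x$ and $w \not\below x$ with $w \ll z$, then by Lemma~\ref{posnotbelow-criterion} $z \posnotbelow x$, giving the right disjunct. And in the \emph{first} case above, $w \ll x$: then since $y \ll w \below w$... hmm, I want to conclude $y \ll x$, which follows from $y \ll w$ and $w \ll x$ (or just $y \below w \ll x$) by Lemma~\ref{way-below-basics}, transitivity. So in every branch I land in one of the two disjuncts. Assembling the case analysis — sharpness at $(w,z)$ splits into ($w \ll x$, handled by transitivity) versus ($z \not\below x$, then sharpness at $(y,w)$ splits into $y \ll x$ versus $w \not\below x$, the latter handled by Lemma~\ref{posnotbelow-criterion}) — completes the proof. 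The one subtlety to watch is ensuring the interpolant $w$ is chosen before invoking sharpness, and that I apply Lemma~\ref{posnotbelow-criterion} with the correct witness ($w$, which is genuinely way below $z$), not with $z$ itself.
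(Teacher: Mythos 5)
Your proof is correct and follows essentially the same route as the paper: interpolate to get \(y \ll w \ll z\), apply sharpness to the pair \(y \ll w\), and in the case \(w \not\below x\) invoke Lemma~\ref{posnotbelow-criterion} with the witness \(w \ll z\) to conclude \(z \posnotbelow x\), the converse direction being immediate from Proposition~\ref{posnotbelow-notbelow}(i). The only difference is that your first application of sharpness to the pair \(w \ll z\) is redundant --- the case split coming from sharpness at \(y \ll w\) alone already lands every branch in one of the two disjuncts --- but this does not affect correctness.
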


\begin{lemma}\label{sharp-basis}
  An element \(x\) of a continuous dcpo \(D\) with a basis \(B\) is sharp if and
  only if for every \(a,b \in B\) with \(a \ll b\) we have \(a \ll x\) or
  \(b \not\below x\).
\end{lemma}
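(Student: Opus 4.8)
The forward implication is immediate: if $x$ is sharp, then the defining property of sharpness holds for all $y,z \in D$ with $y \ll z$, and in particular for all $a,b \in B$ with $a \ll b$, since $B \subseteq D$.

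For the converse, assume that for every $a,b \in B$ with $a \ll b$ we have $a \ll x$ or $b \not\below x$, and let $y,z \in D$ with $y \ll z$ be arbitrary; I must show $y \ll x$ or $z \not\below x$. The plan is to apply the interpolation property (Proposition~\ref{interpolation}) twice. First, since $D$ is continuous with basis $B$ and $y \ll z$, there is $b \in B$ with $y \ll b \ll z$. Applying interpolation again to $y \ll b$, there is $a \in B$ with $y \ll a \ll b$. Now $a,b \in B$ and $a \ll b$, so the hypothesis gives $a \ll x$ or $b \not\below x$, and I treat these two cases.

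In the first case, $y \ll a \ll x$, so $y \ll x$ by transitivity of the way-below relation (Lemma~\ref{way-below-basics}). In the second case, I claim $z \not\below x$: supposing $z \below x$, from $b \ll z \below x$ we get $b \ll x$ by Lemma~\ref{way-below-basics}, hence $b \below x$ (way below implies below), contradicting $b \not\below x$; therefore $\lnot(z \below x)$. Note this last step is a proof of a negation by contradiction and so is constructively valid. In either case we obtain $y \ll x$ or $z \not\below x$, so $x$ is sharp.

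The argument is essentially routine; the only point requiring care is using interpolation \emph{twice} so that both witnesses land in the basis $B$ (the hypothesis is only about pairs of basis elements), together with the observation that deriving $z \not\below x$ in the second case is legitimate constructively because $z \not\below x$ is itself a negation. I expect no serious obstacle beyond keeping track of these details.
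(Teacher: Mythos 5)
Your proof is correct and follows what is essentially the paper's own argument: the forward direction is immediate from $B \subseteq D$, and the converse uses the basis version of interpolation twice to place both witnesses $a \ll b$ in $B$, with the second case handled by a constructively legitimate proof of the negation $z \not\below x$.
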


\begin{proposition}\label{sharp-algebraic}
  An element \(x\) of an algebraic dcpo \(D\) is sharp if and only if for every
  compact \(c \in D\) it is decidable whether \(c \below x\) holds.
\end{proposition}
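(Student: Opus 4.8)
The plan is to prove the two directions separately, using Lemma~\ref{sharp-basis} specialised to the canonical basis of compact elements of an algebraic dcpo, together with the equivalence of \eqref{below-decidable} and \eqref{way-below-decidable} on that basis.

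First I would handle the easy direction: suppose $x$ is sharp and let $c \in D$ be compact. Since $D$ is algebraic, its set of compact elements is a basis, so by Lemma~\ref{sharp-basis} (applied with $a = b = c$, which is legitimate because $c \ll c$ by compactness) we get $c \ll x$ or $c \not\below x$. In the first case $c \below x$ by Lemma~\ref{way-below-basics}(3) (or directly, since $\ll$ refines $\below$), so in either case $c \below x$ is decidable. For the converse, assume $c \below x$ is decidable for every compact $c$, and let $a,b \in B$ be compact with $a \ll b$; we must show $a \ll x$ or $b \not\below x$. By the decidability assumption applied to $a$, either $a \below x$ or $a \not\below x$. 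If $a \not\below x$, then since $a \ll b$ gives $a \below b$, we cannot have $b \below x$ (else $a \below x$), so $b \not\below x$ and we are done. If $a \below x$, then because $a$ is compact, $a \below x$ is equivalent to $a \ll x$ (a compact element way below itself, below $x$, is way below $x$ — Lemma~\ref{way-below-basics}(2) with $a \ll a \below x$), giving the left disjunct. By Lemma~\ref{sharp-basis} applied to the basis of compact elements, this shows $x$ is sharp.

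The one subtlety worth flagging is the role of algebraicity: Lemma~\ref{sharp-basis} is stated for an arbitrary basis $B$ of a continuous dcpo, and here we are instantiating it with the (unique smallest) basis of compact elements guaranteed to exist for an algebraic dcpo. The key facts making the argument go through are that every element of this basis is compact, and that for a compact element $c$ the relations $c \ll c$ holds and $c \below y \Leftrightarrow c \ll y$ for all $y$ — both immediate from Lemma~\ref{way-below-basics}. There are no real obstacles; the main thing to get right is bookkeeping the two disjunctions against each other in the converse direction, which is what I sketched above. No appeal to excluded middle or choice is needed, consistent with the constructive stance of the paper.
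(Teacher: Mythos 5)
Your proof is correct and follows essentially the same route the paper intends: specialise Lemma~\ref{sharp-basis} to the basis of compact elements and use that a compact element $c$ satisfies $c \ll c$, so that $c \below x$ and $c \ll x$ are interchangeable. The only blemish is a citation slip: the step from $a \ll a \below x$ to $a \ll x$ is item~(3) of Lemma~\ref{way-below-basics}, not item~(2).
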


\begin{proposition}\label{em-sharp}
  Assuming excluded middle, every element of any dcpo is sharp.
  The sharp elements of the Sierpi\'nski domain \(\Sierp\) are exactly \(\bot\)
  and \(\top\).
  Hence, if every element of \(\Sierp\) is sharp, then excluded middle
  follows.
\end{proposition}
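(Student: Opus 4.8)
The plan is to prove Proposition~\ref{em-sharp} in three parts, matching its three assertions. The first two are positive claims that I would establish directly; the third is a reversal obtained by reading off the second together with the description of $\Sierp$ as the powerset $\powerset{\set{*}}$, i.e.\ as the dcpo of truth values.

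For the first claim, assume excluded middle and let $x$ be an element of an arbitrary dcpo $D$. To see that $x$ is sharp, take $y,z \in D$ with $y \ll z$; I must produce the disjunction $y \ll x$ or $z \not\below x$. But under excluded middle every proposition holds or fails, so in particular either $z \below x$ or $z \not\below x$. In the latter case we are done immediately. In the former case, $y \ll z \below x$ gives $y \ll x$ by Lemma~\ref{way-below-basics}(3). Either way the required disjunction holds, so $x$ is sharp; since $x$ and $D$ were arbitrary, every element of every dcpo is sharp.

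For the second claim, work in $\Sierp$, whose compact (indeed, only relevant) elements are $\bot$ and $\top$, with $\bot \below \top$ but $\bot \neq \top$, and with $\bot \ll \top$ and $\bot \ll \bot$ (as $\bot$ is compact and least) while $\top \not\ll \bot$ (since $\top \not\below \bot$). Since $\Sierp$ is algebraic, I would invoke Proposition~\ref{sharp-algebraic}: an element $x$ is sharp iff $c \below x$ is decidable for every compact $c$, i.e.\ iff both $\bot \below x$ and $\top \below x$ are decidable. Now $\bot \below x$ always holds, so it is trivially decidable; hence $x$ is sharp iff $\top \below x$ is decidable. For $x = \bot$ we have $\top \not\below \bot$, which is a (decidable, since false) proposition, so $\bot$ is sharp; for $x = \top$ we have $\top \below \top$, again decidable, so $\top$ is sharp. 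This shows $\bot$ and $\top$ are sharp. For the converse inclusion, realizing $\Sierp$ as $\powerset{\set{*}}$ means an element is a truth value (a subproposition of $\top$), and $\top \below x$ unfolds to exactly that truth value; so if $x$ is sharp then this truth value is decidable, and a decidable truth value is $\bot$ or $\top$. Hence the sharp elements of $\Sierp$ are precisely $\bot$ and $\top$.

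The third claim is now immediate: if \emph{every} element of $\Sierp$ is sharp, then by the characterization just proved every element of $\Sierp$ equals $\bot$ or $\top$, i.e.\ every truth value is decidable, which is excluded middle. The only mildly delicate point — the ``main obstacle,'' such as it is — is being careful that the constructive content of ``$\top \below x$ is decidable'' really is the same as ``the truth value $x$ (under the identification $\Sierp \cong \powerset{\set{*}}$) is decidable,'' so that sharpness of a given $x$ is not accidentally weaker or stronger than decidability of that truth value; once the dictionary between $\Sierp$, its order, and $\powerset{\set{*}}$ is spelled out this is routine, and everything else is a direct application of Proposition~\ref{sharp-algebraic} and Lemma~\ref{way-below-basics}.
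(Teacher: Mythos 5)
Your proposal is correct and follows essentially the same route as the paper: excluded middle plus Lemma~\ref{way-below-basics}(3) for the first claim, Proposition~\ref{sharp-algebraic} applied to the compact elements \(\bot\) and \(\top\) to identify sharpness of \(x \in \Sierp\) with decidability of the truth value \(\top \below x\), and then reading off excluded middle from the third claim. No gaps.
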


We can relate the notion of sharpness to Spitters'~\cite{Spitters2010} and
Kawai's~\cite[Definition~3.5]{Kawai2017} notion of a located subset:
A subset \(V\) of a poset \(S\) is \emph{located} if for every \(s,t \in S\)
with \(s \ll t\), we have \(t \in V\) or \(s \not\in V\).

\begin{proposition}\label{sharp-iff-located-nbhds}
  An element \(x\) of a continuous dcpo is sharp if and only if the filter of
  Scott open neighbourhoods of \(x\) is located in the poset of Scott opens of
  \(D\).
\end{proposition}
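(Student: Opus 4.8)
The plan is to unfold both sides into statements about basic Scott opens of the form \(\upupset b\) and to translate between them. Writing \(\mathcal{O}(D)\) for the poset of Scott opens ordered by inclusion (a dcpo, with directed suprema given by unions) and \(\mathcal{N}(x) = \set{U \in \mathcal{O}(D) \mid x \in U}\) for the neighbourhood filter, locatedness of \(\mathcal{N}(x)\) unfolds to: for all \(U,V \in \mathcal{O}(D)\) with \(U \ll V\) we have \(x \in V\) or \(x \notin U\). Two standing facts will do most of the work. First, if \(b \in V\) for a Scott open \(V\), then \(\upupset b \subseteq V\), since \(V\) is an upper set and \(b \ll z\) implies \(b \below z\); conversely, by continuity of \(D\) and Scott-openness of \(V\), every point of \(V\) lies in some \(\upupset b\) with \(b \in V\), so \(V = \bigcup_{b \in V}\upupset b\). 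Second, if \(y \ll w\) in \(D\), then \(\upupset w \ll \upupset y\) in \(\mathcal{O}(D)\): indeed \(w \in \upupset y\), and whenever a directed family \(\set{W_i}\) of Scott opens covers \(\upupset y\), some \(W_i\) contains \(w\), whence \(\upupset w \subseteq W_i\) because \(W_i\) is an upper set.

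For ``\(x\) sharp \(\Rightarrow \mathcal{N}(x)\) located'', let \(U \ll V\). By the first fact \(V = \bigcup_{b \in V}\upupset b\), and the finite subunions \(\bigcup_{b \in F}\upupset b\) with \(F \subseteq V\) Kuratowski finite form a directed family with union \(V\); since \(U \ll V\), we get \(U \subseteq \upupset b_1 \cup \dots \cup \upupset b_n\) for some \(b_1,\dots,b_n \in V\). Scott-openness of \(V\) gives \(a_i \in V\) with \(a_i \ll b_i\) for each \(i\), and sharpness gives \(a_i \ll x\) or \(b_i \not\below x\). A routine induction on \(n\) distributes these finitely many disjunctions into: either \(a_i \ll x\) for some \(i\), or \(b_i \not\below x\) for every \(i\). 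In the first case \(x \in \upupset a_i \subseteq V\); in the second, if \(x \in U\) then \(x \in \upupset b_i\) for some \(i\), so \(b_i \below x\), contradicting \(b_i \not\below x\), hence \(x \notin U\). Either way \(x \in V\) or \(x \notin U\).

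For the converse, let \(y \ll z\) in \(D\) and interpolate (Proposition~\ref{interpolation}) to obtain \(w\) with \(y \ll w \ll z\). By the second fact \(\upupset w \ll \upupset y\) in \(\mathcal{O}(D)\), so locatedness of \(\mathcal{N}(x)\) gives \(x \in \upupset y\) or \(x \notin \upupset w\), i.e.\ \(y \ll x\) or \(\lnot(w \ll x)\). In the latter case, if \(z \below x\) then \(w \ll z \below x\) forces \(w \ll x\) by Lemma~\ref{way-below-basics}, a contradiction; hence \(z \not\below x\). Thus \(y \ll x\) or \(z \not\below x\), which is exactly sharpness. The only mildly delicate points are producing an honestly directed family of opens with union \(V\) in the forward direction (the basic opens \(\set{\upupset b \mid b \in V}\) themselves need not be directed, which is why we pass to finite unions and then distribute the sharpness disjunctions over a finite index set) and the interpolation move in the converse that upgrades \(\lnot(w \ll x)\) to \(z \not\below x\) using \(w \ll z\); I expect the bookkeeping with the finite union to be the main, though still routine, obstacle.
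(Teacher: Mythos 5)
Your proof is correct and follows essentially the same route as the paper's (which is deferred to the arXiv version): cover \(V\) by the basic opens \(\upupset b\) with \(b \in V\), use \(U \ll V\) against the directed family of finite subunions to reduce to finitely many applications of sharpness in one direction, and in the other direction use interpolation together with the observation that \(y \ll w\) gives \(\upupset w \ll \upupset y\) in the frame of Scott opens. The two delicate points you flag --- passing to Kuratowski finite unions to obtain an honestly directed family, and interpolating \(y \ll w \ll z\) so that \(\lnot(w \ll x)\) can be upgraded to \(z \not\below x\) --- are exactly the ones that matter, and you handle both correctly.
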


The following gives examples of sharp elements.
\begin{theorem}\label{basis-is-sharp}
  Let \(D\) be a continuous dcpo with a basis \(B\).
  \begin{enumerate}
  \item Assuming that \(D\) is pointed, the least element of \(D\) is sharp if
    \(B\) satisfies \eqref{equals-bot-decidable}.
  \item Every element of \(B\) is sharp if \(B\) satisfies
    \eqref{way-below-decidable} or \eqref{below-decidable}.
    In particular, in these cases, the sharp elements are a Scott dense subset
    of \(D\) in the sense of Proposition~\ref{basis-is-dense}.
  \end{enumerate}

  If \(D\) is algebraic, then we can reverse the implications above for the
  basis of compact elements of \(D\).
  \begin{enumerate}[resume]
  \item Assuming that \(D\) is pointed, the least element of \(D\) is sharp if
    and only if the set of compact elements of \(D\) satisfies
    \eqref{equals-bot-decidable}.
  \item The compact elements of \(D\) are sharp if and only if the set of
    compact elements of \(D\) satisfies \eqref{way-below-decidable}
    or~\eqref{below-decidable}.
  \end{enumerate}
\end{theorem}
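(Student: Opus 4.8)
The plan is to prove the four items in turn, using Lemma~\ref{sharp-basis} for items (i) and (ii) and Proposition~\ref{sharp-algebraic} for items (iii) and (iv).

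For item (i): I want to show $\bot$ is sharp, i.e. for every $y \ll z$ in $D$ we have $y \ll \bot$ or $z \not\below \bot$. Since $\bot$ is the least element, $z \not\below \bot$ is equivalent to $z \neq \bot$, and $y \ll \bot$ is equivalent to $y \below \bot$, i.e. $y = \bot$. So I need: for every $y \ll z$, either $y = \bot$ or $z \neq \bot$. By the interpolation property (Proposition~\ref{interpolation}) applied with the basis $B$, from $y \ll z$ there is $b \in B$ with $y \ll b \ll z$; hence $y \below b$ and $b \below z$. By \eqref{equals-bot-decidable} it is decidable whether $b = \bot$. If $b = \bot$, then $y \below \bot$, so $y = \bot$. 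If $b \neq \bot$, then since $b \below z$ we get $z \neq \bot$ (as $z = \bot$ would force $b = \bot$). This establishes sharpness of $\bot$. Actually one should double-check the subtle point that $y \ll \bot$ really is $y = \bot$: indeed $\bot = \dirsup \emptyset$ and $\emptyset$ is... not directed, so instead use that $y \ll \bot$ implies $y \below \bot$ and conversely any $y \below \bot$ satisfies $y \ll \bot$ since $\bot$ is compact; either way $y \ll \bot \iff y = \bot$.

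For item (ii): Let $x \in B$; I use Lemma~\ref{sharp-basis}, so I need for every $a, b \in B$ with $a \ll b$ that $a \ll x$ or $b \not\below x$. Under \eqref{way-below-decidable}: it is decidable whether $a \ll x$; if yes we are done, if not we must derive $b \not\below x$ — but wait, $x \in B$ and we only know $a$ decidably way-below basis elements, whereas here one of the arguments is $x$ itself which is in $B$, so \eqref{way-below-decidable} does apply to the pair $(a,x)$. If $\lnot(a \ll x)$, suppose toward a contradiction $b \below x$; then from $a \ll b \below x$ and Lemma~\ref{way-below-basics}(iii) we get $a \ll x$, contradiction; hence $b \not\below x$. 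Under \eqref{below-decidable}: it is decidable whether $b \below x$; if $\lnot(b \below x)$ we are done, and if $b \below x$ then $a \ll b \below x$ gives $a \ll x$. This proves every element of $B$ is sharp. The density claim is then immediate: $B$ is a basis, so by Proposition~\ref{basis-is-dense} it is Scott dense, and we have just shown $B$ is contained in the set of sharp elements, which is therefore also Scott dense (a superset of a dense set is dense — for formulation (1), the Scott closure of the sharp elements contains the closure of $B$, which is all of $D$; for formulation (2), every inhabited Scott open meets $B$ hence meets the sharp elements).

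For items (iii) and (iv): now $D$ is algebraic with basis of compact elements $K$. By Proposition~\ref{sharp-algebraic}, an element $x$ is sharp iff $c \below x$ is decidable for every compact $c$. For (iii) ($x = \bot$): $\bot$ sharp iff $c \below \bot$ decidable for all compact $c$ iff $c = \bot$ decidable for all compact $c$, which is exactly \eqref{equals-bot-decidable} for $K$; combined with the forward direction from (i) this is an "if and only if". For (iv): each compact $c' \in K$ is sharp iff $c \below c'$ is decidable for all compact $c$, i.e. exactly \eqref{below-decidable} for $K$; and since $D$ is algebraic, \eqref{below-decidable} and \eqref{way-below-decidable} are equivalent on $K$ (noted in Section~\ref{sec:decidability-conditions}), giving the stated biconditional, with the forward direction recovering the relevant instance of (ii). I expect no single hard obstacle here; the only thing requiring care is keeping straight which decidability condition is being invoked for which pair of basis elements, and making sure the interpolation step in (i) is correctly set up so that the decidable predicate is applied to a \emph{basis} element rather than to the arbitrary elements $y, z$.
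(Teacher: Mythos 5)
Your proposal is correct and follows the route the paper intends: interpolation into the basis plus \eqref{equals-bot-decidable} for item (i), Lemma~\ref{sharp-basis} with the transitivity properties of \(\ll\) from Lemma~\ref{way-below-basics} for item (ii), and Proposition~\ref{sharp-algebraic} together with the observed equivalence of \eqref{below-decidable} and \eqref{way-below-decidable} on compact elements for items (iii) and (iv). All the constructive subtleties (deriving \(b \not\below x\) as a negation, and the equivalences \(y \ll \bot \iff y = \bot\) and \(z \not\below \bot \iff z \neq \bot\)) are handled correctly.
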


\begin{theorem}\label{sharp-tight-cotransitive}\hfill
  \begin{enumerate}
  \item If \(y\) is a sharp element of a continuous \(D\), then
    \(\lnot (x \posnotbelow y)\) implies \(x \below y\) for every \(x \in D\).
    In particular, the intrinsic apartness on a continuous dcpo \(D\) is tight
    on sharp elements.
  \item The intrinsic apartness on a continuous dcpo \(D\) is cotransitive with
    respect to sharp elements in the following sense: for every \(x,y \in D\)
    and sharp element \(z \in D\), we have
    \(x \apart y \to \pa*{x \apart z \vee y \apart z}\).
  \end{enumerate}
\end{theorem}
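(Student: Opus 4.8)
The plan is to translate both statements into the language of the way-below relation via Lemma~\ref{posnotbelow-criterion}, which says that $u \posnotbelow v$ is equivalent to the existence of a witness $b$ with $b \ll u$ and $b \not\below v$, and then to feed such witnesses into the defining disjunction of sharpness. The one extra ingredient needed is the interpolation property (Proposition~\ref{interpolation}): it lets us refine a single witness $b \ll u$ into a chain $b \ll a \ll u$, which is exactly the shape $a \ll b$ required to apply sharpness of an element.

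For part~(i), suppose $y$ is sharp and $\lnot(x \posnotbelow y)$; I would show $x \below y$ using Lemma~\ref{below-in-terms-of-way-below}, i.e.\ by proving $z \ll y$ for an arbitrary $z \ll x$. First interpolate to get $w$ with $z \ll w \ll x$. From $w \ll x$ and $\lnot(x \posnotbelow y)$, the contrapositive of Lemma~\ref{posnotbelow-criterion} yields $\lnot\lnot(w \below y)$, i.e.\ $\lnot(w \not\below y)$. Now apply sharpness of $y$ to $z \ll w$: we get $z \ll y$ or $w \not\below y$, and the second disjunct is impossible, so $z \ll y$. This proves $x \below y$. The ``in particular'' claim then follows: if $x$ and $y$ are both sharp and $\lnot(x \apart y)$, then $\lnot(x \posnotbelow y)$ and $\lnot(y \posnotbelow x)$, so $x \below y$ and $y \below x$ by what was just shown, hence $x = y$ by antisymmetry.

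For part~(ii), suppose $x \apart y$ and let $z$ be sharp. The goal $x \apart z \vee y \apart z$ is symmetric under swapping $x$ and $y$, and so is the hypothesis (by symmetry of $\apart$), so it suffices to treat the case where the disjunct $x \posnotbelow y$ holds. By Lemma~\ref{posnotbelow-criterion} choose $b \ll x$ with $b \not\below y$, and by interpolation choose $a$ with $b \ll a \ll x$. Apply sharpness of $z$ to $b \ll a$: in the first case $b \ll z$, and then $b \ll z$ together with $b \not\below y$ is a Lemma~\ref{posnotbelow-criterion} witness for $z \posnotbelow y$, whence $y \apart z$; in the second case $a \not\below z$, and then $a \ll x$ together with $a \not\below z$ is a witness for $x \posnotbelow z$, whence $x \apart z$. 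In the symmetric case $y \posnotbelow x$ the two conclusions are simply interchanged.

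The step that needs care --- and the reason interpolation cannot be dispensed with --- is precisely this splitting of a way-below witness: applying sharpness of $z$ directly to $b \ll x$ would only deliver the negative statement $x \not\below z$, which is constructively weaker than the positive $x \posnotbelow z$ we need (Proposition~\ref{posnotbelow-notbelow}), whereas passing through an interpolant $a$ with $b \ll a \ll x$ turns the second disjunct $a \not\below z$ into a genuine $\posnotbelow$-witness once combined with $a \ll x$. Everything else is bookkeeping with the way-below relation and intuitionistic propositional logic, so I expect no further obstacles.
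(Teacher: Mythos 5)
Your proof is correct and follows essentially the same route as the paper's: both parts reduce $\posnotbelow$ to way-below witnesses via Lemma~\ref{posnotbelow-criterion}, insert an interpolant so that sharpness can be applied to a genuine $\ll$-pair, and in part~(i) conclude via Lemma~\ref{below-in-terms-of-way-below}. Nothing further to add.
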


\section{Strongly maximal elements}\label{sec:strongly-maximal-elements}
Smyth~\cite{Smyth2006} explored the notion of a \emph{constructively maximal}
element, adapted from Martin-L\"of's \cite{MartinLof1970}. In an unpublished
manuscript~\cite{Heckmann1998}, Heckmann arrived at an equivalent notion,
assuming excluded middle, as noted in \cite[Section~8]{Smyth2006}, and called it
\emph{strong maximality}. Whereas Smyth works directly with abstract bases and
rounded ideal completions, we instead work with continuous dcpos. We use a
simplification of Smyth's definition, but follow Heckmann's terminology. We show
that every strongly maximal element is sharp, compare strong maximality to
maximality highlighting connections to sharpness, and study the subspace of
strongly maximal elements.

\begin{definition}
  Two points \(x\) and \(y\) of a dcpo \(D\) are \emph{Hausdorff separated} if
  we have disjoint Scott open neighbourhoods of \(x\) and \(y\)
  respectively.
\end{definition}

\begin{definition}\label{def:strongly-maximal}
  An element \(x\) of a continuous dcpo \(D\) is \emph{strongly maximal} if for
  every \(u,v \in D\) with \(u \ll v\), we have \(u \ll x\) or \(v\) and \(x\)
  are Hausdorff separated.
\end{definition}

The following gives another source of examples of sharp elements.
\begin{proposition}\label{sharp-if-strongly-maximal}
  Every strongly maximal element of a continuous dcpo is sharp.
\end{proposition}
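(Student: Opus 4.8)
The plan is to unwind the definitions of strong maximality and sharpness and observe that Hausdorff separation of $v$ and $x$ is a stronger condition than $v \posnotbelow x$, which in turn (by Proposition~\ref{posnotbelow-notbelow}(i)) gives $v \not\below x$. So the bulk of the argument is a short chain of implications, the only subtlety being that the definition of strong maximality quantifies over all $u,v \in D$ with $u \ll v$, which is exactly the shape of the definition of sharpness — so no interpolation or basis manipulation should be needed.

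First I would fix an arbitrary strongly maximal element $x$ of a continuous dcpo $D$ and take $y, z \in D$ with $y \ll z$; the goal is to show $y \ll x$ or $z \not\below x$. Applying the definition of strong maximality to the pair $y \ll z$, I get that either $y \ll x$ — in which case we are immediately done — or $z$ and $x$ are Hausdorff separated. In the latter case, there are disjoint Scott opens $U \ni z$ and $V \ni x$. Since $U$ and $V$ are disjoint and $x \in V$, we have $x \notin U$, so $U$ is a Scott open containing $z$ but not $x$; that is, $z \posnotbelow x$. By Proposition~\ref{posnotbelow-notbelow}(i), $z \posnotbelow x$ implies $z \not\below x$, which is the second disjunct. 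Hence $x$ is sharp.

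The main obstacle — such as it is — is simply making sure the logical structure is clean: strong maximality is stated with the disjunction ``$u \ll x$ or ($v$ and $x$ Hausdorff separated)'', and I need to land in the disjunction ``$y \ll x$ or $z \not\below x$'' required by the definition of sharpness; the point is that these are the same $u \ll x$ disjunct on the left, and on the right Hausdorff separation is strictly stronger than $z \not\below x$. There is no case where excluded middle or any decidability condition is needed: the extraction of $z \notin U$ from disjointness of $U$ and $V$ together with $x \in V$ is constructively valid, and $z \posnotbelow x \Rightarrow z \not\below x$ is Proposition~\ref{posnotbelow-notbelow}(i), which holds without excluded middle. So the proof is genuinely a two-line deduction once the definitions are lined up, and I would present it as such rather than dressing it up.
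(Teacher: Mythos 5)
Your proof is correct and matches the paper's own argument: apply strong maximality to the pair $y \ll z$, and in the Hausdorff-separated case extract a Scott open containing $z$ but not $x$ to get $z \posnotbelow x$, hence $z \not\below x$ by Proposition~\ref{posnotbelow-notbelow}(i). All steps are constructively valid as you note.
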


\begin{corollary}\hfill
  \begin{enumerate}
  \item The intrinsic apartness on a continuous dcpo \(D\) is tight on strongly
    maximal elements.
  \item The intrinsic apartness on a continuous dcpo \(D\) is cotransitive with
    respect to strongly maximal elements in the following sense: for every
    \(x,y \in D\) and strongly maximal element \(z \in D\), we have
    \({x \apart y \to \pa*{x \apart z \vee y \apart z}}\).
  \end{enumerate}
\end{corollary}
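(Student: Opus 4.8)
The plan is to derive this Corollary directly from Proposition~\ref{sharp-if-strongly-maximal}, which tells us that every strongly maximal element of a continuous dcpo is sharp, together with Theorem~\ref{sharp-tight-cotransitive}, which establishes tightness and cotransitivity for sharp elements. In other words, the genuine work has already been done in proving that strongly maximal elements are sharp; what remains is to observe that the strongly maximal elements form a subcollection of the sharp elements and to feed this into the earlier theorem.

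For part~(ii) this is immediate: given \(x,y \in D\) and a strongly maximal \(z \in D\), Proposition~\ref{sharp-if-strongly-maximal} shows that \(z\) is sharp, so Theorem~\ref{sharp-tight-cotransitive}(ii) applied to the sharp element \(z\) yields \(x \apart y \to \pa*{x \apart z \vee y \apart z}\), which is exactly the claimed form of cotransitivity with respect to strongly maximal elements.

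For part~(i), let \(x\) and \(y\) be strongly maximal elements with \(\lnot(x \apart y)\); we must show \(x = y\). By Proposition~\ref{sharp-if-strongly-maximal}, both \(x\) and \(y\) are sharp. Unwinding the definition of the intrinsic apartness, \(\lnot(x \apart y)\) says \(\lnot\pa*{x \posnotbelow y \vee y \posnotbelow x}\), which is constructively equivalent to \(\lnot(x \posnotbelow y) \land \lnot(y \posnotbelow x)\). Applying Theorem~\ref{sharp-tight-cotransitive}(i) with the sharp element \(y\) gives \(x \below y\) from \(\lnot(x \posnotbelow y)\), and applying it again with the sharp element \(x\) gives \(y \below x\) from \(\lnot(y \posnotbelow x)\); antisymmetry of the order then forces \(x = y\). (Equivalently, one may simply note that Theorem~\ref{sharp-tight-cotransitive}(i) already asserts tightness on sharp elements and that strongly maximal elements are sharp.) There is no real obstacle here: the only mild subtlety is the routine constructive equivalence between \(\lnot(A \vee B)\) and \(\lnot A \land \lnot B\) used in the argument for~(i), and everything else is a direct appeal to the cited results.
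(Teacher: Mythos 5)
Your proposal is correct and is exactly the intended argument: the corollary follows by combining Proposition~\ref{sharp-if-strongly-maximal} (strongly maximal implies sharp) with Theorem~\ref{sharp-tight-cotransitive}, and your unwinding of tightness in part~(i) is a faithful elaboration of what ``tight on sharp elements'' means there. Nothing is missing.
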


\begin{lemma}\label{strongly-maximal-basis}
  If a continuous dcpo \(D\) has a basis \(B\), then an element \(x \in D\) is
  strongly maximal if~and~only~if for every \(a,b \in B\) with \(a \ll b\),
  we have \(a \ll x\) or \(b\) and \(x\) are Hausdorff separated.
\end{lemma}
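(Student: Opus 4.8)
The plan is to mimic the proof of Lemma~\ref{sharp-basis}, using the interpolation property to pass between the defining condition on all of $D$ and the corresponding condition on the basis $B$.

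The forward direction ($\Rightarrow$) is immediate: if $x$ is strongly maximal, then the required disjunction holds for all $u,v \in D$ with $u \ll v$, and in particular for all $a,b \in B$ with $a \ll b$.

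For the converse ($\Leftarrow$), assume the basis condition and fix $u,v \in D$ with $u \ll v$; we must establish the disjunction ``$u \ll x$, or $v$ and $x$ are Hausdorff separated''. First I would apply Proposition~\ref{interpolation} in the basis twice: interpolating $u \ll v$ yields $b \in B$ with $u \ll b \ll v$, and then interpolating $u \ll b$ yields $a \in B$ with $u \ll a \ll b$. Now $a \ll b$ with $a,b \in B$, so the hypothesis splits into two cases. If $a \ll x$, then from $u \ll a \ll x$ and transitivity of $\ll$ (Lemma~\ref{way-below-basics}) we get $u \ll x$. If instead $b$ and $x$ are Hausdorff separated, say by disjoint Scott opens $U \ni b$ and $W \ni x$, I would observe that $b \ll v$ entails $b \below v$ and that $U$, being Scott open, is an upper set, whence $v \in U$; thus the same opens $U$ and $W$ witness that $v$ and $x$ are Hausdorff separated.

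I do not anticipate a genuine obstacle here. The only points needing care are that a single interpolation step supplies only one basis element below $v$, so two interpolations are needed to manufacture a way-below pair $a \ll b$ lying entirely in $B$, and that the ``upgrade'' of Hausdorff separation from $b$ to $v$ relies precisely on Scott opens being upper sets together with $b \below v$.
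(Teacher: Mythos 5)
Your proof is correct and follows essentially the same route as the paper's: the forward direction is specialization, and the converse uses two applications of the interpolation property to produce $a, b \in B$ with $u \ll a \ll b \ll v$, then handles the two cases via transitivity of $\ll$ and the fact that Scott opens are upper sets. No gaps.
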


\begin{lemma}\label{strongly-maximal-algebraic}
  An element \(x\) of an algebraic dcpo \(D\) is strongly maximal if and only if
  for every compact element \(c \in D\) either \(c \below x\) or \(c\) and \(x\) are
  Hausdorff separated.
\end{lemma}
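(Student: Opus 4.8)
The plan is to deduce this from Lemma~\ref{strongly-maximal-basis}, taking $B$ to be the (unique smallest) basis of compact elements of the algebraic dcpo $D$. That lemma then says: $x$ is strongly maximal if and only if for every pair of compact elements $a \ll b$ we have $a \ll x$ or $b$ and $x$ are Hausdorff separated. The remaining task is purely to turn the ``$a \ll b$'' quantification into the single-compact-element quantification in the statement.

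First I would record the relevant instances of Lemma~\ref{way-below-basics} that let us swap $\ll$ for $\below$ around compact elements: if $a$ is compact, then $a \below z$ implies $a \ll a \below z$, hence $a \ll z$, and conversely $a \ll z$ always gives $a \below z$; in particular, for compact $a$ (and arbitrary $b$, and for $x$) we have $a \ll b \iff a \below b$ and $a \ll x \iff a \below x$. With this, the characterization from Lemma~\ref{strongly-maximal-basis} becomes: $x$ is strongly maximal iff for all compact $a \below b$, either $a \below x$ or $b$ and $x$ are Hausdorff separated.

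Then the two directions are immediate. For $(\Rightarrow)$, given a compact $c$, instantiate the characterization with $a = b = c$ (using reflexivity $c \below c$) to obtain $c \below x$ or $c$ and $x$ Hausdorff separated. For $(\Leftarrow)$, given compact $a \below b$, apply the stated condition to $b$: if $b \below x$, then $a \below b \below x$ yields $a \below x$; otherwise $b$ and $x$ are Hausdorff separated, which is the second disjunct. As an alternative route that bypasses Lemma~\ref{strongly-maximal-basis}, one can argue directly from Definition~\ref{def:strongly-maximal}: for $(\Leftarrow)$, given $u \ll v$, use algebraicity to pick a compact $c$ with $u \below c \below v$; applying the hypothesis to $c$, if $c \below x$ then $u \below c \ll c \below x$ gives $u \ll x$, and if disjoint Scott opens $U \ni c$, $W \ni x$ witness Hausdorff separation of $c$ and $x$, then $v \in U$ since $U$ is an upper set and $c \below v$, so $U, W$ Hausdorff-separate $v$ and $x$. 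I do not anticipate a genuine obstacle; the only point requiring care is tracking which element must be compact in the $\ll/\below$ equivalences, and noting that the Hausdorff-separation disjunct concerns the larger element $b$, so it transfers correctly along $a \below b$.
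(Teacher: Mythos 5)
Your proposal is correct and follows what is evidently the intended route: specialize Lemma~\ref{strongly-maximal-basis} to the basis of compact elements and use compactness of $a$ (i.e.\ $a \ll a$) to trade $\ll$ for $\below$, instantiating with $a = b = c$ for one direction and transferring the Hausdorff-separation disjunct along $a \below b$ for the other. The key points --- that the quantification collapses via $c \ll c$, and that the separation disjunct concerns the larger element --- are exactly the ones that need care, and you handle them.
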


Smyth's formulation of strong maximality~\cite[Definition~4.1]{Smyth2006}
(called \emph{constructive maximality} there) is somewhat more involved than
ours, but it is equivalent, as Proposition~\ref{Smyth-strong-equiv} shows.

\begin{definition}\label{def:refine}
  We say that two elements \(x\) and \(y\) of a dcpo \(D\) can be
  \emph{refined}, written \(x \refined y\), if there exists \(z \in D\) with
  \(x \ll z\) and \(y \ll z\).
\end{definition}
On~\cite[p.~362]{Smyth2006}, refinement is denoted by
\(x \mathrel{\uparrow} y\), but we prefer \(x \refined y\), because one might
want to reserve \(x \mathrel{\uparrow} y\) for the weaker
\(\exists_{z \in D}\,\pa*{\pa*{x \below z} \land \pa*{y \below z}}\).

\begin{lemma}\label{Hausdorff-separated-criterion}
  Two elements \(x\) and \(y\) of a continuous dcpo \(D\) are Hausdorff
  separated if and only if there exist
  \(a,b \in D\) with \(a \ll x\) and \(b \ll y\) such that
  \(\lnot \pa*{a \refined b}\).
  Moreover, if \(D\) has a basis \(B\), then \(x\) and \(y\) are Hausdorff
  separated if and only if there exist such elements \(a\) and \(b\) in \(B\).
\end{lemma}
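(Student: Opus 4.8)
The plan is to prove the two implications directly, using the canonical Scott opens $\upupset a$ from Example~\ref{ex:Scott-opens} together with the interplay between the way-below relation, upper sets, and the definition of refinement.

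For the implication from right to left, I would start from $a \ll x$ and $b \ll y$ with $\lnot(a \refined b)$. Since $D$ is continuous, $\upupset a$ and $\upupset b$ are Scott open (Example~\ref{ex:Scott-opens}), and $x \in \upupset a$, $y \in \upupset b$ by hypothesis, so it only remains to check disjointness: if $z \in \upupset a \cap \upupset b$, then $a \ll z$ and $b \ll z$, so $z$ witnesses $a \refined b$, contradicting $\lnot(a \refined b)$. Hence $x$ and $y$ are Hausdorff separated. If $D$ has a basis $B$ and $a, b$ are taken in $B$, nothing changes, since $B \subseteq D$.

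For the implication from left to right, suppose $U \ni x$ and $V \ni y$ are disjoint Scott opens. By continuity, $\ddset x$ is directed with supremum $x \in U$, so Scott openness of $U$ provides some $a \in \ddset x$ with $a \in U$; similarly there is $b \in \ddset y$ with $b \in V$. Then $a \ll x$ and $b \ll y$, and $\lnot(a \refined b)$: if some $z$ satisfied $a \ll z$ and $b \ll z$, then $a \below z$ and $b \below z$, and since $U$ and $V$ are upper sets we would get $z \in U \cap V = \emptyset$. For the ``moreover'' part I would replace $\ddset x$ and $\ddset y$ by the directed sets $B \cap \ddset x$ and $B \cap \ddset y$, which still have suprema $x$ and $y$ because $B$ is a basis; the same argument then yields $a, b \in B$.

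The proof is essentially routine; the one point needing a little care is that in the forward direction the witnesses $a, b$ must be extracted constructively — from directedness of $\ddset x$ (or $B \cap \ddset x$) together with Scott openness of $U$ — rather than by first shrinking $U, V$ to canonical basic opens $\upupset c$. And the negation $\lnot(a \refined b)$ should be derived from disjointness of $U$ and $V$ using only that they are upper sets, so that no instance of excluded middle creeps in.
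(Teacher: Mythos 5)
Your proof is correct and follows the same route as the paper's: $\upupset a$ and $\upupset b$ serve as the separating opens in one direction, and in the other direction Scott openness applied to the directed set $\ddset x$ (or $B \cap \ddset x$) extracts the witnesses, with $\lnot(a \refined b)$ following from disjointness via upward closure. Nothing to add.
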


In \cite[p.~362]{Smyth2006}, the condition in
Lemma~\ref{Hausdorff-separated-criterion} is taken as a definition (for basis
elements) and such elements \(a\)~and~\(b\) are said to lie apart and this
notion is denoted by \(a \mathrel{\sharp} b\).
We can now translate Smyth's~\cite[Definition~4.1]{Smyth2006} from ideal
completions of abstract bases to continuous dcpos.
\begin{definition}
  An element \(x\) of a continuous dcpo \(D\) is \emph{Smyth maximal} if for
  every \(u,v \in D\) with \(u \ll v\), there exists \(d \ll x\) such that
  \(u \ll d\) or the condition in Lemma~\ref{Hausdorff-separated-criterion}
  holds for \(v\) and \(d\).
\end{definition}

\begin{proposition}\label{Smyth-strong-equiv}
  An element \(x\) of a continuous \(D\) is strongly maximal if and only if
  \(x\) is Smyth maximal.
\end{proposition}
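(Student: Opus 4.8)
The plan is to prove the two implications of Proposition~\ref{Smyth-strong-equiv} separately, using Lemma~\ref{Hausdorff-separated-criterion} to unfold the notion of Hausdorff separation in the definition of strong maximality, so that both "strongly maximal" and "Smyth maximal" become statements purely about the way-below relation and the refinement relation $\refined$.

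First I would record the easy direction: \emph{strongly maximal implies Smyth maximal}. Suppose $x$ is strongly maximal and let $u \ll v$. Then $u \ll x$ or $v$ and $x$ are Hausdorff separated. In the first case, by the interpolation property (Proposition~\ref{interpolation}) there is $d$ with $u \ll d \ll x$, and this $d$ witnesses the disjunct $u \ll d$ in the definition of Smyth maximality. In the second case, Lemma~\ref{Hausdorff-separated-criterion} gives $a,b \in D$ with $a \ll v$, $b \ll x$, and $\lnot(a \refined b)$; taking $d \coloneqq b$ (so $d \ll x$) we see that the condition of Lemma~\ref{Hausdorff-separated-criterion} holds for $v$ and $d$, witnessed by $a$ and $b$ (here one should note that $b \ll d$ can be arranged, or simply that the witness pair for "$v$ and $d$ Hausdorff separated" can reuse $a$ and $b$ since $b \ll x = d$ need only be $b \ll d$ — this is where a short interpolation step may be needed). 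Either way $x$ is Smyth maximal.

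The converse, \emph{Smyth maximal implies strongly maximal}, is the step I expect to be the main obstacle, because the definition of Smyth maximality only produces an auxiliary element $d \ll x$ satisfying a weaker-looking property, and we must upgrade this to a statement about $x$ itself. Suppose $x$ is Smyth maximal and let $u \ll v$. By interpolation, pick $u'$ with $u \ll u' \ll v$. Apply Smyth maximality to $u' \ll v$: there is $d \ll x$ with either $u' \ll d$, or the condition of Lemma~\ref{Hausdorff-separated-criterion} holds for $v$ and $d$. In the first case $u \ll u' \ll d \ll x$, so $u \ll x$ by transitivity (Lemma~\ref{way-below-basics}), giving the first disjunct of strong maximality. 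In the second case, there are $a,b$ with $a \ll v$, $b \ll d$, and $\lnot(a \refined b)$; since $b \ll d \ll x$ we have $b \ll x$, so by Lemma~\ref{Hausdorff-separated-criterion} again (read backwards) $v$ and $x$ are Hausdorff separated. This yields the second disjunct of strong maximality, completing the proof. The key insight making this work is that Lemma~\ref{Hausdorff-separated-criterion} characterises Hausdorff separation \emph{existentially} in terms of way-below witnesses, and the way-below relation is transitive and interpolative, so the indirection through $d \ll x$ collapses.

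If $D$ has a basis $B$, one can equivalently run the argument with $u,v,a,b$ ranging over $B$, using the "moreover" clauses of Proposition~\ref{interpolation} and Lemma~\ref{Hausdorff-separated-criterion}, but this is not needed for the bare equivalence. Throughout, the only nontrivial facts invoked are interpolation, transitivity of $\ll$, and the witness-based reformulation of Hausdorff separation; no form of excluded middle is used, so the proof is constructive as required.
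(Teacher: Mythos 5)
Your proposal is correct and follows the natural route (the paper defers this proof to its arXiv version, but the intended argument is exactly this unfolding of Hausdorff separation via Lemma~\ref{Hausdorff-separated-criterion} together with interpolation and transitivity of \(\ll\)). The only point to tighten is the forward direction's second case: taking \(d \coloneqq b\) does not literally work since the lemma's condition for \(v\) and \(d\) requires a witness way below \(d\), and \(b \ll b\) need not hold; instead interpolate \(b \ll d \ll x\) and reuse the witnesses \(a \ll v\), \(b \ll d\), \(\lnot(a \refined b)\) --- which is precisely the fix you already flag. (The auxiliary interpolation \(u \ll u' \ll v\) in the converse is harmless but unnecessary, as \(u \ll d \ll x\) already gives \(u \ll x\) by transitivity.)
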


\subsection{Maximality and strong maximality}\label{subsec:max-strong-max}
The name strongly maximal is justified by the following observation.
\begin{proposition}[cf.\ Proposition~4.2 in \cite{Smyth2006}]\label{maximal-if-strongly-maximal}
  Every strongly maximal element of a continuous dcpo is maximal.
\end{proposition}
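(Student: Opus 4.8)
The plan is to unfold the definition of maximality and reduce, via continuity, to a statement about the way-below relation that strong maximality can settle directly. Recall that $x$ is maximal when $x \below y$ implies $x = y$. So let $x$ be a strongly maximal element of a continuous dcpo $D$ and suppose $x \below y$; by antisymmetry it suffices to prove $y \below x$. By Lemma~\ref{below-in-terms-of-way-below}, this amounts to showing that every $z \in D$ with $z \ll y$ also satisfies $z \ll x$.

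So fix $z \ll y$ and apply the definition of strong maximality to the pair $z \ll y$: either $z \ll x$, in which case we are done, or $y$ and $x$ are Hausdorff separated. I claim the second disjunct is impossible. Indeed, if $U$ and $V$ were disjoint Scott opens with $x \in U$ and $y \in V$, then from $x \below y$ and Lemma~\ref{order-specialization} (equivalently, because Scott opens are upper sets) we would get $y \in U$, contradicting $U \cap V = \emptyset$. Hence $x$ and $y$ are not Hausdorff separated, and from the disjunction we conclude $z \ll x$.

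Since $z$ was arbitrary, Lemma~\ref{below-in-terms-of-way-below} gives $y \below x$, and then antisymmetry yields $x = y$, so $x$ is maximal. The argument is entirely constructive: we never eliminate a disjunction by a decision procedure, only discharge its second alternative by deriving a contradiction from it. There is no real obstacle here — the only thing to notice is that $x \below y$ forces $y$ into every Scott neighbourhood of $x$, which is exactly what rules out Hausdorff separation of $x$ and $y$; everything else is the definition of strong maximality combined with the way-below characterisation of $\below$ in a continuous dcpo.
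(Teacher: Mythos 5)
Your proof is correct and follows the expected argument: given $x \below y$, reduce $y \below x$ to showing $z \ll x$ for all $z \ll y$ via Lemma~\ref{below-in-terms-of-way-below}, then discharge the Hausdorff-separation disjunct of strong maximality by noting that $x \below y$ places $y$ in every Scott open (upper set) containing $x$, which is constructively unproblematic since refuting one disjunct of a given disjunction yields the other. Nothing to add.
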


In the presence of excluded middle, \cite[Corollary~4.4]{Smyth2006} tells us
that the converse of the above is true if and only if the
{Lawson~condition}~\cite{Smyth2006,Lawson1997} holds for the dcpo (the Scott and
Lawson topologies coincide on the subset of maximal elements.)
Without excluded middle, the situation is subtler and involves sharpness, as
we proceed to show.

\begin{lemma}\label{maximal-strongly-maximal-wem}
  Suppose that we have elements \(x\) and \(y\) of a continuous dcpo \(D\) such
  that
  \begin{enumerate}
  \item\label{both-strongly-max} \(x\) and \(y\) are both strongly maximal;
  \item\label{glb} \(x\) and \(y\) have a greatest lower bound \(x \glb y\) in \(D\);
  \item\label{apart} \(x\) and \(y\) are intrinsically apart.
  \end{enumerate}
  Then, for any proposition \(P\), the supremum \(\dirsup S\) of the directed
  subset
  \(S \coloneqq \set{x \glb y} \cup \set{x \mid \lnot P} \cup \set{y \mid
    \lnot\lnot P}\) is maximal, but \(\dirsup S\) is sharp if and only if
  \(\lnot P\) is decidable.  Hence, if \(\dirsup S\) is strongly maximal, then
  \(\lnot P\) is decidable.
\end{lemma}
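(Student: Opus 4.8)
The plan is to analyze the element $s \coloneqq \dirsup S$ in three stages: first establish that $S$ is directed, then prove $s$ is always maximal, and finally characterize sharpness of $s$ in terms of decidability of $\lnot P$. For directedness: the set $S$ contains $x \glb y$, which is below both $x$ and $y$; and any two of the ``conditional'' elements $x$ (present iff $\lnot P$) and $y$ (present iff $\lnot\lnot P$) are bounded above within $S$ once we observe that if both $\lnot P$ and $\lnot\lnot P$ held we would have a contradiction, but we don't need that---we just need an upper bound, and since $\lnot P$ implies $\lnot\lnot P$ is false only classically, the safe move is to note that whenever the element $x$ appears (i.e.\ $\lnot P$), then $\lnot\lnot P$ is false so $y$ does \emph{not} appear, and symmetrically; in all cases the nonempty selection among $\{x,y\}$ has a single element or is empty, and $x \glb y \below x, y$ handles the rest. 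So $S$ is directed and $s$ exists.

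For maximality of $s$: by Proposition~\ref{maximal-if-strongly-maximal} (or directly) $x$ and $y$ are maximal since they are strongly maximal. We have $s \aboveorder x \glb y$, and I claim $s$ is maximal. The idea: suppose $s \below m$ for some $m \in D$. Under $\lnot P$, $s \aboveorder x$, and since $x$ is maximal, $x = s = m$; under $\lnot\lnot P$, similarly $y = s = m$. Since for any proposition we have $\lnot\lnot(\lnot P \vee \lnot\lnot P)$ (indeed $\lnot P \vee \lnot\lnot P$ holds: either $\lnot P$ or $\lnot\lnot P$, which is an instance of... actually $\lnot P \vee \lnot\lnot P$ is just weak excluded middle and is \emph{not} available), I instead argue: $s \below m$ gives $x \glb y \below m$; and I want $s = m$. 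From $s \below m$ we get, for any proposition, that $\lnot P \to x \below m \to x = m$ and $\lnot\lnot P \to y \below m \to y = m$. Then $\lnot P \to s \below x = m$ hence $m \below s$; and $\lnot\lnot P \to m \below s$ likewise. Since $\lnot\lnot(\lnot P \vee \lnot\lnot P)$ is a tautology and $m \below s$ is a $\lnot\lnot$-stable proposition (it is a negative statement? no---$\below$ need not be $\lnot\lnot$-stable). The cleaner route: $m \below s$ follows once we show $m \below x \glb y$ under the hypothesis, but more robustly, observe $s \below m$ and we want $m \below s$; we have $\lnot P \vee \lnot\lnot P \to m \below s$, and since the conclusion $m \below s$ we are trying to prove can be weakened---actually the standard trick here is that maximality of $s$ only requires: $s \below m \to \lnot\lnot(s = m)$, and then invoke that $x,y$ maximal are in fact ``$\lnot\lnot$-maximal'', or simply note $s$ sits between $x\glb y$ and $\dirsup\{x,y\}$-type suprema---I would follow the pattern of the proof of Theorem~\ref{tight-cotransitive-wem}(ii), replacing $\set{x}\cup\set{y\mid P}$ by this three-element set, concluding maximality by checking it directly against an arbitrary $m \aboveorder s$.

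For the sharpness characterization: use Proposition~\ref{sharp-if-strongly-maximal} is \emph{not} what we want (wrong direction); instead use the definition of sharp together with condition~\ref{apart}. Since $x \apart y$, there is a basis element (or just an element) $b$ with, say, $b \ll x$ but $b \not\below y$ (via Lemma~\ref{posnotbelow-criterion}); interpolate to get $a \ll b \ll x$ with $a \ll b$. Now test sharpness of $s$ at the pair $a \ll b$: sharpness demands $a \ll s$ or $b \not\below s$. I compute: $\lnot P \to x \below s$, and since $b \ll x \below s$ we'd get... wait, we want to extract decidability. Under $\lnot\lnot P$, $s \aboveorder y$, and $b \not\below y$; if also $b \below s$ then... hmm. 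Let me instead pick the witness from $x \apart y$ so that $b \ll x$, $b \not\below y$. Then $a \ll b \ll x$. If $s$ is sharp, then $a \ll s$ or $b \not\below s$. I claim $a \ll s \to \lnot\lnot P$ is false, i.e.\ $a \ll s \to \lnot P$: because $\lnot\lnot P \to s \below y$ (wait, is $s \below y$? under $\lnot\lnot P$, $s \aboveorder y$ and by maximality of... no). This is the delicate computation and \textbf{is the main obstacle}: carefully tracking which of $x \below s$, $y \below s$, $s \below x$, $s \below y$ hold under $\lnot P$ resp.\ $\lnot\lnot P$, using maximality of $x$ and $y$ to upgrade $\below$ to $=$, and then showing $a \ll s$ forces $\lnot\lnot(\lnot P)$ i.e.\ $\lnot P$ (via $\lnot\lnot$-stability of $\lnot P$) while $b \not\below s$ forces $\lnot\lnot\lnot P$ i.e.\ $\lnot P$ is false. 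Conversely, if $\lnot P$ is decidable then $S$ reduces to either $\set{x}$ (if $\lnot P$) or $\set{x \glb y}\cup\set{y}$ with $y$ present (if $\lnot\lnot P$, since then $\neg\neg P$ so the $\lnot P$-component is empty), so $s$ equals $x$ or $y$, both strongly maximal hence sharp by Proposition~\ref{sharp-if-strongly-maximal}. The final sentence---if $s$ is strongly maximal then $\lnot P$ is decidable---is then immediate from Proposition~\ref{sharp-if-strongly-maximal} (strongly maximal $\Rightarrow$ sharp $\Rightarrow$ $\lnot P$ decidable).
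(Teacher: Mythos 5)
Your high-level decomposition (directedness, maximality, the two directions of the sharpness equivalence, and the final deduction via Proposition~\ref{sharp-if-strongly-maximal}) matches the paper's, and the easy parts are fine: directedness, the direction ``\(\lnot P\) decidable \(\Rightarrow\) \(\dirsup S\) sharp'' (where \(\dirsup S\) collapses to \(x\) or to \(y\)), and the last sentence. But both hard steps are left with genuine gaps, and the common missing ingredient is that you never use \emph{strong} maximality of \(x\) and \(y\) --- only their maximality --- and the argument cannot close that way. For maximality of \(s \coloneqq \dirsup S\): your case split on \(\lnot P\) versus \(\lnot\lnot P\) only yields \(\lnot\lnot(s = m)\), and, as you note yourself, neither \(\lnot P \vee \lnot\lnot P\) nor \(\lnot\lnot\)-stability of \(\below\) or of equality is available (Lemma~\ref{tight-implies-notnot-sep} and Theorem~\ref{tight-cotransitive-wem} show the latter would be a taboo here); imitating Theorem~\ref{tight-cotransitive-wem}(ii) does not help, since that proof derives a taboo rather than proving maximality. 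The paper's route is: given \(s \below m\), reduce \(m \below s\) via continuity (Lemma~\ref{below-in-terms-of-way-below}) to showing \(c \below s\) for every \(c \ll m\), and feed the pair \(c \ll m\) into strong maximality of \(x\) and of \(y\). If \(c \ll x\) and \(c \ll y\), then \(c \below x \glb y \below s\). If instead \(m\) is Hausdorff separated from \(x\), then \(\lnot P\) is refuted (it would give \(x \below s \below m\), putting \(m\) in both disjoint opens), so \(\lnot\lnot P\) holds, \(y \below s \below m\), and maximality of \(y\) gives \(m = y \below s\); symmetrically when \(m\) is Hausdorff separated from~\(y\).

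For ``\(s\) sharp \(\Rightarrow\) \(\lnot P\) decidable'' you interpolate in the wrong direction. With \(b \ll x\) and \(b \not\below y\) (from \(x \apart y\) via Lemma~\ref{posnotbelow-criterion}) you pick \(a \ll b\) and test sharpness at \(a \ll b\); the branch \(a \ll s\) then yields nothing, because under \(\lnot\lnot P\) maximality of \(y\) only gives \(a \below y\), which is perfectly consistent with \(b \not\below y\). Interpolate upward instead: choose \(c\) with \(b \ll c \ll x\) and test sharpness at the pair \(b \ll c\). Then \(b \ll s\) refutes \(\lnot\lnot P\) (which would force \(y \below s\), hence \(y = s\) by maximality of \(y\), hence \(b \below y\)), giving \(\lnot P\); and \(c \not\below s\) refutes \(\lnot P\) (which would force \(x \below s\), hence \(c \below s\)), giving \(\lnot\lnot P\). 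One must also treat the symmetric case \(y \posnotbelow x\) of hypothesis~(iii). Your sketch correctly guesses that one branch should deliver \(\lnot P\) and the other its negation, but with your choice of test pair one branch fails, which is exactly the ``main obstacle'' you flag without resolving.
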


\begin{proposition}\label{maximal-strongly-maximal-wem-alt}
  Let \(P\) be the poset with exactly three elements \(\bot \leq 0,1\) and \(0\)
  and \(1\) unrelated.  If every maximal element of the algebraic dcpo
  \(\Idl(P)\) is strongly maximal, then weak excluded middle follows.
  In the other direction, if excluded middle holds, then every maximal elements
  of \(\Idl(P)\) is strongly maximal.
\end{proposition}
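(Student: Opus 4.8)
The plan is to compute the algebraic dcpo $\Idl(P)$ by hand. Since the order on $P$ is reflexive, Lemma~\ref{reflexive-algebraic} gives that $\Idl(P)$ is algebraic with compact elements exactly $\dset\bot$, $\dset 0$ and $\dset 1$. An ideal $I$ of $P$ is a directed lower set, so it must contain $\bot$ (it is inhabited and $\bot$ is least) and, because $0$ and $1$ have no common upper bound in $P$, directedness forbids $I$ from containing both $0$ and $1$. Thus an ideal of $P$ amounts to a pair of propositions ``$0 \in I$'' and ``$1 \in I$'' whose conjunction is false, ordered by pointwise implication; in particular $\dset 0 \below I$ iff $0 \in I$, and $\dset 1 \below I$ iff $1 \in I$. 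Under excluded middle this collapses to the three-element poset $\dset\bot \below \dset 0, \dset 1$, which I will use for the converse at the end.

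For the forward direction, given an arbitrary proposition $Q$, I would consider the ideal $I_Q \coloneqq \set{\bot} \cup \set{0 \mid \lnot\lnot Q} \cup \set{1 \mid \lnot Q}$; it is directed because $\lnot\lnot Q$ and $\lnot Q$ cannot hold simultaneously, and it is plainly a lower set. First I would verify that $I_Q$ is maximal: if an ideal $J$ contains $I_Q$ and $0 \in J$, then $1 \notin J$ by directedness, so $\lnot Q$ is impossible (else $1 \in I_Q \subseteq J$), i.e.\ $\lnot\lnot Q$, whence $0 \in I_Q$; the case $1 \in J$ is symmetric, so $J = I_Q$. I would record that $\dset 0 \below I_Q \iff \lnot\lnot Q$ and $\dset 1 \below I_Q \iff \lnot Q$.

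The crux is to compute, for the two nontrivial compact elements, whether they are Hausdorff separated from $I_Q$, using Lemma~\ref{Hausdorff-separated-criterion} with the basis of compact elements. The decisive order facts are: $\dset\bot$ refines every compact element (take $z$ to be that element), whereas $\lnot(\dset 0 \refined \dset 1)$, because a common upper bound of $\dset 0$ and $\dset 1$ would be an ideal containing both $0$ and $1$. Hence the only possible witnesses that $\dset 0$ and $I_Q$ are Hausdorff separated are $a = \dset 0$ and $b = \dset 1$, which in addition requires $\dset 1 \ll I_Q$, i.e.\ $\dset 1 \below I_Q$; so $\dset 0$ and $I_Q$ are Hausdorff separated iff $\lnot Q$, and symmetrically $\dset 1$ and $I_Q$ are Hausdorff separated iff $\lnot\lnot Q$. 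Feeding this into Lemma~\ref{strongly-maximal-algebraic}: $I_Q$ is strongly maximal iff the condition holds for $c = \dset\bot$ (vacuous), for $c = \dset 0$ (namely $\lnot\lnot Q \vee \lnot Q$), and for $c = \dset 1$ (namely $\lnot Q \vee \lnot\lnot Q$) --- i.e.\ iff $\lnot Q \vee \lnot\lnot Q$. Therefore, if every maximal element of $\Idl(P)$ is strongly maximal, then $\lnot Q \vee \lnot\lnot Q$ holds for every proposition $Q$, which is weak excluded middle.

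For the converse, under excluded middle $\Idl(P)$ is the three-element poset $\dset\bot \below \dset 0, \dset 1$, whose maximal elements are $\dset 0$ and $\dset 1$. In this finite algebraic dcpo every element is compact, and for a maximal element $m$ and a compact element $c$ excluded middle decides $c \below m$; if $c \not\below m$, then the upper sets $\upset c$ and $\upset m = \set m$ are disjoint Scott opens, so $c$ and $m$ are Hausdorff separated, and Lemma~\ref{strongly-maximal-algebraic} gives that $m$ is strongly maximal. The step I expect to be the main obstacle is the Hausdorff-separation computation: one must argue carefully that in Lemma~\ref{Hausdorff-separated-criterion} the basis witness below $\dset 0$ is forced to be $\dset 0$ itself (not $\dset\bot$, which refines everything) and the witness below $I_Q$ is forced to be $\dset 1$, so that being Hausdorff separated is genuinely equivalent to $\dset 1 \below I_Q$, and hence to $\lnot Q$, rather than to a doubly negated weakening of it.
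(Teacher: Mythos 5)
Your proof is correct, but it takes a more hands-on route than the paper. The paper's proof of this proposition is an application of the immediately preceding Lemma~\ref{maximal-strongly-maximal-wem}, whose hypotheses are tailor-made for \(x = \dset 0\) and \(y = \dset 1\) in \(\Idl(P)\): one checks that these two elements are strongly maximal, intrinsically apart, and have \(\dset\bot\) as greatest lower bound, and the lemma then delivers, for each proposition \(Q\), a maximal element (your \(I_Q\) is exactly the \(\dirsup S\) of that lemma, up to swapping the roles of \(0\) and \(1\)) whose strong maximality forces decidability of \(\lnot Q\); the route there is via sharpness (Proposition~\ref{sharp-if-strongly-maximal} plus the ``sharp iff \(\lnot Q\) decidable'' clause of the lemma). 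You instead compute everything concretely in \(\Idl(P)\): you characterize the ideals, verify maximality of \(I_Q\) directly, and---this is the genuinely different part---compute Hausdorff separation from the compact elements via Lemma~\ref{Hausdorff-separated-criterion}, correctly observing that \(\dset\bot\) refines everything while \(\lnot(\dset 0 \refined \dset 1)\), so the witnesses are forced (this uses that the compact elements have decidable equality, which holds here since \(P\) does). That gives the sharper conclusion that \(I_Q\) is strongly maximal \emph{iff} \(\lnot Q \vee \lnot\lnot Q\), bypassing sharpness entirely. The paper's approach buys reusability (the same lemma drives Propositions~\ref{seq-strongly-max-wem} and~\ref{Dedekind-max-strong-max-wem}); yours buys a self-contained, fully explicit computation for this particular dcpo. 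Your converse direction (collapse to the three-element poset under excluded middle, then separate \(c \not\below m\) by the disjoint Scott opens \(\upset c\) and \(\set{m}\)) is also fine.
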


Theorem~4.3 of \cite{Smyth2006} states that an element \(x\) of a continuous
dcpo is strongly maximal if and only if every Lawson neighbourhood of \(x\)
contains a Scott neighbourhood of \(x\).
This requirement on neighbourhoods is, assuming excluded middle, equivalent to
the Lawson condition (the Scott topology and the Lawson topology coincide on the
subset of maximal elements), as used in~\cite[Corollary~4.4]{Smyth2006} and
proved in \cite[Lemma~V\nobreakdash-6.5]{GierzEtAl2003}. Inspecting the proof
in~\cite{GierzEtAl2003}, we believe that excluded middle is essential. However,
we can still prove a constructive analogue of \cite[Theorem~4.3]{Smyth2006}, but
it requires a positive formulation of the subbasic opens of the Lawson topology,
using the relation \({\posnotbelow}\) rather than \({\not\below}\).

\begin{definition}\label{Lawson-subbasics}
  The subbasic \emph{Lawson closed} subsets of a dcpo \(D\) are the Scott closed
  subsets and the upper sets of the form \(\upset x\) for \(x \in D\).
  The subbasic \emph{Lawson opens} are the Scott opens and the sets of
  the form \(\set{y \in D \mid x \posnotbelow y}\) for \(x \in D\).
\end{definition}

In the presence of excluded middle, the subset
\(\set{y \in D \mid x \posnotbelow y}\) is equal to \(D \setminus \upset x\), so
with excluded middle the above definition is equivalent to the classical
definition of the subbasic Lawson opens as
in~{\cite[pp.~211--212]{GierzEtAl2003}}.

\begin{theorem}[cf.\ Theorem~4.3 of \cite{Smyth2006}]
  \label{Lawson-strongly-maximal}
  An element \(x\) of a continuous dcpo is strongly maximal if and only if \(x\)
  is sharp and every Lawson neighbourhood of \(x\) contains a Scott
  neighbourhood of \(x\).
\end{theorem}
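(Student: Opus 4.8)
The plan is to prove the biconditional in two directions, using the characterization of strongly maximal elements via the basis (Lemma~\ref{strongly-maximal-basis}), the fact that strongly maximal elements are sharp (Proposition~\ref{sharp-if-strongly-maximal}), and the Hausdorff-separation criterion (Lemma~\ref{Hausdorff-separated-criterion}). Throughout, I would unfold the definition of the Lawson topology from Definition~\ref{Lawson-subbasics}: a Lawson neighbourhood of $x$ is an intersection of finitely many subbasic Lawson opens containing $x$, and each such subbasic open is either Scott open or of the form $\set{y \mid a \posnotbelow y}$ for some $a \in D$ with $a \posnotbelow x$. Since finite intersections with Scott opens are harmless (a Scott neighbourhood of $x$ intersected with a Scott neighbourhood of $x$ is again one), the content of the neighbourhood condition reduces to: for every finite family $a_1, \dots, a_n$ with $a_i \posnotbelow x$ for all $i$, the Lawson open $\set{y \mid a_1 \posnotbelow y} \cap \dots \cap \set{y \mid a_n \posnotbelow y}$ contains a Scott neighbourhood of $x$.

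For the forward direction, assume $x$ is strongly maximal. Sharpness is immediate from Proposition~\ref{sharp-if-strongly-maximal}. For the neighbourhood condition, by the reduction above it suffices to treat a single subbasic Lawson open $\set{y \mid a \posnotbelow y}$ with $a \posnotbelow x$ (finite intersections then follow by intersecting the resulting Scott neighbourhoods). From $a \posnotbelow x$, by Lemma~\ref{posnotbelow-criterion} there is $b$ with $b \ll x$ and $b \not\below a$; interpolating (Proposition~\ref{interpolation}) we may pick $u \ll v \ll x$ with $u \not\below a$ — actually I want to set things up so that applying strong maximality to a suitable pair $u \ll v$ gives, in the disjunction, not the ``$u \ll x$'' branch but genuinely the Hausdorff-separation branch for $a$ and $x$. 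The cleanest route: since $a \posnotbelow x$ we get that $a$ and $x$ are Hausdorff separated more or less directly — I would check that $a \posnotbelow x$ together with strong maximality of $x$ yields $\lnot(a \refined x)$-type information, then use Lemma~\ref{Hausdorff-separated-criterion} to produce $a' \ll x$ with $\lnot(a' \refined a)$, whence $\upupset a'$ is a Scott neighbourhood of $x$ contained in $\set{y \mid a \posnotbelow y}$: indeed if $y \in \upupset a'$ then $a' \ll y$, and combined with $\lnot(a' \refined a)$ and the fact that $a' \ll y$ witnesses a lower approximant of $y$ incompatible with $a$, Lemma~\ref{posnotbelow-criterion} gives $a \posnotbelow y$.

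For the converse, assume $x$ is sharp and every Lawson neighbourhood of $x$ contains a Scott neighbourhood. I would verify the basis form of strong maximality from Lemma~\ref{strongly-maximal-basis}: given $a \ll b$ in the basis $B$, I must show $a \ll x$ or $b$ and $x$ are Hausdorff separated. By sharpness of $x$ we have $a \ll x$ or $b \not\below x$; in the first case we are done, so assume $b \not\below x$. By Proposition~\ref{posnotbelow-notbelow}(v) applied suitably — or more carefully, using that $b \not\below x$ and sharpness gives $b \posnotbelow x$ (this is exactly Theorem~\ref{sharp-tight-cotransitive}(i) contrapositively, or one argues directly) — we get $b \posnotbelow x$. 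Then $\set{y \mid b \posnotbelow y}$ is a subbasic Lawson open containing $x$, hence a Lawson neighbourhood of $x$, so by hypothesis it contains a Scott neighbourhood of $x$; shrinking via continuity, there is $c \ll x$ with $\upupset c \subseteq \set{y \mid b \posnotbelow y}$. Now I claim $\lnot(c \refined b)$: if $c \ll z$ and $b \ll z$ for some $z$, then $z \in \upupset c \subseteq \set{y \mid b \posnotbelow y}$, so $b \posnotbelow z$, contradicting $b \ll z$ (which implies $b \below z$, so $\lnot(b \posnotbelow z)$ by Proposition~\ref{posnotbelow-notbelow}(i) contrapositive, since $b \below z$ implies $b \leq z$ implies $\lnot(b\posnotbelow z)$). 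Hence by Lemma~\ref{Hausdorff-separated-criterion}, with $c \ll x$ and $b \ll b$ (taking the second witness to be any basis element way below $b$, e.g. interpolate), $b$ and $x$ are Hausdorff separated, completing the verification.

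The main obstacle I anticipate is the bookkeeping around which subbasic opens appear in a Lawson neighbourhood and making sure the finite-intersection case genuinely reduces to the single-generator case without circularity — in particular, being careful that in the forward direction I extract the Hausdorff-separation branch of the strong-maximality disjunction rather than the trivial ``$u \ll x$'' branch, which requires choosing the approximants $u \ll v$ so that $u$ is provably incompatible with the given $a$. The other delicate point is the constructive manipulation $b \not\below x \land (x \text{ sharp}) \Rightarrow b \posnotbelow x$, which must be justified from the definition of sharpness (apply the sharpness condition to $b \ll b'$ for a suitable interpolant, or invoke Theorem~\ref{sharp-tight-cotransitive}) rather than the classically-obvious but constructively-invalid $\lnot\lnot$ argument; everything else is routine use of interpolation, Lemma~\ref{posnotbelow-criterion}, and Lemma~\ref{Hausdorff-separated-criterion}.
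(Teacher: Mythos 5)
Your overall decomposition---reduce the Lawson condition to the single subbasic opens \(W_a = \set{y \in D \mid a \posnotbelow y}\), prove the forward direction by extracting Hausdorff separation of \(a\) and \(x\) from strong maximality, and prove the converse by combining sharpness with a refinement-incompatibility witness extracted from the Lawson condition---is the right one. But both directions have a concrete gap at exactly the steps you flag as delicate. In the forward direction you unfold \(a \posnotbelow x\) as ``there is \(b \ll x\) with \(b \not\below a\)'', but Lemma~\ref{posnotbelow-criterion} gives the opposite: there is \(b \ll a\) with \(b \not\below x\). This matters because \(b \ll a\) is precisely the pair to which strong maximality of \(x\) must be applied: since \(b \not\below x\) rules out the branch \(b \ll x\), the disjunction forces \(a\) and \(x\) to be Hausdorff separated. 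You instead look for a pair \(u \ll v \ll x\) and (as you yourself note) never find one that lands in the separation branch. Once you do have disjoint Scott opens \(U \ni a\) and \(V \ni x\), the containment \(V \subseteq W_a\) is immediate from the definition of \({\posnotbelow}\): for \(y \in V\) the open \(U\) contains \(a\) but not \(y\). Your detour through \(\lnot\pa*{a' \refined a}\) is both unnecessary and, as written, incomplete, since \(a' \ll y\) together with \(\lnot\pa*{a' \refined a}\) does not by itself produce the required witness \(c \ll a\) with \(c \not\below y\) (you would need to retain both witnesses from Lemma~\ref{Hausdorff-separated-criterion} and interpolate below \(a\)).

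In the converse direction there are two problems. First, the passage from \(b \not\below x\) to \(b \posnotbelow x\) for sharp \(x\) is not available: the contrapositive of Theorem~\ref{sharp-tight-cotransitive}(i) only yields \(\lnot\lnot\pa*{b \posnotbelow x}\). The correct move is the strengthened form of sharpness (the unnumbered proposition immediately following the definition of sharpness), which gives \(a \ll x\) or \(b \posnotbelow x\) outright. Second, and more seriously, your final appeal to Lemma~\ref{Hausdorff-separated-criterion} does not go through: you establish \(\lnot\pa*{c \refined b}\) with \(c \ll x\), but the lemma needs \(\lnot\pa*{c \refined d}\) for some \(d \ll b\), and \(\lnot\pa*{c \refined b}\) does not imply \(\lnot\pa*{c \refined d}\) for \(d \ll b\): a witness \(z\) of \(c \refined d\) satisfies \(d \ll z\), from which \(b \ll z\) does not follow. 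The repair is to interpolate \emph{before} applying sharpness: given \(u \ll v\), choose \(u \ll w \ll v\) and apply sharpness to the pair \(u \ll w\), obtaining \(u \ll x\) or \(w \posnotbelow x\); in the latter case the Lawson condition gives \(c \ll x\) with \(\upupset c \subseteq W_w\), your argument then yields \(\lnot\pa*{w \refined c}\) exactly as you wrote it, and now \(w \ll v\) and \(c \ll x\) are legitimate witnesses for Lemma~\ref{Hausdorff-separated-criterion}, so \(v\) and \(x\) are Hausdorff separated.
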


By Proposition~\eqref{em-sharp}, every element is sharp if excluded middle is
assumed, so in that case, we can drop the requirement that \(x\) is
sharp. Hence, in the presence of excluded middle, we recover
Smyth's~\cite[Theorem~4.3]{Smyth2006} from
Theorem~\ref{Lawson-strongly-maximal}.

\subsection{The subspace of strongly maximal elements}
The classical interest in strong maximality comes from the fact that, while the
subspace of maximal elements may fail to be
Hausdorff~\cite[Example~4]{Heckmann1998}, the subspace of strongly maximal
elements with the relative Scott topology is both Hausdorff and
regular~\cite[Theorem~4.6]{Smyth2006}. We offer constructive proofs of these
claims, with the proviso that the Hausdorff condition is formulated with respect
to the intrinsic apartness.
\begin{proposition}
  The subspace of strongly maximal elements of a continuous dcpo \(D\) with the
  relative Scott topology is Hausdorff, i.e.\ if \(x\)~and~\(y\) are strongly
  maximal, then \(x \apart y\) if and only if there are disjoint Scott opens
  \(U\) and \(V\) such that \(x \in U\) and \(y \in V\).
\end{proposition}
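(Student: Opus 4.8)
The plan is to dispatch the ``if'' direction immediately and then do the real work on the ``only if'' direction, using strong maximality together with the interpolation property.

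For the ``if'' direction I would argue directly: if $U$ and $V$ are disjoint Scott opens with $x \in U$ and $y \in V$, then $y \notin U$ since $y \in V$ and $U \cap V = \emptyset$; hence $U$ is a Scott open containing $x$ but not $y$, so $x \posnotbelow y$ and therefore $x \apart y$. This direction uses neither continuity nor strong maximality.

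For the ``only if'' direction, suppose $x$ and $y$ are strongly maximal and $x \apart y$, so that $x \posnotbelow y$ or $y \posnotbelow x$. Since the situation is symmetric in $x$ and $y$, I would treat the case $x \posnotbelow y$, the case $y \posnotbelow x$ being symmetric with strong maximality of $x$ in place of that of $y$. First, by Lemma~\ref{posnotbelow-criterion}, I obtain $b \in D$ with $b \ll x$ and $b \not\below y$. Next, by the interpolation property (Proposition~\ref{interpolation}), I pick $u \in D$ with $b \ll u \ll x$. Then I apply strong maximality of $y$ to the pair $b \ll u$: either $b \ll y$, or $u$ and $y$ are Hausdorff separated. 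The first alternative is impossible, because $b \ll y$ implies $b \below y$, contradicting $b \not\below y$; hence there are disjoint Scott opens $U$ and $V$ with $u \in U$ and $y \in V$. Finally I observe that $U$ is automatically a neighbourhood of $x$: since $u \ll x$ we have $u \below x$, and $U$, being Scott open, is an upper set, so $x \in U$. Thus $U$ and $V$ are disjoint Scott opens separating $x$ and $y$, as required.

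I do not anticipate a genuine obstacle. The only point requiring a little care --- and in effect the whole idea of the proof --- is the last observation: a Scott open neighbourhood of the interpolant $u$ is already a neighbourhood of $x$ by upward closure, so Hausdorff separation of $u$ from $y$ upgrades for free to Hausdorff separation of $x$ from $y$. It is also worth noting that in the case $x \posnotbelow y$ only strong maximality of $y$ is actually used; strong maximality of $x$ is needed solely to handle the symmetric case $y \posnotbelow x$.
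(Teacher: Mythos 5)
Your proof is correct and follows essentially the same route as the paper's: the ``if'' direction is immediate, and the ``only if'' direction obtains $b \ll x$ with $b \not\below y$ from Lemma~\ref{posnotbelow-criterion}, interpolates to get $b \ll u \ll x$, applies strong maximality of $y$ to $b \ll u$ to rule out $b \ll y$, and then uses upward closure of Scott opens to transfer the separating neighbourhood of $u$ to $x$. All steps, including discharging the first disjunct by contradiction, are constructively valid.
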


\begin{proposition}
  The subspace of strongly maximal elements of a continuous dcpo \(D\) with the
  relative Scott topology is regular, i.e.\ every Scott neighbourhood of a point
  \(x \in D\) contains a Scott closed neighbourhood of \(x\).
\end{proposition}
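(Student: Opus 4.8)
The plan is to work inside the subspace $M$ of strongly maximal elements of $D$ with the relative Scott topology, and to show that every basic Scott neighbourhood of a strongly maximal $x$ contains a Scott closed (in $M$) neighbourhood of $x$. Since $\set{\upupset b \mid b \in B}$ is a basis for the Scott topology (Example~\ref{ex:Scott-opens}, Lemma~\ref{basis-basis}) for any basis $B$ of the continuous dcpo $D$ — and we may as well take $B = D$ — it suffices to treat a neighbourhood of the form $M \cap \upupset u$ with $u \ll x$. By the interpolation property (Proposition~\ref{interpolation}) pick $v$ with $u \ll v \ll x$, and then again $w$ with $v \ll w \ll x$. The candidate Scott closed neighbourhood will be $M \cap \upset v$ (the upper set of $v$, intersected with $M$): it is a lower set's complement shape, but more precisely $\upset v$ is an upper set that is the relevant ``closed-looking'' piece, so I will instead take the closure in $M$ of $M \cap \upupset v$ and show it is contained in $M \cap \upset v \subseteq M \cap \upupset u$.

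The key steps, in order: (1) $x \in M \cap \upupset v$ since $v \ll x$, so this is a genuine Scott neighbourhood of $x$ in $M$. (2) Its closure $C$ in the relative topology on $M$ is a Scott closed subset of $M$ containing $x$, and $C \subseteq M \cap \upupset u$: indeed I will argue $C \subseteq M \cap \upset v$, and $\upset v \cap \ddset{}$-reasoning via $u \ll v$ gives $\upset v \subseteq \upupset u$ when restricted to strongly maximal elements. The heart of the matter is step (2): showing $C \subseteq M \cap \upset v$. For this I take $y \in M$ with $y \notin \upset v$, i.e.\ $v \not\below y$, and must produce a Scott open around $y$ missing $M \cap \upupset v$; then $y \notin C$. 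Here strong maximality of $y$ enters decisively: applying Definition~\ref{def:strongly-maximal} to the pair $u \ll v$ (using $u \ll v$ as the ``$u \ll v$'' there), either $u \ll y$ — which I must rule out or exploit — or $v$ and $y$ are Hausdorff separated, giving disjoint Scott opens $V \ni v$ and $W \ni y$; then $W$ separates $y$ from every point of $\upupset v \subseteq$ ... but one has to be careful, since points of $\upupset v$ need not lie in $V$. The cleaner route: use Lemma~\ref{Hausdorff-separated-criterion} to get $a \ll v$ and $b \ll y$ with $\lnot(a \refined b)$; then $\upupset b$ is a Scott open around $y$, and any $z \in \upupset v$ has $a \ll v \ll z$ hence $a \ll z$; if also $z \in \upupset b$ then $b \ll z$, contradicting $\lnot(a \refined b)$ via $z$ as the common upper bound. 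So $\upupset b$ is disjoint from $\upupset v$, witnessing $y \notin C$.

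To finish, I need to handle the case $u \ll y$ cleanly: when applying strong maximality of $y$ to $u \ll v$, the first disjunct $u \ll y$ does \emph{not} immediately give $y \notin \upset v$, so I should instead apply strong maximality of $y$ starting from a strictly interior pair. Concretely, since $v \not\below y$ and $y$ is strongly maximal hence sharp (Proposition~\ref{sharp-if-strongly-maximal}), and since $u \ll v$, sharpness of $y$ applied to $u \ll v$ gives $u \ll y$ or $v \not\below y$ — the latter we already have — so no contradiction arises; the point is rather to apply Definition~\ref{def:strongly-maximal} to a pair $u' \ll v'$ chosen so that $u' \ll y$ is impossible. Taking $v' \ll x$ with $v \ll v'$ and $u' = v$: from $v \ll v' $ and strong maximality of $y$, either $v \ll y$ (whence $v \below y$, contradicting $y \notin \upset v$) or $v'$ and $y$ are Hausdorff separated, and then Lemma~\ref{Hausdorff-separated-criterion} supplies $a \ll v'$, $b \ll y$ with $\lnot(a \refined b)$; since $v \ll v'$ gives $a$... — here I instead want $a$ with $v \ll a$, which I arrange by noting $v' \in \upupset v$ and re-interpolating. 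The main obstacle, then, is this bookkeeping of way-below witnesses: arranging, through repeated interpolation, that the Hausdorff-separation witness $a$ satisfies $v \ll a$ (so that every $z \in \upupset v$ has $a \ll z$ through $v \ll a \ll z$... no — through $a \ll z$ requires $a \below z$; rather $v \ll z$ and I need $a \below v$, so I want $a \ll v$, not $v \ll a$). I will sort the direction by: apply strong maximality of $y$ to the pair $w \ll x$ where $v \ll w$; disjunct one is $w \ll y$, impossible since then $v \below w \below y$; disjunct two Hausdorff-separates $x$ and $y$ — not what I want either. The correct pair is $u \ll v$ with the \emph{second} coordinate being what must be $\not\below y$, so applying strong maximality of $y$ to $u \ll v$: disjunct $u \ll y$ is consistent, disjunct ``$v, y$ Hausdorff separated'' gives (Lemma~\ref{Hausdorff-separated-criterion}) $a \ll v$, $b \ll y$, $\lnot(a\refined b)$; then for $z \in \upupset v$, $a \ll v \ll z$ so $a \ll z$, and $\upupset b \ni y$ is disjoint from $\upupset v$ — \emph{provided we are in the Hausdorff-separated disjunct}. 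To force that disjunct, observe that if instead $u \ll y$ held for \emph{every} choice of $u \ll v$, then $v \below y$ by Lemma~\ref{below-in-terms-of-way-below}, contradicting $y \notin \upset v$; so \emph{some} interpolant $u_0 \ll v$ has $u_0 \not\ll y$, and applying strong maximality of $y$ to $u_0 \ll v$ forces the Hausdorff-separated disjunct. That is the crux, and the rest is the routine verification that $M \cap \upset v$ is Scott closed in $M$ (it is the complement in $M$ of the union over such $y$ of Scott opens, hence open-complement, i.e.\ closed) and that $M \cap \upset v \subseteq M \cap \upupset u$, which follows since any strongly maximal $z$ with $v \below z$ has $u \ll v \below z$ hence $u \ll z$.
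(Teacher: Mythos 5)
Your overall strategy---interpolate \(u \ll v \ll x\) inside a basic neighbourhood \(\upupset u\) and exhibit a relatively closed neighbourhood of \(x\) squeezed between \(\upupset v\) and \(\upupset u\)---is the right one, and classically your argument works. But the crux of your argument rests on a non-constructive inference. To place a strongly maximal \(y\) with \(v \not\below y\) outside your candidate closed set, you must force the Hausdorff-separation disjunct of Definition~\ref{def:strongly-maximal}, and you do so by arguing: if \(u \ll y\) held for every \(u \ll v\) then \(v \below y\) by Lemma~\ref{below-in-terms-of-way-below}, ``so some interpolant \(u_0 \ll v\) has \(u_0 \not\ll y\).'' That last step is the passage from \(\lnot\forall\) to \(\exists\lnot\), which is not constructively valid. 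Without it you cannot show that \(\set{y \in M \mid v \not\below y}\) is relatively open---and Proposition~\ref{interior-of-complement} tells you not to expect this, since the interior of \(\set{y \mid v \not\below y}\) is the generally smaller \(\set{y \mid v \posnotbelow y}\)---so \(M \cap \upset v\) cannot be shown relatively closed. Relatedly, even granting the separation step, proving ``\(v \not\below y\) implies \(y \notin C\)'' for \(C\) the closure of \(M \cap \upupset v\) only yields \(C \subseteq \set{y \mid \lnot\lnot\pa*{v \below y}}\), which does not constructively give \(C \subseteq \upupset u\).

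The repair is to run the argument in the opposite direction, which is what the paper does. Let \(C\) be the Scott closure of \(\upupset v\) in \(D\); then \(C \cap M\) is a relatively closed neighbourhood of \(x\), and for \(y \in C \cap M\) one shows \(u \ll y\) directly: strong maximality of \(y\) applied to the pair \(u \ll v\) gives either \(u \ll y\) (which is the goal) or disjoint Scott opens \(V \ni v\) and \(W \ni y\); in the second case \(\upupset v \subseteq V\) because \(V\) is an upper set containing \(v\), so \(\upupset v\) lies in the Scott closed set \(D \setminus W\) (Lemma~\ref{complement-of-open-is-closed}), whence \(C \subseteq D \setminus W\), contradicting \(y \in C \cap W\). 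Here the disjunction is used with one branch as the desired conclusion and the other refuted outright, which is constructively unproblematic; your orientation instead needs to select a branch by ruling out the first, and that is exactly where the taboo enters.
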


\section{Examples}\label{sec:examples}
In this final section before the conclusion, we illustrate the foregoing
notions of intrinsic apartness, sharpness and strong maximality, by studying
several natural examples. The first three examples are generalized domain
environments in the sense of Heckmann~\cite{Heckmann1998}: we consider dcpos
\(D\) and topological spaces \(X\) such that \(X\) \emph{embeds} into the
subspace of maximal elements of \(D\). In fact, we will see that \(X\) is
homeomorphic to the subspace of \emph{strongly} maximal elements of \(D\).
Specifically, we will consider Cantor space, Baire space and the real line.
The final example shows that sharpness characterizes exactly those lower reals
that are located.

\subsection{The Cantor and Baire domains}\label{sec:Cantor-and-Baire-domains}
Fix an inhabited set \(A\) with decidable equality. Typically, we will be
interested in \(A = \Two \coloneqq \set{0,1}\) and \(A = \Nat\).

\begin{definition}\hfill
  \begin{enumerate}
  \item Write \(\pa*{\Finseq , \preceq}\) for the poset of finite sequences on
    \(A\) ordered by prefix. We write \(\Seqdom\) for the ideal completion of
    \(\pa*{\Finseq , \preceq}\), which is an algebraic dcpo by
    Lemma~\ref{reflexive-algebraic}.
  \item For an infinite sequence \(\alpha\), we write \(\initseg{\alpha}{n}\)
    for the first \(n\) elements of \(\alpha\).  Given a finite
    sequence~\(\sigma\), we write \(\sigma \prec \alpha\) if \(\sigma\) is an
    initial segment of \(\alpha\).
  \item We write \(\Seqspace\) for the space \(A^\Nat\) of infinite sequences on
    \(A\) with the product topology, taking the discrete topologies on \(\Nat\)
    and \(A\). A basis of opens is given by the sets
    \(\set{\alpha \in \Seqspace \mid \sigma \prec \alpha}\) for finite sequences
    \(\sigma\). The space \(\Seqspace\) has a natural notion of apartness:
    \(\alpha \apart_{\Seqspace} \beta \iff \exists_{n \in
      \Nat}\,\initseg{\alpha}{n} \neq \initseg{\beta}{n}\).
  \end{enumerate}
\end{definition}

\begin{definition}
  If we take \(A = \Two\), then \(\Seqspace\) is Cantor space and we call
  \(\Seqdom\) the \emph{Cantor domain}; and if we take \(A = \Nat\), then
  \(\Seqspace\) is Baire space and we call \(\Seqdom\) the \emph{Baire domain}.
\end{definition}

\begin{definition}
  We define an injection \(\iota \colon \Seqspace \hookrightarrow \Seqdom\) by
  \(\iota(\alpha) \coloneqq \bigcup_{\sigma\prec\alpha} \dset\sigma = \set{\tau
    \in \Finseq \mid \tau \prec \alpha}\).
\end{definition}

\begin{theorem}\label{seq-iota-strongly-maximal}
  The image of \(\iota\) is exactly the subset of strongly maximal elements of
  \(\Seqdom\).
\end{theorem}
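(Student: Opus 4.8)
\emph{Proof outline.}
The plan is to reduce strong maximality in \(\Seqdom\) to a concrete combinatorial criterion on ideals, and then prove the two inclusions. I would first recall that \(\Seqdom = \Idl(\Finseq,\preceq)\) is algebraic by Lemma~\ref{reflexive-algebraic}, with compact elements exactly the principal ideals \(\dset\sigma\) for \(\sigma \in \Finseq\), and note that each element of \(\Seqdom\) is a rounded ideal, i.e.\ a \(\preceq\)-downward closed subset of \(\Finseq\), and is moreover a \emph{chain}: any two of its members share an upper bound in \(\Finseq\), and that forces the shorter one to be a prefix of the longer. Since \(A\) has decidable equality, comparability of finite sequences, and equality of finite sequences of equal length, are decidable. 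Writing \(\sigma a\) for the one-term extension of \(\sigma \in \Finseq\) by \(a \in A\), I would next unwind the abstract notions: for any ideal \(z\) one has \(\dset\sigma \ll z \iff \dset\sigma \below z \iff \sigma \in z\) (using compactness of \(\dset\sigma\)), and \(\dset\sigma \refined \dset\tau\) holds iff some ideal contains both \(\sigma\) and \(\tau\), which by directedness is equivalent to \(\sigma\) and \(\tau\) being comparable. Plugging this into Lemma~\ref{Hausdorff-separated-criterion} (with the basis of compact elements) shows that \(x,y \in \Seqdom\) are Hausdorff separated iff there exist \(\sigma \in x\) and \(\tau \in y\) that are incomparable; and feeding that into Lemma~\ref{strongly-maximal-algebraic}, together with the observation that whenever \(\sigma' \preceq \sigma\) and \(\sigma'\) is incomparable with \(\tau\) then \(\sigma\) itself is incomparable with \(\tau\), gives the criterion I will use, call it \((\star)\): \emph{\(x \in \Seqdom\) is strongly maximal if and only if for every \(\sigma \in \Finseq\), either \(\sigma \in x\), or there exists \(\tau \in x\) incomparable with \(\sigma\).}

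For the inclusion that the image of \(\iota\) consists of strongly maximal elements, I would fix \(\alpha \in \Seqspace\), so \(\iota(\alpha) = \set{\tau \in \Finseq \mid \tau \prec \alpha}\), and check \((\star)\) directly: given \(\sigma\) of length \(n\), I decide (a finite computation) whether \(\sigma = \initseg{\alpha}{n}\); if so then \(\sigma \in \iota(\alpha)\), and otherwise \(\sigma\) and \(\initseg{\alpha}{n}\) are distinct sequences of length \(n\), hence incomparable, while \(\initseg{\alpha}{n} \in \iota(\alpha)\). So \(\iota(\alpha)\) is strongly maximal.

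For the converse, let \(x\) be strongly maximal; I would recover \(\alpha\) by building, for each \(n\), the unique length-\(n\) sequence \(\sigma_n \in x\). By induction: the empty sequence is in \(x\) since \(x\) is inhabited and downward closed; and given \(\sigma_n \in x\), I pick any \(a \in A\) (using that \(A\) is inhabited) and apply \((\star)\) to \(\sigma_n a\). If \(\sigma_n a \in x\), I take \(\sigma_{n+1} \coloneqq \sigma_n a\). Otherwise there is \(\tau \in x\) incomparable with \(\sigma_n a\); since \(x\) is a chain, \(\tau\) is comparable with \(\sigma_n\), but \(\tau\) cannot be a prefix of \(\sigma_n\) (that would make it comparable with \(\sigma_n a\), contradiction), so \(\tau\) strictly extends \(\sigma_n\), and its length-\((n{+}1)\) prefix lies in \(x\) and serves as \(\sigma_{n+1}\). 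Uniqueness of each \(\sigma_n\), which makes the recursion deterministic, is immediate from \(x\) being a chain. Letting \(\alpha \in \Seqspace\) be the sequence with \(\initseg{\alpha}{n} = \sigma_n\) for all \(n\), I then check \(\iota(\alpha) = x\): the inclusion \(\iota(\alpha) \subseteq x\) holds because each \(\sigma_n \in x\), and \(x \subseteq \iota(\alpha)\) holds because any \(\tau \in x\) of length \(m\) must equal the unique length-\(m\) member \(\sigma_m = \initseg{\alpha}{m}\) of the chain \(x\), whence \(\tau \prec \alpha\).

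I expect the recursion in the last paragraph to be the main obstacle. The naive attempt — deciding for each \(a \in A\) whether \(\sigma_n a \in x\) — is constructively unavailable (it fails already for \(A = \Nat\)), so the crux is the observation that a single application of \((\star)\) to one arbitrarily chosen one-term extension of \(\sigma_n\) already forces \emph{some} one-term extension of \(\sigma_n\) into \(x\). Everything else — the decidability bookkeeping, the fact that ideals over \((\Finseq,\preceq)\) are downward closed chains, and the well-definedness of the recursion via uniqueness — should be routine.
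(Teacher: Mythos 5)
Your proposal is correct and follows essentially the same route as the paper's own proof: reduce Hausdorff separation in \(\Seqdom\) to the existence of incomparable finite sequences in the two ideals, obtain the criterion \((\star)\) via Lemma~\ref{strongly-maximal-algebraic}, verify it for \(\iota(\alpha)\) by decidable comparison with \(\initseg{\alpha}{n}\), and for the converse recover \(\alpha\) by showing inductively that a strongly maximal ideal contains a (necessarily unique) sequence of every length. In particular, your key step --- applying \((\star)\) to a single arbitrary one-term extension of \(\sigma_n\) and using that ideals are chains, rather than attempting to decide membership of each \(\sigma_n a\) --- is exactly the move the paper makes.
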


\begin{proposition}\label{seq-strongly-max-wem}
  Suppose that \(A\) has at least two elements. If every
  maximal element of \(\Seqdom\) is strongly maximal, then weak excluded middle
  holds.
\end{proposition}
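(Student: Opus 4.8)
The plan is to deduce this from Lemma~\ref{maximal-strongly-maximal-wem} by exhibiting two strongly maximal elements of $\Seqdom$ that are intrinsically apart and possess a greatest lower bound, taking them to be the images under $\iota$ of two constant sequences. This parallels Proposition~\ref{maximal-strongly-maximal-wem-alt}: the two incomparable one-step extensions of the empty sequence give $\Seqdom$ a copy, near its root, of exactly the branching needed to encode a proposition.

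Concretely: since $A$ has at least two elements, fix $a_0 \neq a_1$ in $A$ and let $\alpha,\beta \in \Seqspace$ be the sequences constantly equal to $a_0$ and to $a_1$. Set $x \coloneqq \iota(\alpha)$ and $y \coloneqq \iota(\beta)$. By Theorem~\ref{seq-iota-strongly-maximal} both $x$ and $y$ are strongly maximal, giving hypothesis~\ref{both-strongly-max} of Lemma~\ref{maximal-strongly-maximal-wem}. For hypothesis~\ref{glb}, recall $\iota(\gamma) = \set{\tau \in \Finseq \mid \tau \prec \gamma}$; since $\alpha$ and $\beta$ already differ in the first coordinate, the only common prefix is the empty sequence, so $x \cap y = \dset{\seq{}}$, which is the least element $\bot$ of $\Seqdom$. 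Any lower bound of $x$ and $y$ in $\Seqdom$ is an ideal contained in $x \cap y$, hence equal to $\bot$, so $x \glb y$ exists and equals $\bot$. For hypothesis~\ref{apart}, the compact element $\dset{\seq{a_0}}$ satisfies $\dset{\seq{a_0}} \ll x$ (as $\seq{a_0} \prec \alpha$) but $\dset{\seq{a_0}} \not\below y$ (as $\seq{a_0} \not\prec \beta$, because $a_0 \neq a_1$), so $\upupset\dset{\seq{a_0}}$ is a Scott open containing $x$ but not $y$; hence $x \posnotbelow y$, and in particular $x \apart y$.

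With all three hypotheses verified, Lemma~\ref{maximal-strongly-maximal-wem} gives, for every proposition $P$, that the supremum $\dirsup S$ of the directed set $S \coloneqq \set{\bot} \cup \set{x \mid \lnot P} \cup \set{y \mid \lnot\lnot P}$ is a maximal element of $\Seqdom$ which is strongly maximal only if $\lnot P$ is decidable. Assuming that every maximal element of $\Seqdom$ is strongly maximal, $\dirsup S$ is therefore strongly maximal, so $\lnot P$ is decidable, i.e.\ $\lnot P \vee \lnot\lnot P$ holds; as $P$ was arbitrary, this is weak excluded middle.

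I do not expect a genuine obstacle: the mathematical substance is entirely contained in Lemma~\ref{maximal-strongly-maximal-wem}, and this proposition merely instantiates it. The only points needing a line of verification are that $\bot$ is the greatest lower bound of $x$ and $y$ and that $\upupset\dset{\seq{a_0}}$ separates them, both immediate from the description of $\Seqdom$ as the rounded ideal completion of $(\Finseq,\preceq)$ together with the formula $\iota(\gamma) = \set{\tau \mid \tau \prec \gamma}$; one should also note in passing that $S$ is directed (the $x$-clause and $y$-clause cannot both be inhabited) so that $\dirsup S$ is defined.
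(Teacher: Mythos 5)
Your proposal is correct and matches the paper's own argument: both instantiate Lemma~\ref{maximal-strongly-maximal-wem} with the images under \(\iota\) of two constant sequences on distinct elements of \(A\), using Theorem~\ref{seq-iota-strongly-maximal} for strong maximality, the empty sequence's ideal \(\bot\) as the greatest lower bound, and a compact element \(\dset{\seq{a_0}}\) to witness apartness. No gaps.
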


\begin{definition}
  \emph{Markov's Principle} is the assertion that for every infinite binary
  sequence \(\phi\) we have that
  \(\lnot \pa*{\forall_{n \in \Nat}\,\phi(n) = 0}\) implies
  \(\exists_{k \in \Nat}\,\phi(k) =1\).
\end{definition}
Markov's Principle~\cite{BridgesRichman1987} follows from excluded middle, but
is independent in constructive mathematics, i.e.\ Markov's Principle is not
provable and neither is its negation.

\begin{theorem}
  For the Cantor domain, the strongly maximal elements are exactly those
  elements that are both sharp and maximal.
  For the Baire domain, the strongly maximal elements are exactly the elements
  that are both sharp and maximal if and only if Markov's Principle holds.
\end{theorem}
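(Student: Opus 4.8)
By Proposition~\ref{maximal-if-strongly-maximal} and Proposition~\ref{sharp-if-strongly-maximal}, every strongly maximal element of a continuous dcpo — so in particular of $\Seqdom$, whether $A = \Two$ or $A = \Nat$ — is both maximal and sharp. Hence in both parts of the statement only the converse implication is at issue, and by Theorem~\ref{seq-iota-strongly-maximal} it suffices to prove that a sharp and maximal element of $\Seqdom$ lies in the image of $\iota$. The plan is to recover the corresponding infinite sequence by a König's-lemma-style recursion, using the decidability afforded by sharpness together with — for the Cantor domain — finiteness of $\Two$, and — for the Baire domain — Markov's Principle. First I would unwind the hypotheses. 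Since $\Seqdom$ is the ideal completion of the reflexive poset $(\Finseq,\preceq)$, its elements are exactly the inhabited, down-closed subsets $I \subseteq \Finseq$ linearly ordered by $\preceq$ (two members of a prefix-directed set are prefixes of a common member, hence $\preceq$-comparable), and by Lemma~\ref{reflexive-algebraic} the compact elements are the prefix-downsets $\dset\sigma$, with $\dset\sigma \below I$ iff $\sigma \in I$; so by Proposition~\ref{sharp-algebraic}, $I$ is sharp iff $I$ is a decidable subset of $\Finseq$. I would also record that if some $\sigma \in I$ had no child $\sigma\cdot a$ in $I$, then directedness would force $\sigma$ to be the greatest element of $I$, whence $I = \dset\sigma \subsetneq \dset{\sigma\cdot a}$ for any $a \in A$ ($A$ is inhabited), contradicting maximality; so in a maximal $I$ no element is childless.

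For the Cantor domain, let $I$ be sharp and maximal. Since $\Two$ is finite and $I$ decidable, the last observation upgrades to: each $\sigma \in I$ has exactly one child in $I$ — at most one because $I$ is a chain, at least one by the observation together with a decidable de Morgan over $\set{0,1}$ — and we can compute it. Define $\alpha \in \Seqspace$ by primitive recursion, letting $\alpha(n)$ be the unique $a$ with $\initseg{\alpha}{n} \cdot a \in I$; a straightforward induction gives $\initseg{\alpha}{n} \in I$ for all $n$. Then $\iota(\alpha) \subseteq I$, and conversely any $\tau \in I$ is $\preceq$-comparable within the chain $I$ with the equally long $\initseg{\alpha}{m} \in I$ (where $m$ is the length of $\tau$), hence equal to it; so $I = \iota(\alpha)$ and $I$ is strongly maximal by Theorem~\ref{seq-iota-strongly-maximal}. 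No choice or omniscience principle is used.

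For the Baire domain the same recursion shows that Markov's Principle implies the stated characterization: for sharp and maximal $I$, the observation gives $\lnot\lnot\exists_{k}\,\initseg{\alpha}{n}\cdot k \in I$ at each stage, the predicate $k \mapsto (\initseg{\alpha}{n}\cdot k \in I)$ is decidable, so Markov's Principle supplies an actual witness, and taking $\alpha(n)$ to be the least one keeps the recursion choice-free; the rest is as before. Conversely, suppose the strongly maximal elements of the Baire domain are precisely the sharp maximal ones, and let $\phi$ be a binary sequence with $\lnot\forall_{n}\,\phi(n) = 0$. Let $I$ consist of $\seq{}$ together with every finite sequence $\sigma$ of length $\geq 1$ with $\sigma(i) = 0$ for all $i \geq 1$, $\phi(\sigma(0)) = 1$, and $\phi(j) = 0$ for all $j < \sigma(0)$. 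Then $I$ is decidable (each membership is a finite check), down-closed, and $\preceq$-linear (if $\seq*{k}$ and $\seq*{k'}$ both satisfy the defining condition then $k = k'$, as $k < k'$ would make $\phi(k)$ equal to both $1$ and $0$), so $I \in \Seqdom$ and $I$ is sharp. Also $I$ is maximal: given $I \subseteq J$ in $\Seqdom$ and $\tau \in J$, decide whether $\tau \in I$ and seek a contradiction in the negative case — there every element of $I$ is a strict prefix of $\tau$, but if some $\seq*{k} \in I$ then $\tau$ must be $\seq*{k}$ followed by zeros while $I$ also contains the all-zero extension of $\seq*{k}$ of length one more than $\tau$, which is no prefix of $\tau$; hence no $\seq*{k}$ lies in $I$, which by induction on $n$ gives $\forall_{n}\,\phi(n) = 0$, contradicting the hypothesis. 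So by assumption $I$ is strongly maximal, hence $I = \iota(\alpha)$ for some $\alpha \in \Seqspace$ by Theorem~\ref{seq-iota-strongly-maximal}; but then $\seq*{\alpha(0)} = \initseg{\alpha}{1} \in I$, so $\phi(\alpha(0)) = 1$ by the definition of $I$, yielding a witness for Markov's Principle.

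The step I expect to be the main obstacle is the constructive verification that the witnessing $I$ above is genuinely an element of $\Seqdom$ (chiefly the $\preceq$-linearity) and is maximal; the two recursions themselves are routine once the decidability and the relevant finiteness or omniscience input are in place.
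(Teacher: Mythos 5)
Your proposal is correct and follows essentially the same route as the paper's proof: reduce to membership in the image of $\iota$ via Theorem~\ref{seq-iota-strongly-maximal}, identify sharpness with decidability of $\sigma \in I$ via Proposition~\ref{sharp-algebraic}, extract from maximality the double-negated existence of a child at each node (removable by finite search over $\Two$ for Cantor, and by Markov's Principle for Baire), and for the converse build a decidable, maximal ideal from a binary sequence whose strong maximality yields a witness. The only cosmetic difference is the exact presentation of the witnessing ideal in the Markov's Principle direction; your verification of its linearity and maximality is the right place to concentrate the effort and is carried out correctly.
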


\begin{lemma}\label{seq-T0}
  The space \(\Seqspace\) is \(T_0\)-separated with respect to
  \(\apart_{\Seqspace}\), i.e.\ for \(\alpha,\beta \in \Seqspace\) we have
  \(\alpha \apart_{\Seqspace} \beta\) if and only if there exists an open
  containing \(x\) but not \(y\) or vice versa.
\end{lemma}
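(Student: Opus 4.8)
The plan is to prove the two directions separately, in each case reducing everything to the declared basis of opens $\set{\gamma \in \Seqspace \mid \sigma \prec \gamma}$ for finite sequences $\sigma$, and exploiting the elementary observation that for a finite sequence $\sigma$ of length $n$ one has $\sigma \prec \gamma$ if and only if $\initseg{\gamma}{n} = \sigma$. Note that $\apart_{\Seqspace}$ is symmetric (since $\neq$ on finite sequences is), so in the right-to-left direction it suffices to treat one of the two cases.

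For the implication $\alpha \apart_{\Seqspace} \beta \Rightarrow$ (separating open exists): suppose $\initseg{\alpha}{n} \neq \initseg{\beta}{n}$ for some $n \in \Nat$. Take the basic open $U \coloneqq \set{\gamma \in \Seqspace \mid \initseg{\alpha}{n} \prec \gamma}$. Then $\alpha \in U$ trivially. If $\beta$ were in $U$, then $\initseg{\alpha}{n} \prec \beta$, and since $\initseg{\alpha}{n}$ has length $n$ this forces $\initseg{\beta}{n} = \initseg{\alpha}{n}$, contradicting the hypothesis; hence $\beta \notin U$, and $U$ separates $\alpha$ from $\beta$.

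For the converse, suppose (without loss of generality, by symmetry) that there is an open $U$ with $\alpha \in U$ and $\beta \notin U$. Since the sets $\set{\gamma \mid \sigma \prec \gamma}$ form a basis, choose a finite sequence $\sigma$ with $\alpha \in \set{\gamma \mid \sigma \prec \gamma} \subseteq U$, and put $n \coloneqq \length(\sigma)$, so that $\sigma = \initseg{\alpha}{n}$. From $\beta \notin U$ we get $\lnot\pa*{\sigma \prec \beta}$, which by the observation above says exactly that $\initseg{\beta}{n} \neq \sigma = \initseg{\alpha}{n}$. Therefore $\alpha \apart_{\Seqspace} \beta$.

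Neither direction uses excluded middle, choice, or even decidable equality on $A$ (which is assumed in this subsection but not needed here): the only slightly delicate point is translating ``$\sigma$ is a prefix of $\gamma$'' into the equation $\initseg{\gamma}{n} = \sigma$ at the correct length $n = \length(\sigma)$, and conversely recovering $\sigma = \initseg{\alpha}{n}$ from $\sigma \prec \alpha$. This length bookkeeping is the only thing to be careful about; beyond it the argument is routine.
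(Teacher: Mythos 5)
Your proof is correct and proceeds exactly as one would expect the paper's own argument to: both directions reduce to the basic opens $\set{\gamma \mid \sigma \prec \gamma}$ via the equivalence $\sigma \prec \gamma \iff \initseg{\gamma}{\length(\sigma)} = \sigma$, and since $\apart_{\Seqspace}$ and $\neq$ are negatively defined, no decidability of equality on $A$ is needed in either direction. The only cosmetic remark is that the statement's ``$x$'' and ``$y$'' should read ``$\alpha$'' and ``$\beta$'', which you have silently and correctly repaired.
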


\begin{theorem}\label{seq-homeo-apart}
  The map \(\iota\) is a homeomorphism from the space \(\Seqspace\) of infinite
  sequences to the space of strongly maximal elements of the algebraic dcpo
  \(\Seqdom\) with the relative Scott topology.
  Moreover, \(\iota\) preserves and reflects apartness:
  \(\alpha \apart_{\Seqspace} \beta\) if and only if
  \(\iota(\alpha) \apart_{\Seqdom} \iota(\beta)\) for every two infinite
  sequences \(\alpha,\beta \in \Seqspace\).
\end{theorem}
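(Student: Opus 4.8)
The plan is to split the statement into three parts: (a) $\iota$ is continuous, (b) $\iota$ is open onto its image (equivalently, $\iota^{-1}$ is continuous on the subspace of strongly maximal elements), and (c) $\iota$ preserves and reflects $\apart$. By Theorem~\ref{seq-iota-strongly-maximal} the image of $\iota$ is exactly the set of strongly maximal elements, so once (a) and (b) are established $\iota$ is a homeomorphism onto that subspace. Throughout I would use the explicit description of the basic Scott opens of $\Seqdom$: since $\Seqdom = \Idl(\Finseq,\preceq)$ is algebraic with compact elements $\dset\sigma$ for $\sigma \in \Finseq$ (Lemma~\ref{reflexive-algebraic}), Example~\ref{ex:Scott-opens} and Lemma~\ref{basis-basis} give that the sets $\upupset{\dset\sigma} = \set{I \in \Seqdom \mid \dset\sigma \ll I} = \set{I \mid \sigma \in I}$ form a basis for the Scott topology. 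The key computation is then $\iota^{-1}\!\pa*{\upupset{\dset\sigma}} = \set{\alpha \in \Seqspace \mid \sigma \in \iota(\alpha)} = \set{\alpha \mid \sigma \prec \alpha}$, which is exactly a basic open of $\Seqspace$. This single equality simultaneously proves continuity of $\iota$ (preimages of basic opens are open) and, restricted to the image, that $\iota$ maps the basic open $\set{\alpha \mid \sigma \prec \alpha}$ onto $\upupset{\dset\sigma} \cap \image(\iota)$, giving openness onto the image; hence (a) and (b).

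For part (c), I would first recall from Lemma~\ref{seq-T0} that $\Seqspace$ is $T_0$ with respect to $\apart_{\Seqspace}$, i.e.\ $\alpha \apart_{\Seqspace} \beta$ holds iff some open of $\Seqspace$ separates $\alpha$ and $\beta$; and the intrinsic apartness on $\Seqdom$ is by Definition~\ref{def:intrinsic-apartness} exactly separation by a Scott open. Since $\iota$ is a homeomorphism onto its image and the basic opens correspond as computed above, separation of $\alpha,\beta$ by an open of $\Seqspace$ is equivalent to separation of $\iota(\alpha),\iota(\beta)$ by a (relatively) Scott open subset of $\image(\iota)$. It remains to check that a \emph{relative} Scott open separating two image points can be taken to be the restriction of an \emph{ambient} Scott open — but this is immediate since relative opens are by definition restrictions of ambient opens, and if $U$ is Scott open with $\iota(\alpha)\in U$, $\iota(\beta)\notin U$ then $U$ already witnesses $\iota(\alpha) \apart_{\Seqdom} \iota(\beta)$. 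Conversely any ambient Scott open separating the two image points restricts to a relative open separating $\alpha,\beta$ under the homeomorphism. Thus $\alpha \apart_{\Seqspace}\beta \iff \iota(\alpha)\apart_{\Seqdom}\iota(\beta)$.

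The main obstacle I anticipate is not in (a)–(c) themselves, which are essentially the basic-open bookkeeping above, but in making sure the $T_0$-type reasoning in part (c) is genuinely constructive: one must resist the temptation to pass through complements or logical negations. The safe route is to work only with the positive statements — "there exists a (basic) open containing one point but not the other" — on both sides, using Lemma~\ref{seq-T0} on the $\Seqspace$ side and the definition of $\apart_{\Seqdom}$ directly on the domain side, and to transport these existentials along the explicit bijection $\set{\alpha \mid \sigma\prec\alpha} \leftrightarrow \upupset{\dset\sigma}\cap\image(\iota)$ rather than along any general homeomorphism argument that might secretly use excluded middle. A second minor point to handle with care is that $\dset\sigma \ll I$ in $\Seqdom$ unfolds (via compactness of $\dset\sigma$) to $\dset\sigma \subseteq I$, i.e.\ $\sigma \in I$; this uses Lemma~\ref{way-below-basics} and the algebraicity from Lemma~\ref{reflexive-algebraic}, and should be stated explicitly so the reader sees why $\upupset{\dset\sigma} = \set{I \mid \sigma \in I}$.
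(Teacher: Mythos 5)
Your proposal is correct and follows essentially the same route as the paper's proof: the whole theorem reduces to the single computation \(\iota(\alpha)\in\upupset{\dset\sigma}\iff\dset\sigma\subseteq\iota(\alpha)\iff\sigma\prec\alpha\) (using compactness of \(\dset\sigma\) and Lemmas~\ref{reflexive-algebraic} and~\ref{basis-basis}), which matches the bases of the two topologies, and the apartness claim then follows by transporting the separating opens along this correspondence together with Lemma~\ref{seq-T0} and Theorem~\ref{seq-iota-strongly-maximal}. Your explicit care about the relative-versus-ambient opens and about keeping the ``contains one but not the other'' statements constructive is exactly the right bookkeeping.
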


\subsection{Partial Dedekind reals}
Recall the definition of a (two-sided) Dedekind real number.
\begin{definition}\label{def:Dedekind-real}
  Given a pair \(x = (L_x,U_x)\) of subsets of \(\Rat\), we write \(p < x\) for
  \(p \in L_x\) and \(x < q\) for \(q \in U_x\). A~\emph{Dedekind real} \(x\) is
  a pair \((L_x,U_x)\) of subsets of \(\Rat\) satisfying the following properties:
  \begin{enumerate}
  \item \emph{boundedness}: there exist \(p \in \Rat\) and \(q \in \Rat\) such
    that \(p < x\) and \(x < q\).
  \item \emph{roundedness}: for every \(p \in \Rat\), we have
    \(p < x \iff \exists_{r \in \Rat} \pa*{p < r} \land \pa*{r < x}\) and
    similarly, for every \(q \in \Rat\), we have
    \(x < q \iff \exists_{s \in \Rat}\pa{s < q} \land \pa{x < s}\).
  \item \emph{transitivity}: for every \(p,q \in \Rat\), if \(p < x\) and
    \(x < q\), then \(p < q\).
  \item \emph{locatedness}: for every \(p,q\in\Rat\) with \(p < q\) we have
    \(p < x\) or \(x < q\).
  \end{enumerate}
\end{definition}

\begin{definition}
  The topological space \(\Rea\) is the set of all Dedekind real numbers whose
  basic opens are given by \(\set{x \in \Rea \mid p < x \text{ and } x < q}\)
  for \(p,q \in \Rat\). The space \(\Rea\) has a natural notion of apartness, namely:
  \(x \apart_{\Rea} y \iff {\exists_{p \in \Rat}\,\pa*{x < p < y} \vee \pa*{y
      < p < x}}\).
\end{definition}

\begin{definition}
  Consider the set
  \(\Rat \times_{<} \Rat \coloneqq \set{(p,q) \in \Rat\times\Rat \mid p < q}\)
  ordered by defining the strict order \((p,q) \prec (r,s) \iff p < r < s <
  q\). The pair \(\pa*{\Rat \times_{<} \Rat , \prec}\) is an abstract basis, so
  \(\Realdom \coloneqq \Idl(\Rat\times_{<}\Rat,\prec)\) is a continuous dcpo and
  we refer to its elements as \emph{partial Dedekind reals}.
\end{definition}

\begin{lemma}\label{Dedekind-way-below-criterion}
  For every two rationals \(p < q\) and \(I \in \Realdom\), we have
  \(\dset(p,q) \ll I\) if and only if \((p,q) \in I\).
\end{lemma}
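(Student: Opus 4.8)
The plan is to prove the two implications directly from the definition of the rounded ideal completion, using only that directed suprema in \(\Realdom\) are unions and that the elements of \(\Realdom\) are the \(\prec\)-downward closed directed subsets of \((\Rat\times_{<}\Rat,\prec)\).

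For the direction \((p,q)\in I\Rightarrow\dset(p,q)\ll I\): given any directed \(S\subseteq\Realdom\) with \(I\below\dirsup S=\bigcup S\), the hypothesis gives \((p,q)\in J\) for some \(J\in S\), and since \(J\) is a lower set we get \(\dset(p,q)\subseteq J\), i.e.\ \(\dset(p,q)\below J\). Hence \(\dset(p,q)\ll I\). This half uses nothing about the specific abstract basis.

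For the converse, I would first record that \(I=\dirsup_{(r,s)\in I}\dset(r,s)\): the ideal \(I\) is inhabited, each \(\dset(r,s)\) with \((r,s)\in I\) is contained in \(I\) because \(I\) is a lower set, and the family \(\{\dset(r,s)\mid(r,s)\in I\}\) is directed with union \(I\) --- all of this following from directedness of \(I\) together with transitivity of \(\prec\). Applying \(\dset(p,q)\ll I\) to this directed family yields \((r,s)\in I\) with \(\dset(p,q)\below\dset(r,s)\), i.e.\ \(\dset(p,q)\subseteq\dset(r,s)\). Using density of \(\Rat\) and \(p<q\) one extracts \(r\le p\) and \(q\le s\) from this inclusion: if, say, \(r>p\), pick rationals \(p<a<b<\min(r,q)\), so that \((a,b)\prec(p,q)\) while \((a,b)\not\prec(r,s)\), contradicting the inclusion; the argument for \(q\le s\) is symmetric. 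Finally, directedness of \(I\) applied to \((r,s)\) supplies \((r',s')\in I\) with \((r,s)\prec(r',s')\), i.e.\ \(r'<r\) and \(s<s'\); chaining \(r'<r\le p<q\le s<s'\) gives \((p,q)\prec(r',s')\), and since \(I\) is a lower set containing \((r',s')\) we conclude \((p,q)\in I\).

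I expect the converse to be the only place requiring real work, and the crux there is upgrading the set inclusion \(\dset(p,q)\subseteq\dset(r,s)\) --- which on its own yields only the non-strict \(r\le p\) and \(q\le s\) --- to a genuine \(\prec\)-relation witnessing \((p,q)\in I\). This is exactly where the concrete structure of \(\Realdom\) enters: density of \(\Rat\) to read the endpoints off the inclusion, and the automatic roundedness of the ideal \(I\) together with the strict gap \(p<q\) to turn \(\le\) into \(\prec\). Everything else is routine bookkeeping about rounded ideal completions.
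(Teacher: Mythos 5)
Your overall strategy is the right one and is essentially the intended proof: the forward direction is the general fact about principal ideals in a rounded ideal completion, and the converse reduces, via the directed decomposition \(I=\bigcup_{(r,s)\in I}\dset(r,s)\), to an inclusion \(\dset(p,q)\subseteq\dset(r,s)\) that is then upgraded to \((p,q)\in I\) using density of \(\Rat\) and roundedness of \(I\). However, you have consistently reversed the paper's convention for \(\prec\). The paper defines \((p,q)\prec(r,s)\iff p<r<s<q\), so \(\dset(p,q)=\set{(a,b)\mid a<p<q<b}\) consists of the strict \emph{super}intervals of \((p,q)\), not the subintervals; your witness \(p<a<b<\min(r,q)\) does \emph{not} satisfy \((a,b)\prec(p,q)\) under this definition. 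Correctly read, the inclusion \(\dset(p,q)\subseteq\dset(r,s)\) yields \(p\le r\) and \(s\le q\) (if \(r<p\), pick \(a\) with \(r<a<p\) and any \(b>q\); then \((a,b)\in\dset(p,q)\setminus\dset(r,s)\)), and roundedness of \(I\) gives \((r',s')\in I\) with \((r,s)\prec(r',s')\), i.e.\ \(r<r'<s'<s\). Chaining \(p\le r<r'<s'<s\le q\) gives \(p<r'<s'<q\), that is \((p,q)\prec(r',s')\in I\), and lower-closedness of \(I\) concludes. Since your two reversals cancel, your argument is internally consistent under the flipped convention and the underlying ideas are all correct; this is a notational slip to be repaired rather than a gap, but as written the inequalities do not match the paper's definitions.
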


\begin{definition}
  We define an injection \(\iota \colon \Rea \hookrightarrow \Realdom\) by
  \(\iota\pa*{L_x,U_x} \coloneqq \set{(p,q) \mid p \in L_x, q \in U_x}\).
  The map \(\iota\) is well-defined precisely because a Dedekind real is
  required to be bounded, rounded and transitive.
\end{definition}

\begin{theorem}\label{Dedekind-iota-strongly-maximal}
  The image of \(\iota\) is exactly the subset of strongly maximal elements of
  \(\Realdom\).
\end{theorem}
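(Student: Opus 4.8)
The plan is to prove the two inclusions separately: every element in the image of $\iota$ is strongly maximal, and every strongly maximal element of $\Realdom$ lies in the image of $\iota$.

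For the first inclusion, let $x = (L_x, U_x)$ be a Dedekind real and consider $\iota(x)$. By Lemma~\ref{strongly-maximal-basis} it suffices to check the strong maximality condition on the basis $\set{\dset(p,q) \mid (p,q) \in \Rat\times_{<}\Rat}$. So suppose $\dset(p,q) \ll \dset(r,s)$; I would unwind this via Lemma~\ref{Dedekind-way-below-criterion} to get $(p,q) \in \dset(r,s)$, i.e.\ $r < p < q < s$. Now I use locatedness of $x$ applied to $p < q$: either $p < x$ or $x < q$. In the first case, $p \in L_x$ (and also $q \in U_x$ would need a second application, or rather I apply locatedness to $p<q$ and separately argue; more carefully, locatedness on $p<q$ gives $p \in L_x$ or $q \in U_x$). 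If $p \in L_x$ then I want $\dset(p,q) \ll \iota(x)$, i.e.\ $(p,q) \in \iota(x)$, i.e.\ $p \in L_x$ and $q \in U_x$ --- so I actually need both, which requires more: I should instead pick rationals $p < p' < q' < q$ using roundedness/interpolation and apply locatedness to $p < q'$ and to $p' < q$ to conclude. The cleaner route: from $r < p < q < s$, pick $p < q''< q$ by density of $\Rat$; locatedness on $p < q''$ gives $p \in L_x$ or $q'' \in U_x$. If $p \in L_x$, I still need $q \in U_x$; but here I should instead show $\dset(r,s)$ and $\iota(x)$ are Hausdorff separated in the "or" branch. Let me reorganize: apply locatedness to $p < q$ directly. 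If $p \in L_x$, then since also I need $q \in U_x$ for membership — so instead apply locatedness to the pair $(r, p)$: either $r \in L_x$ or $p \in U_x$; and to $(q,s)$: either $q \in L_x$ or $s \in U_x$. Combining the four cases: if $p \in U_x$ then $x < p$, and then $\set{I \mid \dset(r,p') \ll I}$ for suitable $p'$ is a Scott open around $\dset(r,s)$ disjoint from a Scott open around $\iota(x)$, using Lemma~\ref{Hausdorff-separated-criterion} with the observation that $\dset(r,s)$ and $\dset(p,s')$ cannot be refined when the intervals are disjoint enough. Dually if $q \in L_x$. Otherwise $r \in L_x$ and $s \in U_x$, so $(r,s) \in \iota(x)$, hence $\dset(r,s) \ll \iota(x)$, giving $\dset(p,q) \ll \iota(x)$ by transitivity of $\ll$. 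This establishes strong maximality via Lemma~\ref{Hausdorff-separated-criterion}.

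For the converse inclusion, let $I \in \Realdom$ be strongly maximal. I define $L \coloneqq \set{p \in \Rat \mid \exists q,\ (p,q) \in I}$ and $U \coloneqq \set{q \in \Rat \mid \exists p,\ (p,q) \in I}$ and must show $(L,U)$ is a Dedekind real with $\iota(L,U) = I$. Since $I$ is an inhabited rounded ideal, boundedness and roundedness of $(L,U)$ follow directly from $I$ being inhabited, a lower set, and directed (interpolativity of $\prec$ gives the roundedness interpolation). Transitivity: if $(p,q') \in I$ and $(p'',q) \in I$ then by directedness there is $(a,b) \in I$ above both, so $a < p, p''$ and $q', q < b$, whence $p < q$. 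The crucial property is locatedness, and this is where strong maximality is used: given $p < q$, pick $p < p' < q' < q$, so $\dset(p,q') \ll \dset(p', q)$ wait I need the way-below between basis elements — rather, observe $(p', q') \prec$-refines appropriately; apply strong maximality to a suitable pair $u \ll v$ to get $u \ll I$ (giving membership in $I$, hence $p \in L$ or $q \in U$ after projecting) or $v$ and $I$ Hausdorff separated — and Hausdorff separation, via Lemma~\ref{Hausdorff-separated-criterion} and maximality (Proposition~\ref{maximal-if-strongly-maximal}), should force the other disjunct by a covering/absurdity argument, since $I$ being maximal means it cannot avoid every interval around a point. Finally $\iota(L,U) = I$: the inclusion $I \subseteq \iota(L,U)$ is immediate from the definitions and directedness; the reverse $\iota(L,U) \subseteq I$ uses that $(p,q) \in \iota(L,U)$ means $p \in L$ and $q \in U$, so $(p,q_0), (p_0,q) \in I$, and directedness plus the shape of $\prec$ yields $(p,q) \in I$.

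The main obstacle I expect is the locatedness step in the converse direction: extracting a genuine constructive dichotomy "$p \in L$ or $q \in U$" from strong maximality. Strong maximality gives "$u \ll I$ or $v \setapart I$ (Hausdorff separated)", and I must convert the Hausdorff-separation branch into the complementary membership statement. The key lemma to lean on will be Proposition~\ref{maximal-if-strongly-maximal} (strongly maximal implies maximal) together with Lemma~\ref{Hausdorff-separated-criterion}: if $\dset(p',q)$ is Hausdorff separated from $I$, I would argue that maximality of $I$ forces $I$ to contain some interval $(a,b)$ with $b \le q$ witnessing $a < I < b$, i.e.\ $I$ "is a real number that is $< q$", giving $q \in U$ — but making this precise constructively, without a prior locatedness assumption on $I$, is delicate and is the technical heart of the proof. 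A secondary subtlety is getting all the interval-endpoint bookkeeping right (choosing enough intermediate rationals so that the relevant $\dset(\cdot,\cdot)$ pairs are actually way-below each other and the disjoint intervals genuinely fail to refine), but that is routine once the logical skeleton is fixed.
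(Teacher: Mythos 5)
Your overall architecture matches the intended proof --- show both inclusions, use locatedness for the forward direction and extract locatedness from strong maximality for the converse --- but there are two concrete problems. First, you have the order backwards throughout: by definition \((p,q) \prec (r,s)\) iff \(p < r < s < q\), so \(\dset(p,q) \ll \dset(r,s)\) unwinds via Lemma~\ref{Dedekind-way-below-criterion} to \(p < r < s < q\) (the interval \((r,s)\) sits strictly \emph{inside} \((p,q)\)), not to \(r < p < q < s\) as you wrote. Your final case analysis therefore applies locatedness to the wrong pairs; it should be applied to \(p < r\) and to \(s < q\). If \(r \in U_x\) or \(s \in L_x\), then \(x\) lies strictly to one side of the interval \([r,s]\), and roundedness produces a rational interval around \(x\) that cannot be refined with one way below \(\dset(r,s)\), so Lemma~\ref{Hausdorff-separated-criterion} gives Hausdorff separation of \(\dset(r,s)\) and \(\iota(x)\); in the remaining case \(p \in L_x\) and \(q \in U_x\), so \((p,q) \in \iota(x)\) and \(\dset(p,q) \ll \iota(x)\) directly. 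As written, your branch ``\(p \in U_x\)'' does not yield separation: with \(r < p\) the real \(x\) could lie in \((r,p)\), in which case \(\dset(r,s) \below \iota(x)\) and no disjoint Scott opens exist.

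Second, the locatedness step in the converse direction --- which you rightly call the heart of the proof --- is genuinely missing, and the route you propose (via Proposition~\ref{maximal-if-strongly-maximal} and a ``covering/absurdity'' argument from maximality) cannot work: constructively, maximality of \(I\) does not force \(I\) to contain an interval on a prescribed side, and Proposition~\ref{Dedekind-max-strong-max-wem} shows precisely that maximal elements need not be located. No appeal to maximality is needed. Given \(p < q\), choose \(p < p' < q' < q\) and apply strong maximality to \(\dset(p,q) \ll \dset(p',q')\). If \(\dset(p,q) \ll I\), then \((p,q) \in I\) and \(p \in L\). Otherwise Lemma~\ref{Hausdorff-separated-criterion} yields basis witnesses \(\dset(a,b) \ll \dset(p',q')\), so \(a < p' < q' < b\), and \(\dset(c,d) \ll I\), so \((c,d) \in I\), with \(\lnot\pa*{\dset(a,b) \refined \dset(c,d)}\). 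Since any two rational intervals with \(\max(a,c) < \min(b,d)\) admit a common refinement, and the order on \(\Rat\) is decidable, non-refinability forces \(b \le c\) or \(d \le a\). In the first case \(p < p' < b \le c\) and \(c \in L\), so \(p \in L\) by down-closure; in the second \(d \le a < p' < q\) and \(d \in U\), so \(q \in U\) by up-closure. That is the dichotomy your sketch lacks. The remaining verifications (boundedness, roundedness and transitivity of \((L,U)\), and \(\iota(L,U) = I\)) are routine consequences of \(I\) being a rounded ideal, as you describe.
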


With excluded middle, the image of \(\iota\) is just the set of maximal elements
of \(\Realdom\). The following result highlights the constructive strength of
locatedness of Dedekind reals.

\begin{proposition}\label{Dedekind-max-strong-max-wem}
  If every maximal element of \(\Realdom\) is strongly maximal, then weak
  excluded middle holds.
\end{proposition}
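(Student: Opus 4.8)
The plan is to reduce to Lemma~\ref{maximal-strongly-maximal-wem} by taking $x \coloneqq \iota(0)$ and $y \coloneqq \iota(1)$, where $0$ and $1$ denote the evident Dedekind reals. First, by Theorem~\ref{Dedekind-iota-strongly-maximal} the image of $\iota$ is exactly the set of strongly maximal elements of $\Realdom$, so $\iota(0)$ and $\iota(1)$ are strongly maximal, giving hypothesis~\ref{both-strongly-max}. For hypothesis~\ref{apart} I would exhibit a Scott open separating the two points: by Lemma~\ref{Dedekind-way-below-criterion} we have $\dset\pa*{-\tfrac12,\tfrac12} \ll \iota(0)$ because $\pa*{-\tfrac12,\tfrac12}\in\iota(0)$, whereas $\dset\pa*{-\tfrac12,\tfrac12}\not\below\iota(1)$ because $\pa*{-\tfrac12,\tfrac12}\notin\iota(1)$; hence $\iota(0)\in\upupset\dset\pa*{-\tfrac12,\tfrac12}$ but $\iota(1)\notin\upupset\dset\pa*{-\tfrac12,\tfrac12}$, so $\iota(0)\apart\iota(1)$.

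The substantive verification is hypothesis~\ref{glb}: that $\iota(0)$ and $\iota(1)$ have a greatest lower bound in $\Realdom$. I claim it is the intersection $\iota(0)\cap\iota(1) = \set{(p,q)\in\Rat\times_{<}\Rat \mid p<0 \text{ and } 1<q}$. This is clearly a lower set and is inhabited; for semidirectedness, given $(p,q)$ and $(r,s)$ in it one picks rationals $p'$ with $\max(p,r)<p'<0$ and $q'$ with $1<q'<\min(q,s)$, so that $(p',q')$ again lies in $\iota(0)\cap\iota(1)$ with $(p,q)\prec(p',q')$ and $(r,s)\prec(p',q')$ — the existence of such $p',q'$ being exactly density of $\Rat$. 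Thus $\iota(0)\cap\iota(1)$ is a rounded ideal, hence an element of $\Realdom$, and being contained in both $\iota(0)$ and $\iota(1)$ while containing every ideal that is, it is their greatest lower bound.

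With all three hypotheses in hand, Lemma~\ref{maximal-strongly-maximal-wem} gives, for any proposition $P$, that the supremum $\dirsup S$ of the directed set $S \coloneqq \set{\iota(0)\glb\iota(1)}\cup\set{\iota(0)\mid\lnot P}\cup\set{\iota(1)\mid\lnot\lnot P}$ is a maximal element of $\Realdom$ that is strongly maximal only if $\lnot P$ is decidable. By the assumption of the proposition every maximal element of $\Realdom$ is strongly maximal, so $\lnot P \lor \lnot\lnot P$ holds; since $P$ was arbitrary, weak excluded middle follows. The one step demanding genuine care is hypothesis~\ref{glb}: one must check that the naive intersection of the two rounded ideals is itself directed, which is precisely where roundedness of the Dedekind reals (equivalently, density of $\Rat$) is used.
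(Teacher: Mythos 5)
Your proof is correct and follows essentially the same route as the paper: apply Lemma~\ref{maximal-strongly-maximal-wem} to \(\iota(0)\) and \(\iota(1)\), which are strongly maximal by Theorem~\ref{Dedekind-iota-strongly-maximal}, have the rounded ideal \(\set{(p,q) \mid p < 0 \text{ and } 1 < q}\) as greatest lower bound, and are separated by the Scott open \(\upupset\dset\pa*{-\tfrac12,\tfrac12}\). One cosmetic remark: what Lemma~\ref{Dedekind-way-below-criterion} gives you from \(\pa*{-\tfrac12,\tfrac12} \notin \iota(1)\) is \(\dset\pa*{-\tfrac12,\tfrac12} \not\ll \iota(1)\), which is exactly what non-membership in \(\upupset\dset\pa*{-\tfrac12,\tfrac12}\) requires, so you should write \(\not\ll\) rather than \(\not\below\) at that step (the latter also holds, but not by the cited lemma).
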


We conjecture that \(\Realdom\) is similar to the Baire domain in that the
strongly maximal elements of \(\Realdom\) only coincide with the elements that
are both sharp and maximal if a constructive taboo holds.

\begin{lemma}\label{Dedekind-T0}
  The Dedekind real numbers are \(T_0\)-separated with respect to \(\apart_{\Rea}\),
  i.e.\ for \(x,y \in \Rea\) we have
  \(x \apart_{\Rea} y\) if and only if there exists an open \(U\)
  containing \(x\) but not \(y\) or vice versa.
\end{lemma}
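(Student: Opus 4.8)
The plan is to establish the two implications separately, after observing that both sides of the claimed equivalence are symmetric in $x$ and $y$: the relation $\apart_{\Rea}$ is symmetric by definition, and ``there is an open containing $x$ but not $y$, or vice versa'' is visibly symmetric. So in each direction it suffices to treat one of the two disjuncts.

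For the forward direction, suppose $x \apart_{\Rea} y$ and, by symmetry, fix a rational $p$ with $x < p < y$. Boundedness of the Dedekind real $x$ supplies a rational $r$ with $r < x$, and I claim the basic open $U \coloneqq \set*{z \in \Rea \mid r < z \text{ and } z < p}$ separates $x$ from $y$: indeed $x \in U$ since $r < x$ and $x < p$, while $y \notin U$ because $y < p$ together with $p < y$ would, by transitivity of $y$, force $p < p$ in $\Rat$, which is false.

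For the reverse direction, suppose $U$ is open with $x \in U$ and $y \notin U$ (the other case is symmetric). Since the sets $\set*{z \in \Rea \mid p < z \text{ and } z < q}$ form a basis for the topology on $\Rea$, there are rationals $p,q$ with $x \in \set*{z \mid p < z \text{ and } z < q} \subseteq U$; hence $p < x$, $x < q$, and $\lnot\pa*{\pa*{p < y} \land \pa*{y < q}}$. Transitivity of $x$ gives $p < q$, so locatedness of $y$ yields $p < y$ or $y < q$. If $p < y$, then $\lnot(y < q)$; roundedness of $x$ turns $x < q$ into $x < s < q$ for some rational $s$, and locatedness of $y$ applied to $s < q$ together with $\lnot(y < q)$ gives $s < y$, whence $x < s < y$ and $x \apart_{\Rea} y$. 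If instead $y < q$, then $\lnot(p < y)$; roundedness of $x$ turns $p < x$ into $p < r < x$ for some rational $r$, and locatedness of $y$ applied to $p < r$ together with $\lnot(p < y)$ gives $y < r$, whence $y < r < x$ and again $x \apart_{\Rea} y$.

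The forward direction is routine; the content lies in the reverse direction. The subtlety there is that ``$y$ lies outside the basic open'' is only the negative statement $\lnot\pa*{\pa*{p < y} \land \pa*{y < q}}$, whereas $\apart_{\Rea}$ demands the \emph{positive} datum of a rational strictly between $x$ and $y$. Converting the former into the latter is exactly what locatedness of the Dedekind real $y$ accomplishes, and the auxiliary appeal to roundedness of $x$ (to interpolate a rational strictly between $x$ and the bound $q$, resp.\ $p$) is what ensures the witness produced by the second use of locatedness lands strictly between $x$ and $y$ rather than merely weakly positioned. This is the same mechanism by which locatedness underlies separation and sharpness elsewhere in the paper, and it is precisely the step that would fail for, e.g., the lower reals.
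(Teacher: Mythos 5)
Your proof is correct: the forward direction correctly uses boundedness to build a basic open interval excluding \(y\) via transitivity, and the reverse direction correctly converts the negative information \(\lnot\pa*{\pa*{p<y}\land\pa*{y<q}}\) into a positive witness by two applications of locatedness of \(y\) combined with roundedness of \(x\) to interpolate. This is essentially the same argument as the paper's, so there is nothing further to add.
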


\begin{theorem}
  The map \(\iota\) is a homeomorphism from \(\Rea\) to the space of strongly
  maximal elements of the continuous dcpo \(\Realdom\) with the relative Scott
  topology.
  Moreover, \(\iota\) preserves and reflects apartness.
\end{theorem}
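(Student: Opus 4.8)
The plan is to follow the same pattern as the proof of Theorem~\ref{seq-homeo-apart} for the Cantor and Baire domains. By Theorem~\ref{Dedekind-iota-strongly-maximal}, \(\iota\) is a bijection from \(\Rea\) onto the set \(\SMax(\Realdom)\) of strongly maximal elements of \(\Realdom\), so what remains is to check that \(\iota\) is a homeomorphism onto this subspace (with the relative Scott topology) and that \(\iota(x) \apart_{\Realdom} \iota(y)\) if and only if \(x \apart_{\Rea} y\). The key preliminary is an explicit description of the basic Scott opens of \(\Realdom\): since \(\set{\dset(p,q) \mid (p,q) \in \Rat \times_{<} \Rat}\) is a basis for the dcpo \(\Realdom\) by construction of the rounded ideal completion, Lemma~\ref{basis-basis} tells us that the sets \(\upupset\dset(p,q)\) form a basis for the Scott topology, and by Lemma~\ref{Dedekind-way-below-criterion} we have \(\upupset\dset(p,q) = \set{I \in \Realdom \mid (p,q) \in I}\).

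Next I would establish that \(\iota\) is a homeomorphism onto \(\SMax(\Realdom)\). Unwinding the definition of \(\iota\), for rationals \(p < q\) one computes \(\iota^{-1}\pa*{\upupset\dset(p,q)} = \set{x \in \Rea \mid (p,q) \in \iota(x)} = \set{x \in \Rea \mid p < x \text{ and } x < q}\), which is a basic open of \(\Rea\); hence \(\iota\) is continuous. The same computation shows that \(\iota\) sends the basic open \(\set{x \in \Rea \mid p < x \text{ and } x < q}\) of \(\Rea\) onto \(\upupset\dset(p,q) \cap \image\iota\), which, since \(\image\iota = \SMax(\Realdom)\), is open in the relative Scott topology. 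As the sets \(\set{x \mid p < x \text{ and } x < q}\) form a basis for \(\Rea\), it follows that \(\iota^{-1}\) is continuous too, so \(\iota\) is a homeomorphism onto the subspace of strongly maximal elements.

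For the apartness I would argue abstractly. By definition of the intrinsic apartness, for \(I, J \in \SMax(\Realdom)\) we have \(I \apart_{\Realdom} J\) exactly when some Scott open of \(\Realdom\) contains one of \(I, J\) but not the other; and because both points lie in \(\image\iota\), this happens if and only if some relative Scott open of \(\SMax(\Realdom)\) separates them in this asymmetric sense (for the nontrivial direction, if \(U \cap \SMax(\Realdom)\) contains \(I\) but not \(J\) and \(J \in \SMax(\Realdom)\), then \(J \notin U\)). Transporting along the homeomorphism just established and applying Lemma~\ref{Dedekind-T0}, which identifies \(\apart_{\Rea}\) with the relation ``some open of \(\Rea\) contains one point but not the other'', we conclude that \(\iota(x) \apart_{\Realdom} \iota(y)\) if and only if \(x \apart_{\Rea} y\), i.e.\ \(\iota\) preserves and reflects apartness.

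Given the cited results, the theorem is essentially bookkeeping, so the real obstacles lie in the preparatory lemmas (above all Theorem~\ref{Dedekind-iota-strongly-maximal} and Lemma~\ref{Dedekind-T0}) rather than here; the one point requiring care is that ``separation by an open'' is the asymmetric \(T_0\) condition on both sides, and matching the two descriptions crucially uses \(\image\iota = \SMax(\Realdom)\). If a more self-contained treatment of the apartness is preferred, the reflection direction can be proved directly: from \(\iota(x) \posnotbelow \iota(y)\) one extracts, using a basic Scott open together with Lemma~\ref{Dedekind-way-below-criterion}, rationals \(p < q\) with \(p < x\), \(x < q\), but \(\lnot\pa*{p < y \text{ and } y < q}\); interpolating inside \((p,q)\) by roundedness of \(x\) and applying locatedness of \(y\) to the resulting rational subintervals then yields, in every case compatible with \(\lnot\pa*{p < y \text{ and } y < q}\), a rational strictly between \(x\) and \(y\), so \(x \apart_{\Rea} y\) (and symmetrically when \(\iota(y) \posnotbelow \iota(x)\)).
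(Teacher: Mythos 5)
Your proposal is correct and takes essentially the same route as the paper: it reduces the bijectivity to Theorem~\ref{Dedekind-iota-strongly-maximal}, computes the basic Scott opens \(\upupset\dset(p,q) = \set{I \mid (p,q) \in I}\) via Lemmas~\ref{basis-basis} and~\ref{Dedekind-way-below-criterion} to get the homeomorphism, and settles the apartness claim by transporting the \(T_0\)-separation along the homeomorphism and invoking Lemma~\ref{Dedekind-T0}.
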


\subsection{Lower reals}
We now consider lower reals, which feature a nice illustration of sharpness.
\begin{definition}
  The pair \(\pa*{\Rat , <}\) is an abstract basis, so
  \(\LowerReal \coloneqq \Idl(\Rat,<)\) is a continuous dcpo and we refer to its
  elements as \emph{lower reals}.
\end{definition}

\begin{lemma}\label{lower-reals-way-below-criterion}
  For every \(p \in \Rat\) and \(L \in \LowerReal\), we have
  \(\dset p \ll L\) if and only if \(p \in L\).
\end{lemma}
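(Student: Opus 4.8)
The plan is to follow the pattern of Lemma~\ref{Dedekind-way-below-criterion}, using that $\LowerReal = \Idl(\Rat,<)$ is ordered by inclusion, has directed suprema computed as unions, and has $\{\dset q \mid q \in \Rat\}$ as a basis. A point worth isolating up front: a rounded ideal $L$, being a directed lower set of $(\Rat,<)$, is \emph{rounded} (taking $x = y = q$ in semidirectedness, every $q \in L$ admits some $q' \in L$ with $q < q'$) and is a lower set ($r < q \in L$ implies $r \in L$).

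For the implication from $p \in L$ to $\dset p \ll L$: let $S \subseteq \LowerReal$ be directed with $L \below \dirsup S$, i.e.\ $L \subseteq \bigcup S$. Since $p \in L$, there is $M \in S$ with $p \in M$; as $M$ is a lower set, $\dset p \subseteq M$, that is, $\dset p \below M$. Since $S$ was arbitrary, $\dset p \ll L$.

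For the converse the key step is that $L$ is the directed supremum of its basis approximants. The family $\{\dset q \mid q \in L\}$ is directed (inhabited since $L$ is, and for $q, q' \in L$ directedness of $L$ gives $r \in L$ with $q < r$ and $q' < r$, so $\dset q, \dset{q'} \subseteq \dset r$), and $\bigcup_{q \in L}\dset q = L$ by roundedness and the lower-set property of $L$. Hence $L = \dirsup\{\dset q \mid q \in L\}$, so $\dset p \ll L$ supplies some $q \in L$ with $\dset p \below \dset q$, i.e.\ $\dset p \subseteq \dset q$. By density of $\Rat$ this forces $p \le q$: were $q < p$, some rational would sit strictly between, lying in $\dset p$ but not $\dset q$.

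Finally I would upgrade $p \le q$ to $p \in L$. This is the only place where the classical argument (a case split on $p = q$ versus $p < q$) needs care, and the clean fix is to use roundedness of $L$ once more: pick $q' \in L$ with $q < q'$; then $p \le q < q'$ yields $p < q'$, and $p \in L$ because $L$ is a lower set. I expect this last manoeuvre — routing every inclusion of approximants into membership in $L$ through a \emph{strict} inequality rather than through a trichotomy — to be the only genuinely constructive subtlety; the rest is just unwinding the definition of the rounded ideal completion.
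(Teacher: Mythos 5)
Your proof is correct and is essentially the argument the paper intends (the proof itself is deferred to the arXiv version): the forward direction from $p\in L$ uses that members of a covering directed family are lower sets, and the converse uses $L=\dirsup\set{\dset q\mid q\in L}$ together with roundedness of $L$ and the decidable, dense order on $\Rat$ to turn $\dset p\subseteq\dset q$ into $p\in L$. You correctly isolate the one constructive subtlety, namely routing membership through a strict inequality $p<q'$ rather than a case split on $p=q$.
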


\begin{lemma}\label{U-from-L}
  If \(L \in \LowerReal\) is a lower real, then the pair \((L,U)\) with
  \(U \coloneqq \set*{q \in \Rat \mid \exists_{s \in \Rat \setminus L}\,s < q}\)
  is rounded and transitive in the sense
  of~Definition~\ref{def:Dedekind-real}.
  Moreover, if \(\Rat \setminus L\) is inhabited, then \((L,U)\) is bounded too.
\end{lemma}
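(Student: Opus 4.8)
The plan is to check each clause of Definition~\ref{def:Dedekind-real} directly, using only that $L$, being a rounded ideal of the abstract basis $(\Rat,<)$, is an inhabited, semidirected lower set. I would first dispatch roundedness. For the left cut this is immediate: $p < x$ unfolds to $p \in L$, and the equivalence $p \in L \iff \exists_{r \in \Rat}\,(p < r)\land(r \in L)$ has its backward direction from $L$ being a lower set and its forward direction from semidirectedness of $L$ (apply it to $p$ with itself). For the right cut, unfold $q \in U$ as $\exists_{s}\,(s \in \Rat\setminus L)\land(s < q)$: the backward direction of $q \in U \iff \exists_{r}\,(r < q)\land(r\in U)$ is just transitivity of $<$ on $\Rat$, and the forward direction follows by inserting a rational strictly between the witness $s \notin L$ and $q$.

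Next I would prove transitivity. Suppose $p \in L$ and $q \in U$, so there is $s \in \Rat \setminus L$ with $s < q$; the goal is $p < q$. By decidability of $<$ on $\Rat$, either $p < q$, in which case we are done, or $q \leq p$, in which case $s < q \leq p$ gives $s < p$, hence $s \in L$ because $L$ is a lower set, contradicting $s \notin L$. Finally, for boundedness assume $\Rat \setminus L$ is inhabited, say $s \notin L$; then $s < s+1$ witnesses $s + 1 \in U$, and $L$ is inhabited because it is an ideal, so choosing any $p \in L$ and $q := s+1 \in U$ yields $p < x$ and $x < q$.

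The whole argument is essentially bookkeeping about rounded ideals, so I do not expect a genuine obstacle; the only place that requires a little constructive care is transitivity, where I use that the order on $\Rat$ is decidable (equivalently, linear) to rule out $q \leq p$.
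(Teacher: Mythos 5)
Your proof is correct and is essentially the standard direct verification the paper intends: roundedness of the left cut from $L$ being a rounded ideal (lower set plus semidirectedness applied to $p$ with itself), roundedness of the right cut from density of $\Rat$ together with the definition of $U$, transitivity via the decidable order on $\Rat$ and the lower-set property of $L$, and boundedness from inhabitedness of $L$ and of $\Rat\setminus L$. All steps are constructively valid (the case split on $p<q$ versus $q\leq p$ is legitimate because the order on $\Rat$ is decidable, and the second case is discharged by ex falso), so there is nothing to add.
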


Classically, every lower real whose complement is inhabited determines a
Dedekind real by the construction above. It is well-known that constructively a
lower real may fail to be located. The following result offers a
domain-theoretic explanation of that phenomenon.

\begin{theorem}\label{sharp-iff-located}
  A lower real \(L \in \LowerReal\) is sharp if and only if the pair \((L,U)\)
  with \(U\)~as in Lemma~\ref{U-from-L} is located.
\end{theorem}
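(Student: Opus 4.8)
The plan is to unfold sharpness of $L$ into a statement about rationals using the basis $\set{\dset p \mid p \in \Rat}$ of $\LowerReal$, identify the resulting condition with the defining property of $U$, and read off locatedness. Throughout I use that $\LowerReal = \Idl(\Rat,<)$ is a continuous dcpo ordered by inclusion with basis $\set{\dset p \mid p \in \Rat}$ (where $\dset p = \set{a \in \Rat \mid a < p}$), and that by Lemma~\ref{lower-reals-way-below-criterion} we have $\dset p \ll M \iff p \in M$ for every $M \in \LowerReal$; instantiating $M = \dset q$ yields $\dset p \ll \dset q \iff p < q$.

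First I would establish the characterisation: $L$ is sharp if and only if for all $p,q \in \Rat$ with $p < q$ we have $p \in L$ or $\dset q \posnotbelow L$. The ``only if'' direction follows from the Proposition stating that an element of a continuous dcpo is sharp iff for all $y \ll z$ either $y \ll L$ or $z \posnotbelow L$, instantiated at $y = \dset p$, $z = \dset q$. The ``if'' direction follows from Lemma~\ref{sharp-basis} together with Proposition~\ref{posnotbelow-notbelow}(i), since $\dset q \posnotbelow L$ implies $\dset q \not\below L$. It is essential here to work with $\posnotbelow$ rather than $\not\below$: the $\not\below$-version from Lemma~\ref{sharp-basis} would only deliver $\lnot(\dset q \below L)$, i.e.\ $\lnot(\dset q \subseteq L)$, which is a $\lnot\forall$-statement and is constructively too weak to entail the positive content of locatedness.

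The heart of the proof is then the identity: for $q \in \Rat$ and $L \in \LowerReal$, $\dset q \posnotbelow L$ holds if and only if $q \in U$, that is, $\exists s \in \Rat$ with $s < q$ and $s \notin L$. For ``$\Leftarrow$'', given such an $s$, the set $\upupset\dset s$ is Scott open (Example~\ref{ex:Scott-opens}), contains $\dset q$ because $\dset s \ll \dset q$ (as $s < q$), and omits $L$ because $\dset s \ll L \iff s \in L$ and $s \notin L$; hence $\dset q \posnotbelow L$. For ``$\Rightarrow$'', Lemma~\ref{posnotbelow-criterion} supplies a basis element $\dset s$ with $\dset s \ll \dset q$ (so $s < q$) and $\dset s \not\below L$, i.e.\ $\dset s \not\subseteq L$; since $L$ is a lower set of $(\Rat,<)$, the assumption $s \in L$ would force $\dset s \subseteq L$, so $s \notin L$, and thus $q \in U$.

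Combining the two steps, $L$ is sharp iff for all $p < q$ either $p \in L$ or $q \in U$; recalling that $p < x$ abbreviates $p \in L$ and $x < q$ abbreviates $q \in U$ for $x = (L,U)$, this is precisely locatedness of $(L,U)$ in the sense of Definition~\ref{def:Dedekind-real}. The one genuinely delicate point is the ``$\Rightarrow$'' direction of the displayed identity: one must not conflate $s \notin L$ with $\dset s \not\subseteq L$, and the lower-set property of $L$ is exactly what licenses passing from the latter to the former. Everything else is routine manipulation of the way-below relation on a rounded ideal completion.
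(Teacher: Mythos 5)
Your proof is correct and follows exactly the route the paper's lemmas are set up for: Lemma~\ref{lower-reals-way-below-criterion} to translate way-below into membership, Lemma~\ref{sharp-basis} plus the \(\posnotbelow\)-characterisation of sharpness, and the lower-set property of ideals to turn \(\dset s \not\subseteq L\) into \(s \notin L\). One minor quibble: the use of \({\posnotbelow}\) is convenient but not strictly \emph{essential} for the forward direction --- given \(p<q\) one can interpolate \(p<r<q\) and apply the \({\not\below}\) version of Lemma~\ref{sharp-basis} to the pair \(\dset p \ll \dset r\), extracting the witness \(r \notin L\) by the same lower-set argument --- but your argument is fine as written.
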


\nocite{Escardo2008}

\section{Conclusion}\label{sec:conclusion}
Working constructively, we studied continuous dcpos and the Scott topology and
introduced notions of intrinsic apartness and sharp elements.
We showed that our apartness relation is particularly well-suited for continuous
dcpos that have a basis satisfying certain decidability conditions, which hold in
examples of interest. For instance, for such continuous dcpos, the
Bridges--{V\^i\c{t}\v{a}} apartness topology and the Scott topology coincide.
We proved that no apartness on a nontrivial dcpo can be cotransitive
or tight unless (weak) excluded middle holds. But the intrinsic apartness is
tight and cotransitive when restricted to sharp elements.
If a continuous dcpos has a basis satisfying the previously mentioned
decidability conditions, then every basis element is sharp. Another class of
examples of sharp elements is given by the strongly maximal elements.
In fact, strong maximality is closely connected to sharpness and the Lawson
topology. For example, an element \(x\) is strongly maximal if and only if \(x\)
is sharp and every Lawson neighbourhood of~\(x\) contains a Scott neighbourhood
of~\(x\).
Finally, we presented several natural examples of continuous dcpos that
illustrated the intrinsic apartness, strong maximality and sharpness.

In future work, we would like to explore whether a constructive and predicative
treatment is possible, in particular, in univalent foundations without
Voevodsky's resizing axioms as in~\cite{deJongEscardo2021a}. Steve Vickers also
pointed out two directions for future research. The first is to consider formal
ball domains~\cite[Example~V-6.8]{GierzEtAl2003}, which may subsume the partial
Dedekind reals example. The second is to explore the ramifications of Vickers'
observation that refinability (Definition~\ref{def:refine}) is decidable, even
when the order is not, if the dcpo is algebraic and
2/3~SFP~\cite[p.~157]{Vickers1989}.
Related to the examples, there is still the question of whether we can derive a
constructive taboo from the assumption that strong maximality of a partial
Dedekind real follows from having both sharpness and maximality, as discussed
right after Proposition~\ref{Dedekind-max-strong-max-wem}.
Finally, the Lawson topology deserves further investigation within a
constructive framework.

\section*{Acknowledgements}
I am very grateful to Mart\'in Escard\'o for many discussions (including one
that sparked this paper) and valuable suggestions for improving the
exposition. In particular, the terminology ``sharp'' is due to Mart\'in and
Theorem~\ref{sharp-iff-located} was conjectured by him.
I~should also like to thank Steve Vickers for his interest and remarks.
Finally, I thank the anonymous referees for their helpful comments and
questions.

\bibliographystyle{eptcs}
\bibliography{references}

\appendix
\section*{Appendix}
For lack of space, we have omitted a number of proofs. A version of this paper
containing full proofs can be found here:
\href{https://arxiv.org/abs/2106.05064}{\texttt{arXiv:2106.05064}}.
It also contains a section on an additional example, namely an alternative
domain for Cantor space using exponentials and the lifting monad.

\end{document}